\pgfplotsset{compat=1.13}
\theoremstyle{plain}
\newtheorem{lemma}{Lemma}
\newtheorem{proposition}[lemma]{Proposition}
\newtheorem{theorem}[lemma]{Theorem}
\newtheorem{corollary}[lemma]{Corollary}
\newtheorem{definition}[lemma]{Definition}
\theoremstyle{definition}
\let\d\partial
\newcommand{\half}{\tfrac{1}{2}}
\newcommand{\fg}{\mathfrak{g}}
\newcommand{\fgl}{\mathfrak{gl}}
\newcommand{\fh}{\mathfrak{h}}
\newcommand{\fp}{\mathfrak{p}}
\newcommand{\fs}{\mathfrak{s}}
\newcommand{\fso}{\mathfrak{so}}
\newcommand{\fstab}{\mathfrak{stab}}
\newcommand{\fsp}{\mathfrak{sp}}
\newcommand{\fsu}{\mathfrak{su}}
\newcommand{\fX}{\mathfrak{X}}
\newcommand{\fK}{\mathfrak{K}}
\newcommand{\fM}{\mathfrak{M}}
\newcommand{\RR}{\mathbb{R}}
\newcommand{\CC}{\mathbb{C}}
\newcommand{\HH}{\mathbb{H}}
\newcommand{\ZZ}{\mathbb{Z}}
\newcommand{\bbS}{\mathbb{S}}
\newcommand{\be}{\boldsymbol{e}}
\newcommand{\Cl}{C\ell}
\newcommand{\1}{\mathbb{1}} 
\newcommand{\id}{\operatorname{id}}
\newcommand{\sbar}{\overline{s}}
\newcommand{\B}{\mathsf{B}}
\renewcommand{\C}{\mathsf{C}}
\newcommand{\C}{\mathsf{C}}
\newcommand{\eD}{\mathscr{D}}
\newcommand{\eE}{\mathscr{E}}
\newcommand{\eH}{\mathscr{H}}
\newcommand{\eL}{\mathscr{L}}
\newcommand{\Spin}{\operatorname{Spin}}
\newcommand{\SO}{\operatorname{SO}}
\DeclareMathOperator{\Ric}{Ric}
\DeclareMathOperator{\End}{End}
\DeclareMathOperator{\Riem}{Riem}
\DeclareMathOperator{\Hom}{Hom}
\newcommand{\MUNCH}[1]{\relax}
\numberwithin{equation}{section}
\begin{document}
\title[Five-dimensional Killing superalgebras]{Killing superalgebras for lorentzian five-manifolds}
\author[Beckett]{Andrew Beckett}
\author[Figueroa-O'Farrill]{José Figueroa-O'Farrill}
\address{Maxwell Institute and School of Mathematics, The University of
  Edinburgh, Edinburgh EH9 3FD, Scotland}
\date{\today}
\begin{abstract}
  We calculate the relevant Spencer cohomology of the minimal Poincaré
  superalgebra in 5 spacetime dimensions and use it to define Killing
  spinors via a connection on the spinor bundle of a 5-dimensional
  lorentzian spin manifold.  We give a definition of bosonic backgrounds
  in terms of this data. By imposing constraints on the curvature of the
  spinor connection, we recover the field equations of minimal
  (ungauged) 5-dimensional supergravity, but also find a set of field
  equations for an $\mathfrak{sp}(1)$-valued one-form which we interpret
  as the bosonic data of a class of rigid supersymmetric theories on
  curved backgrounds.  We define the Killing superalgebra of bosonic
  backgrounds and show that their existence is implied by the field
  equations. The maximally supersymmetric backgrounds are characterised
  and their Killing superalgebras are explicitly described as filtered
  deformations of the Poincaré superalgebra.
\end{abstract}
\thanks{EMPG-21-06}
\maketitle
\tableofcontents

\section{Introduction}
\label{sec:introduction}

The interplay between supersymmetry and geometry has a long and
beautiful history, but it is fair to say that we are still trying to
understand which geometries can support supersymmetric theories.
One reason is that the very notion of ``supersymmetric theory'' is
fluid.  If we take the conservative stance that a supersymmetric theory
is a field theory invariant under some supersymmetry algebra, then the
first question one needs to answer is which are the possible
supersymmetry algebras, to be followed by the analysis of their unitary
representations.  The former problem has not been completely solved,
whereas the latter problem is largely unexplored.

There are two main classes of supersymmetric theories, depending on
whether or not the supersymmetry is local.  The former are the
supergravity theories, many of which are related to low-energy limits of
superstring theories, whereas the latter are the rigidly supersymmetric
theories, which are the subject of much study today due to their rôle in
localisation in quantum field theory (see, e.g., \cite{Pestun:2016zxk}).
The two kinds of theories are closely related.  Indeed, one way to
construct rigidly supersymmetric theories, pioneered by Festuccia and
Seiberg \cite{Festuccia:2011ws} a decade ago, is to couple a
supergravity theory (with an off-shell formulation) to matter and then
to freeze the gravitational degrees of freedom (i.e., the fields in the
supergravity multiplet) in such a way that some supersymmetry is
preserved.  This results in a rigidly supersymmetric theory for the
matter multiplet.  Supergravity theories with an off-shell formulation
are rare, however, and hence it is desirable to find alternative means
to constructing rigidly supersymmetric field theories.

Whereas in the Festuccia--Seiberg approach it is neither essential nor
indeed desirable for the fields in the supergravity multiplet to be
on-shell, but only for them to preserve some supersymmetry, in the strict
context of supergravity the interesting geometries are the
supersymmetric bosonic backgrounds.  A bosonic background is a solution
of the supergravity field equations where the fermionic fields have been
put to zero.  Bosonic backgrounds have very rich geometries, being after
all examples of (higher-dimensional) general relativity coupled to
matter.

A particularly interesting and rich subclass of bosonic backgrounds are
those which preserve some supersymmetry. Since fermions are set to zero,
the variation of any bosonic field under supersymmetry is automatically
zero, but not so for the variation of the fermionic fields.  In
particular, the characteristic property of a supergravity theory is that
the variation of the gravitino $\Psi$ under a supersymmetry
transformation with spinor field parameter $\varepsilon$ takes the form
$\delta_\varepsilon \Psi = \eD\varepsilon$, where $\eD$ is a connection
on spinors which, on a bosonic background, includes terms depending on
the additional bosonic fields in the supergravity multiplet.  For such a
transformation to preserve a bosonic background, this variation must
vanish on that background; in other words, $\varepsilon$ must be
parallel with respect to $\eD$. The condition $\eD\varepsilon = 0$,
possibly augmented by algebraic conditions coming from the
supersymmetric variations of any other fermionic fields in the
supergravity multiplet, is the Killing spinor equation. The spinors
$\varepsilon$ obeying it are called Killing spinors, because
squaring such a spinor gives rise to a Killing vector field, known
as its Dirac current.

It is always the case, perhaps after imposing some additional conditions
on the bosonic fields, that the Dirac current of a Killing spinor
preserves the other bosonic fields of the background, and hence it
preserves the connection $\eD$.  This implies that such Killing vectors
preserve the space of Killing spinors, together with which they generate a Lie
superalgebra known as the Killing superalgebra of the background
\cite{FigueroaO'Farrill:2004mx}. The Killing superalgebra is a useful
algebraic invariant of a supersymmetric supergravity background, and one
consequence of the homogeneity theorem \cite{FigueroaO'Farrill:2012fp}
is that it determines a ($>\tfrac12$)-BPS\footnote{i.e., any background
  where the dimension of the space of Killing spinors is more than half
  the rank of the spinor bundle.} background up to local isometry.  This
was proved in \cite{Figueroa-OFarrill:2016khp} for eleven-dimensional
supergravity, but it holds in general for any background for which the
Killing superalgebra is transitive.

The construction of the Killing superalgebra suggests that all we need
in order to identify which geometries can support rigid supersymmetry is
a suitable notion of Killing spinor: one which guarantees that the
Killing spinors generate a Lie superalgebra.\footnote{For a different
  approach to this problem, based on the algebraic classification of
  supersymmetric extensions of known spacetime algebras, and the
  superisation of the corresponding homogeneous spacetimes, see, e.g.,
  \cite{Figueroa-OFarrill:2019ucc}.}  For example, in the standard
Poincaré supersymmetry on Minkowski spacetime, Killing spinors are
parallel with respect to the spin connection, whereas in AdS
supersymmetry \cite{Zumino:1977av} Killing spinors are so-called
geometric Killing spinors, satisfying $\nabla_X \varepsilon = \lambda X
\cdot \varepsilon$, for some constant $\lambda$ related to the curvature
of AdS.  Geometric Killing spinors were also used in the pioneering work
of Blau's \cite{Blau:2000xg} for the construction of rigidly
supersymmetric gauge theories.  Parallel and geometric Killing spinors
are intrinsic notions on any spin manifold, but the resulting theories
are not too different from Poincaré supersymmetry.  To make further
progress we need to consider other notions of Killing spinors.

If we assume that the definition of a Killing spinor is that it be
parallel with respect to a suitable connection in the spinor bundle
(possibly augmented by algebraic -- i.e., non-differential --
constraints), then a straightforward generalisation of the result in
\cite{Figueroa-OFarrill:2016khp} for the Killing superalgebra of
eleven-dimensional supergravity backgrounds shows that the resulting
superalgebra has a special algebraic structure.  Namely, it is naturally
filtered in such a way that the associated graded superalgebra is a
graded subalgebra of the Poincaré superalgebra.  We say that it is a
filtered subdeformation of the Poincaré superalgebra.

Such deformations are governed by the (generalised) Spencer cohomology of
graded superalgebras \cite{MR1688484,MR2141501}, and so calculation of
the relevant Spencer cohomology groups for the Poincaré superalgebra is
the first step in this analysis. The Spencer cohomology not only determines
the filtered deformations of the Poincaré superalgebra, it also gives
the expression for the connection defining the notion of a Killing
spinor.  In some cases, such as the $D=11$
\cite{Figueroa-OFarrill:2015rfh,Figueroa-O'Farrill:2015utu} and minimal $D=4$
\cite{deMedeiros:2016srz} Poincaré superalgebras, one obtains
precisely the connection $\eD$ of a supergravity theory, but in other
cases, such as the minimal $D=6$ \cite{deMedeiros:2018ooy} Poincaré
superalgebra, the Spencer cohomology is richer: additional bosonic fields
may be turned on, and the definitions of Killing spinors, supersymmetric
backgrounds and Killing superalgebras may be consistently generalised to
accommodate them. The existence of such generalisations is intriguing,
not least because they provide curved backgrounds for rigidly
supersymmetric theories which do not appear to be attainable via
supergravity.

In the present paper we discuss the Spencer cohomology of the minimal
$D=5$ Poincaré superalgebra.  Our motivation is two-fold.  On the one
hand, it is an intermediate case between two similar calculations:
minimal $D=4$ and $D=6$ Poincaré superalgebras, and provides a useful
additional datapoint in framing a conjecture about the behaviour of
Spencer cohomology under dimensional reduction.  A second motivation is
that we may then go on to study supersymmetric reductions of the
geometries admitting maximal supersymmetry to four dimensions and
perhaps in this way obtain novel four-dimensional lorentzian and
riemannian spin manifolds admitting rigid supersymmetry.  This is the
subject of ongoing work.

We find that the relevant Spencer cohomology group
\(H^{2,2}(\fs_-,\fs)\) is parametrised by a two-form, which is expected
from supergravity, and an \(\fsp(1)\)-valued one-form, which does not
correspond to any supergravity field. This is reminiscent of the Spencer
data from minimal \(D=6\), which also includes an additional
\(\fsp(1)\)-valued one-form \cite{deMedeiros:2018ooy}. As in the
6-dimensional case, after using Spencer cocycles to define a connection
on spinors, by imposing a constraint (the vanishing of the Clifford trace)
on the curvature of that connection, the bosonic equations of motion for
supergravity can be recovered along with an additional set of field
equations for the one-form.

This paper is organised as follows. In Section~\ref{sec:poinc-super}, we
introduce our conventions and some identities before describing the
minimal Poincaré superalgebra \(\fs\) in 5 dimensions as a graded Lie
superalgebra.  Our calculation of the Spencer cohomology
\(H^{2,2}(\fs_-,\fs)\) is given in Section~\ref{sec:spencer-cohomology}
and culminates in Theorem~\ref{thm:Spencer-calculation}. This data is
interpreted geometrically in Section~\ref{sec:zero-curv-eq} by using it
to define a connection \(\eD\) on spinors, as well as an associated
notion of Killing spinors, in a suitable geometric setting (a bosonic
background). The curvature of \(\eD\) is explicitly calculated in terms
of the bosonic background fields and various conditions on it are
characterised.  Theorem~\ref{thm:gamma-trace-zero} characterises those
geometries where the Clifford trace of the curvature vanishes, whereas
Theorem~\ref{thm:maximalsusy} characterises those geometries with
vanishing curvature.  Section~\ref{sec:killing-superalg} is concerned
with Killing superalgebras: the spinorial Lie derivative is defined and
some of its properties described; some properties of Killing spinors are
derived; then finally Killing superalgebras are defined, and their
existence is proven (Theorem~\ref{thm:killing-superalg-exist}).  In
Section~\ref{sec:max-susy}, we describe the maximally supersymmetric
backgrounds explicitly.  As can be read in
Theorem~\ref{thm:max-susy-bgs}, these fall into two branches: the first
coincides with maximally supersymmetric supergravity backgrounds, making
contact with known results \cite{Gauntlett:2002nw,Chamseddine:2003yy};
the second is characterised by the existence of an \(\fsp(1)\)-valued
one-form. We determine backgrounds belonging to the second branch,
noting the resemblance to the 6-dimensional case
\cite{deMedeiros:2018ooy}, and we describe the Killing superalgebras in
both branches explicitly as filtered deformations of \(\fs\).
Appendix~\ref{sec:tensor-id} is a compilation of combinatorial tensor
identities used in geometric calculations.

\section{The Poincaré superalgebra}
\label{sec:poinc-super}

In this section we set up our conventions and introduce the Poincaré superalgebra.

\subsection{Spinorial conventions}
\label{sec:spin-conv}

Let $(V,\eta)$ be a five-dimensional (mostly minus) lorentzian vector
space.  We will let $\flat : V \to V^*$ denote the musical isomorphism
sending $v$ to $v^\flat$, where
\begin{equation}
  v^\flat(w) = \eta(v,w).
\end{equation}
We define $\fso(V)$ to be the Lie algebra of $\eta$-skew-symmetric
endomorphisms of $V$:
\begin{equation}
  \fso(V) = \left\{ A : V \to V ~ \middle | ~ \eta(A v, w) = - \eta(v, Aw)
    \quad\forall v,w \in V\right\}.
\end{equation}
There is a vector space (in fact, an $\fso(V)$-module) isomorphism
$\fso(V) \cong \wedge^2 V$.  If $A \in \fso(V)$, we define $\omega_A
\in \wedge^2V$ by
\begin{equation}
  A v = - \iota_{v^\flat}\omega_A.
\end{equation}
Conversely, if $\omega \in \wedge^2V$, we define $A_\omega \in \fso(V)$
by the same relationship: namely,
\begin{equation}
  A_\omega v = - \iota_{v^\flat} \omega.
\end{equation}
It then follows that these two maps are mutual inverses: $A_{\omega_A} =
A$ and $\omega_{A_\omega} = \omega$.  Relative to an orthonormal basis
$\be_\mu$ for $V$, with $\eta(\be_\mu,\be_\nu) = \eta_{\mu\nu}$, we find
that
\begin{equation}
  \omega_A = \tfrac{1}{2} A^{\mu\nu} \be_\mu \wedge
  \be_\nu\qquad\text{where}\qquad A \be_\mu = \be_\nu A\indices{^\nu_\mu},
\end{equation}
and indices are lowered and raised with $\eta_{\mu\nu}$ and its inverse
$\eta^{\mu\nu}$.

We define the Clifford algebra $\Cl(V)$ by the Clifford relations (notice the sign!)
\begin{equation}
  v \cdot v = \eta(v,v) \1.
\end{equation}
As a real associative algebra,
$\Cl(V) \cong \End(\Sigma) \oplus \End(\Sigma')$, where $\Sigma$ and
$\Sigma'$ are two inequivalent irreducible Clifford modules, which are
two-dimensional quaternionic (right) vector spaces. They are
distinguished by the action of the centre of $\Cl(V)$. The centre is
spanned by the identity and the volume element, defined by the
Levi-Civita symbol $\epsilon_{\mu\nu\rho\sigma\tau}$ normalised to
$\epsilon_{01234} = 1$.

On the Clifford module $\Sigma$ the volume element acts like the
identity endomorphism $\id_\Sigma$, whereas on $\Sigma'$ it acts like
$-\id_{\Sigma'}$. In other words, the centre of $\Cl(V)$ acts trivially on
$\Sigma$ and non-trivially on $\Sigma'$. We will work with $\Sigma$ from
now on.  We will also use the notation $\1$ for the identity
endomorphism of $\Sigma$.

Under the representation homomorphism $\Cl(V) \to \End(\Sigma)$ the basis
element $\be_\mu$ is represented by the endomorphism $\Gamma_\mu$.
These endomorphisms satisfy the Clifford relation
\begin{equation}
  \Gamma_\mu \Gamma_\nu + \Gamma_\nu \Gamma_\mu = 2 \eta_{\mu\nu} \1.
\end{equation}
In addition, they obey
\begin{equation}
  \Gamma_\mu \Gamma_\nu = \Gamma_{\mu\nu} + \eta_{\mu\nu} \1,
\end{equation}
with $\Gamma_{\mu\nu} = \tfrac12[\Gamma_\mu,\Gamma_\nu]$, et cetera.
Since the volume element acts trivially and Hodge duality is implemented
by multiplication by the volume element, a basis for $\End(\Sigma)$ is
given by $(\1, \Gamma_\mu,\Gamma_{\mu\nu})$.  Indeed, we have the
following useful identities in $\End(\Sigma)$ for the other
skew-symmetric products of the $\Gamma_\mu$:
\begin{equation}
  \Gamma_{\mu\nu\rho} = -\tfrac12 \epsilon_{\mu\nu\rho\sigma\tau}
  \Gamma^{\sigma\tau}, \qquad
  \Gamma_{\mu\nu\rho\tau} = \epsilon_{\mu\nu\rho\sigma\tau} \Gamma^\tau
  \qquad\text{and}\qquad
  \Gamma_{\mu\nu\rho\sigma\tau} = \epsilon_{\mu\nu\rho\sigma\tau} \1.
\end{equation}

\begin{lemma}\label{lemma:gamma-traces}
  The following identities between gamma matrices hold:
  \begin{equation}
    \begin{aligned}
      \Gamma_\mu\Gamma^\mu &= 5 \1\\
      \Gamma_\mu \Gamma_\nu \Gamma^\mu &= -3 \Gamma_\nu\\
      \Gamma_\mu \Gamma_{\nu\rho} \Gamma^\mu &= \Gamma_{\nu\rho}
    \end{aligned}
    \qquad\qquad
    \begin{aligned}
      \tfrac12 \Gamma_{\mu\nu} \Gamma^{\mu\nu} &= -10 \1\\
      \tfrac12 \Gamma_{\mu\nu} \Gamma_\rho \Gamma^{\mu\nu} &= -2
      \Gamma_\rho\\
      \tfrac12 \Gamma_{\mu\nu} \Gamma_{\rho\sigma} \Gamma^{\mu\nu} &= 2 \Gamma_{\rho\sigma}.
    \end{aligned}
  \end{equation}
\end{lemma}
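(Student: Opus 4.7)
The proof is a direct computation from the Clifford relation $\Gamma_\mu\Gamma_\nu + \Gamma_\nu\Gamma_\mu = 2\eta_{\mu\nu}\1$, its consequence $\Gamma_\mu\Gamma_\nu = \Gamma_{\mu\nu} + \eta_{\mu\nu}\1$ recorded just before the lemma, and the dimension-dependent input $\eta_\mu{}^\mu = 5$. Label the identities of the left column by $(L_1),(L_2),(L_3)$ and those of the right column by $(R_1),(R_2),(R_3)$. The plan is to establish the left column by iterated application of the Clifford relation, and then to reduce each identity of the right column to the left by expanding $\Gamma^{\mu\nu} = \Gamma^\mu\Gamma^\nu - \eta^{\mu\nu}\1$.

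For $(L_1)$, contract the Clifford relation with $\eta^{\mu\nu}$ to get $2\Gamma_\mu\Gamma^\mu = 2\eta_\mu{}^\mu\,\1 = 10\,\1$. For $(L_2)$, anticommute $\Gamma^\mu$ past $\Gamma_\nu$ via Clifford and then apply $(L_1)$: $\Gamma_\mu\Gamma_\nu\Gamma^\mu = -\Gamma_\nu(\Gamma_\mu\Gamma^\mu) + 2\Gamma_\nu = -5\Gamma_\nu + 2\Gamma_\nu = -3\Gamma_\nu$. For $(L_3)$, the quickest route is to expand $\Gamma_{\nu\rho} = \Gamma_\nu\Gamma_\rho - \eta_{\nu\rho}\1$, push $\Gamma^\mu$ through $\Gamma_\rho$ by Clifford, apply $(L_2)$ to the remaining $\Gamma_\mu\Gamma_\nu\Gamma^\mu$, and observe that the $\eta_{\nu\rho}$-pieces cancel to leave exactly $\Gamma_{\nu\rho}$; equivalently, one may use the commutator identity $[\Gamma_\mu,\Gamma_{\nu\rho}] = 2\eta_{\mu\nu}\Gamma_\rho - 2\eta_{\mu\rho}\Gamma_\nu$ together with $(L_1)$.

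For the right column, the decisive observation is that $\eta^{\mu\nu}\Gamma_{\mu\nu} = 0$ by skew-symmetry, so $\Gamma_{\mu\nu}\,X\,\Gamma^{\mu\nu} = \Gamma_{\mu\nu}\,X\,\Gamma^\mu\Gamma^\nu$ for any endomorphism $X$. This reduces each right-column identity to nested single-$\Gamma$ contractions handled by the left column. Concretely, for $(R_1)$ one computes $\Gamma_{\mu\nu}\Gamma^\mu = (\Gamma_\mu\Gamma_\nu - \eta_{\mu\nu}\1)\Gamma^\mu = -3\Gamma_\nu - \Gamma_\nu = -4\Gamma_\nu$ using $(L_2)$ and $(L_1)$, and hence $\Gamma_{\mu\nu}\Gamma^\mu\Gamma^\nu = -4\Gamma_\nu\Gamma^\nu = -20\,\1$. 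For $(R_2)$, push $\Gamma^\mu$ past $\Gamma_\rho$ by Clifford and reduce to combinations of $\Gamma_\mu\Gamma_\rho\Gamma^\mu$ and $\Gamma_{\mu\nu}\Gamma^\nu$, which are handled by $(L_2)$ and the identity just derived for $\Gamma_{\mu\nu}\Gamma^\mu$. For $(R_3)$, expand $\Gamma_{\rho\sigma} = \Gamma_\rho\Gamma_\sigma - \eta_{\rho\sigma}\1$ on the middle factor, use $(R_2)$ and the analogue of the $\Gamma_{\mu\nu}\Gamma_\rho\Gamma^\mu$ reduction after commuting $\Gamma_\sigma$ through $\Gamma^{\mu\nu}$, and observe that the $\eta_{\rho\sigma}$-contributions cancel against $(R_1)$ to leave $4\Gamma_{\rho\sigma}$.

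There is no conceptual obstacle: each identity is a direct combinatorial consequence of the Clifford relations, and the only real task is careful bookkeeping of signs and small-integer coefficients accumulated by repeated anticommutation. The specific numbers $5,-3,1,-10,-2,2$ are entirely forced by $\dim V = 5$ and the $\Gamma_{\mu\nu}$-vs-$\Gamma_\mu\Gamma_\nu$ decomposition, and they emerge cleanly once the reductions above are carried out in order.
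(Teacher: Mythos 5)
Your proposal is correct. The paper states Lemma~\ref{lemma:gamma-traces} without proof, treating these as standard five-dimensional Clifford trace identities, so there is no competing argument in the text; your direct computation by iterated anticommutation and the decomposition $\Gamma_{\mu\nu} = \Gamma_\mu\Gamma_\nu - \eta_{\mu\nu}\1$ is exactly the standard route one would take, and the intermediate values you derive (for instance $\Gamma_{\mu\nu}\Gamma^\mu = -4\Gamma_\nu$ and $\Gamma_{\rho\nu}\Gamma^\nu = 4\Gamma_\rho$) are correct in the paper's $(+,-,-,-,-)$ convention with $\eta_\mu{}^\mu = 5$. The only minor remark is that your descriptions of $(R_2)$ and $(R_3)$ are left as sketches; when carried out, they do produce $-4\Gamma_\rho$ and $4\Gamma_{\rho\sigma}$ before the overall factor of $\tfrac12$, so the signs and coefficients work out exactly as claimed.
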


If $A \in \fso(V)$, its action on $\Sigma$ is given by Clifford product with
$\tfrac12 \omega_A$:
\begin{equation} \label{eq:so-spinor-action}
  A s = \tfrac12 \omega_A \cdot s = \tfrac14 A^{\mu\nu} \Gamma_{\mu\nu}
  s.
\end{equation}

Let $\Delta$ denote a one-dimensional quaternionic (right) vector space,
which we think as a two-dimensional complex vector space. We can
similarly think of $\Sigma$ as a four-dimensional complex vector space
with a quaternionic structure, so that the (complex) tensor product
$\Sigma \otimes_\CC \Delta$ has a real structure and hence it is the
complexification of a real eight-dimensional representation we denote by
$S$. As a real vector space, \(S\) is just $\Sigma$ when we restrict scalars
from $\HH$ to $\RR$. Spinors in $S$ are ``symplectic Majorana'' spinors
and we choose to view them as pairs $s^A\in\Sigma$, where $A = 1,2$, subject to a
symplectic Majorana condition which uses the symplectic structures on
$\Sigma$ and $\Delta$, respectively, and which we will write presently.
Let us normalise $\epsilon_{AB}$ so that $\epsilon_{12} = + 1$ and
$\epsilon^{12} = +1$.  We raise and lower indices with $\epsilon$
according to the conventions: $\epsilon_{AB} X^B{}_C= X_{AC}$ and
$\epsilon^{AB} X_{AC} = X^B{}_C$, et cetera. The symplectic Majorana
condition on the pair $s^A$ is such that
\begin{equation}
  \label{eq:symplectic-majorana}
  (s^A)^* = \epsilon_{AB} \B s^B,
\end{equation}
where $\B$ obeys
\begin{equation}
  \B \Gamma_\mu = \Gamma_\mu^* \B
\end{equation}
and normalised to $\B^\dagger \B = \1$.

It follows that the gamma matrices preserve \(S\), hence so does the
16-dimensional real associative subalgebra
\(\RR\Gamma\subset \End_\CC(\Sigma)\) they generate. Note that this
algebra is simply the image of \(\Cl(V)\) under the representation on
\(\Sigma\). We denote by \(\End_\HH\Delta\) the real associative
subalgebra of \(\End_\CC\Delta\) which preserves the quaternionic
structure on \(\Delta\). As a real vector space,
\(\End_\HH\Delta = \RR\1\oplus\fsp(1)\). If we use the symplectic form
to lower indices and identify \(\Delta^*\cong\Delta\), we also identify
\(\End_\CC\Delta\cong \otimes^2 \Delta =
\odot^2\Delta\oplus\wedge^2\Delta\). Under this identification,
\(\fsp(1)\) is the quaternionic structure-preserving part of
\(\odot^2\Delta\) and \(\RR\1\cong \RR\epsilon\) is that of
\(\wedge^2\Delta\). We can identify \(\End_\RR(S)\) as the real
subalgebra of \(\End(\Sigma\otimes_\CC \Delta)\) which preserves \(S\),
and it follows from the discussion above that
\(\End_\RR(S)=\RR\Gamma\otimes(\RR\1\oplus\fsp(1))\). The inclusion in
one direction is clear; the other then follows by a dimension count.

The symplectic structure $\C$ on $\Sigma$ is real and satisfies
\begin{equation}
  \label{eq:CC}
  \C\Gamma_\mu = \Gamma_\mu^T \C \implies \C\Gamma_{\mu\nu} = -
  \Gamma_{\mu\nu}^T \C.
\end{equation}
We define for $s_1^A, s_2^A \in \Sigma$,
\begin{equation}
  \label{eq:IP}
  \sbar_1^A s_2^B :=
  \left(s_1^A\right)^T \C s_2^B.
\end{equation}

\begin{lemma}\label{sec:idents}
  The following identities hold for all $s_1^A,
  s_2^A \in \Sigma$:
  \begin{enumerate}
  \item $\sbar_1^A s_2^B  = -
    \sbar_2^B s_1^A$,
  \item $\sbar_1^A \Gamma_\mu s_2^B  = -
    \sbar_2^B \Gamma_\mu s_1^A$,
  \item $\sbar_1^A \Gamma_{\mu\nu} s_2^B  = 
    \sbar_2^B \Gamma_{\mu\nu} s_1^A$.
  \end{enumerate}
\end{lemma}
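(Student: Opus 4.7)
The plan is to derive all three identities from one uniform observation: the pairing $\sbar_1^A s_2^B = (s_1^A)^T \C s_2^B$ is a scalar (a $1\times 1$ matrix), so it equals its own transpose. Combining this with the antisymmetry of $\C$ (which follows from $\C$ being a symplectic form on $\Sigma$) and the commutation relations \eqref{eq:CC} between $\C$ and the gamma matrices immediately yields the three sign patterns.

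More concretely, first I would note that $\C^T = -\C$. For (i), taking the transpose of the scalar $(s_1^A)^T \C s_2^B$ and using the antisymmetry of $\C$ gives
\begin{equation*}
  \sbar_1^A s_2^B = \bigl((s_1^A)^T \C s_2^B\bigr)^T = (s_2^B)^T \C^T s_1^A = -(s_2^B)^T \C s_1^A = -\sbar_2^B s_1^A.
\end{equation*}
For (ii), I would insert $\Gamma_\mu$, transpose, and rewrite \eqref{eq:CC} as $\Gamma_\mu^T \C = \C\Gamma_\mu$:
\begin{equation*}
  \sbar_1^A \Gamma_\mu s_2^B = (s_2^B)^T \Gamma_\mu^T \C^T s_1^A = -(s_2^B)^T \C \Gamma_\mu s_1^A = -\sbar_2^B \Gamma_\mu s_1^A.
\end{equation*}
For (iii), the same procedure using the derived relation $\Gamma_{\mu\nu}^T \C = -\C\Gamma_{\mu\nu}$ from \eqref{eq:CC} produces two minus signs, which cancel to give the overall plus.

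There is no real obstacle here; the only subtlety worth checking is that the spinors are being treated as commuting vectors in $\Sigma$ (so transposition of the scalar introduces no extra sign), which is the natural reading at this stage of the paper where $\sbar_1^A s_2^B$ is defined purely as a bilinear form on $\Sigma$ before any Grassmann structure is introduced. Once that is clear, each identity is a two-line manipulation, and one could alternatively phrase the whole lemma as the statement that under the $\fso(V)$-module isomorphism $\Sigma\otimes\Sigma \cong \RR \oplus V \oplus \wedge^2 V$ induced by $\C$, the three summands carry symmetries $-,-,+$ respectively.
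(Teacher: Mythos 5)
The paper states this lemma without proof (it is a standard consequence of the symplectic Majorana setup), and your argument supplies exactly the expected justification: antisymmetry of $\C$ plus the transpose relations \eqref{eq:CC}, applied after noting the bilinear is a scalar. Your derivation is correct, including the remark that the spinors are commuting (no Grassmann signs here), and the closing reformulation in terms of the symmetry type of the summands of $\Sigma\otimes\Sigma$ under $\C$ is also accurate.
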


This allows us to define a symmetric inner product on $S$:
\begin{equation}
  \label{eq:IPS}
  \left<s_1, s_2\right> :=\epsilon_{AB} \sbar_1^A s_2^B.
\end{equation}
It follows from Lemma~\ref{sec:idents} that for all $v \in V$,
\begin{equation}
  \left<s_1, v \cdot s_2\right> = \left<v \cdot s_1, s_2\right>,
\end{equation}
and from this it follows that it is $\fso(V)$-invariant:
\begin{equation}
  \left<A s_1, s_2\right> = - \left<s_1, A s_2\right>,
\end{equation}
for all $A \in \fso(V)$.

There are a number of bilinears we can make from spinors.  If $s \in S$,
then we define
\begin{equation}\label{eq:spinor-bilinears}
  \mu_s := \epsilon_{AB} \sbar^A s^B, \quad \kappa_s^\mu := \epsilon_{AB}
  \sbar^A \Gamma^\mu s^B \qquad\text{and}\qquad \omega^{AB}_{s,\mu\nu} =
  \sbar^A \Gamma_{\mu\nu} s^B.
\end{equation}
The map  $\kappa: S \to V$ sending $s \mapsto \kappa_s$ defines the
\textbf{Dirac current} of $s$ to be the unique vector $\kappa_s \in V$
such that for all $v \in V$,
\begin{equation}
  \label{eq:Dirac}
  \eta(\kappa_s,v) = \left<s, v \cdot s\right>.
\end{equation}
Similarly, $\mu_s = \left<s,s\right>$ more invariantly and also for all
$v,w \in V$, 
\begin{equation}
  \omega^{AB}_s(v,w)  = \tfrac12 \sbar^A [v,w]\cdot s^B,
\end{equation}
where $[v,w]$ is the Clifford commutator.  The following is the result
of a calculation in an explicit realisation.

\begin{lemma}[Reality conditions]\label{lemma:bil-real-cond}
  For $s$ symplectic Majorana, it follows that $\mu_s$ and $\kappa_s$
  are real, whereas $\left(\omega_s^{11}\right)^* = \omega_s^{22}$ and
  $\left(\omega_s^{12}\right)^* = - \omega_s^{12}$.
\end{lemma}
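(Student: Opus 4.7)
The plan is to complex-conjugate each bilinear directly, using the symplectic Majorana condition \eqref{eq:symplectic-majorana}, the reality of $\C$, and the intertwining property $\B\Gamma_\mu = \Gamma_\mu^*\B$. For any $\Gamma\in\{\1,\Gamma^\mu,\Gamma_{\mu\nu}\}$, I first compute
\[
  \left(\sbar^A \Gamma s^B\right)^* = \left((s^A)^*\right)^T \C^* \Gamma^* (s^B)^* = \epsilon_{AC}\epsilon_{BD}\,(s^C)^T \B^T \C\, \Gamma^*\,\B\, s^D,
\]
using $\C^*=\C$ and substituting $(s^A)^* = \epsilon_{AC}\B s^C$. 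Iterating $\B\Gamma_\mu = \Gamma_\mu^*\B$ shows that $\Gamma^*\B = \B\Gamma$ in all three cases, so the right-hand side becomes $\epsilon_{AC}\epsilon_{BD}(s^C)^T(\B^T\C\B)\Gamma s^D$.

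The crux is then the auxiliary identity $\B^T\C\B = \C$. Granting it,
\[
  \left(\sbar^A\Gamma s^B\right)^* = \epsilon_{AC}\epsilon_{BD}\,\sbar^C\Gamma s^D,
\]
and the three assertions follow by different specialisations. For $\mu_s$ and $\kappa_s$, contracting with $\epsilon_{AB}$ and using the short identity $\epsilon_{AB}\epsilon_{AC}\epsilon_{BD} = \epsilon_{CD}$ gives $\mu_s^*=\mu_s$ and $(\kappa_s^\mu)^* = \kappa_s^\mu$. For $\omega_s^{AB}$, no contraction is taken: setting $(A,B)=(1,1)$ forces $C=D=2$ and yields $(\omega_s^{11})^* = \epsilon_{12}\epsilon_{12}\omega_s^{22} = \omega_s^{22}$, while $(A,B)=(1,2)$ forces $C=2,D=1$ and yields $(\omega_s^{12})^* = \epsilon_{12}\epsilon_{21}\omega_s^{21} = -\omega_s^{21}$; this equals $-\omega_s^{12}$ by the symmetry $\omega_s^{AB} = \omega_s^{BA}$, which is the $s_1=s_2=s$ case of Lemma~\ref{sec:idents}(3).

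The principal obstacle — and the reason the statement is described as a calculation in an explicit realisation — is pinning down the sign in $\B^T\C\B = \pm\C$. Arguing abstractly, the intertwining relations imply that $X := \B^T\C\B$ satisfies $X\Gamma_\mu = \Gamma_\mu^T X$, so $X$ is proportional to $\C$ by the uniqueness (up to scalar) of the $\fso(V)$-invariant antisymmetric form on $\Sigma$; but the specific scalar is not fixed by the intertwining and normalisation conditions alone. The cleanest route is therefore to pick an explicit realisation of $\Cl(V)$ in five lorentzian dimensions (for instance built from $2\times 2$ quaternionic blocks), read off matrix representatives of $\B$ and $\C$ from their defining relations, and verify $\B^T\C\B = \C$ by a short computation. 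The negative sign is ruled out \emph{a posteriori} by the stated reality of $\mu_s$, so the positive sign is the one consistent with the claim.
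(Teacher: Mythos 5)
Your proof is correct and, in fact, is a more structured account than the paper provides: the paper gives no argument, simply stating that the lemma is ``the result of a calculation in an explicit realisation''. You do the bulk of the work abstractly. The intertwining relations $\B\Gamma_\mu = \Gamma_\mu^*\B$ and $\C\Gamma_\mu = \Gamma_\mu^T\C$, the reality $\C^*=\C$, and the symplectic Majorana condition collapse all three reality assertions of the lemma into the single identity $\B^T\C\B=\C$. The propagation of the intertwining to $\Gamma_{\mu\nu}$ (via $\Gamma_{\mu\nu}^*\B = \Gamma_\mu^*\Gamma_\nu^*\B|_{\text{skew}} = \B\Gamma_{\mu\nu}$), the $\epsilon$-contraction $\epsilon_{AB}\epsilon_{AC}\epsilon_{BD}=\epsilon_{CD}$, and the appeal to the symmetry $\omega_s^{AB}=\omega_s^{BA}$ (Lemma~\ref{sec:idents}(3) on the diagonal) are all correct and needed. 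One inaccuracy worth flagging: the scalar $c$ in $\B^T\C\B=c\C$ is unimodular but not \emph{a priori} just $\pm 1$. The conditions $\B^\dagger\B=\1$, $\B\Gamma_\mu=\Gamma_\mu^*\B$, $\C^*=\C$ and $\C\Gamma_\mu=\Gamma_\mu^T\C$ leave the overall phase of $\B$ undetermined, and replacing $\B\to e^{i\theta}\B$ sends $c\to e^{2i\theta}c$ while simultaneously rotating the real form $S$ inside $\Sigma\otimes\Delta$. So what the explicit realisation pins down is a phase convention rather than a bare sign, under which $c=1$. Your ``a posteriori'' justification — ruling out the other case because $\mu_s$ must be real — is mildly circular, since one cannot fix an input to a lemma by appealing to the lemma's conclusion; but as you also supply the direct computational route, this is a matter of presentation rather than of substance.
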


The proof of the following Fierz identities is routine.  We simply remark
that $\tr \1 = 4$ since we are working in a four-dimensional complex
vector space (with a quaternionic structure).

\begin{lemma}[Fierz identities]
  For all $s^A_1,s^B_2 \in \Sigma$,
  \begin{equation}
    s_1^A \sbar_2^B = -\tfrac14 \sbar_1^A s_2^B \1 - \tfrac14  \sbar_1^A
    \Gamma^\mu s_2^B \Gamma_\mu - \tfrac18  \sbar_1^A \Gamma^{\mu\nu}
    s_2^B \Gamma_{\mu\nu}.
  \end{equation}
  In particular, taking $s_1 = s_2 = s$ we arrive at
  \begin{equation}
    \label{eq:fierztoo}
    s^A \sbar^B =-\tfrac18 \epsilon^{AB} \mu_s \1 - \tfrac18 \epsilon^{AB}
    \kappa_s - \tfrac14 \omega_s^{AB},
  \end{equation}
  where we define $\omega_s^{AB} = \tfrac12 \omega_{s,\mu\nu}^{AB}
  \Gamma^{\mu\nu}$.
\end{lemma}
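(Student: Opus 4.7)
The plan is to derive the general Fierz identity as the completeness relation in the $16$-dimensional complex associative algebra $\End_{\CC}(\Sigma)$ for the basis $(\1,\Gamma_\mu,\Gamma_{\mu\nu})$, then to obtain the specialisation \eqref{eq:fierztoo} by applying the $A\leftrightarrow B$ symmetries recorded in Lemma~\ref{sec:idents}. The authors' remark that $\operatorname{tr}\1 = 4$ flags the only bookkeeping subtlety: the trace is computed in $\End_{\CC}(\Sigma)\cong\End_\CC(\CC^4)$, not over the real eight-dimensional module $S$.

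For the general identity, I would write the rank-one endomorphism $s_1^A\sbar_2^B$ in the basis,
\begin{equation*}
s_1^A\sbar_2^B \;=\; \alpha\,\1 + \beta^\mu\Gamma_\mu + \tfrac12\gamma^{\mu\nu}\Gamma_{\mu\nu},
\end{equation*}
and extract the coefficients by trace-pairing against the dual basis. Tracelessness of $\Gamma_\mu$ and $\Gamma_{\mu\nu}$ together with Lemma~\ref{lemma:gamma-traces} renders the basis trace-orthogonal, with the relevant pairings being $\operatorname{tr}\1 = 4$, $\operatorname{tr}\Gamma^\mu\Gamma_\nu = 4\delta^\mu_\nu$ and $\operatorname{tr}\Gamma^{\mu\nu}\Gamma_{\rho\sigma} = 4(\delta^\mu_\sigma\delta^\nu_\rho - \delta^\mu_\rho\delta^\nu_\sigma)$. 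Since $s_1^A\sbar_2^B$ is rank one, $\operatorname{tr}(T\cdot s_1^A\sbar_2^B) = \sbar_2^B T s_1^A$ for any $T\in\End_\CC(\Sigma)$; applying this with $T\in\{\1,\Gamma^\mu,\Gamma^{\mu\nu}\}$ and then using Lemma~\ref{sec:idents} to flip each expression into the form $\pm\sbar_1^A T s_2^B$ gives $\alpha = -\tfrac14\sbar_1^A s_2^B$, $\beta^\mu = -\tfrac14\sbar_1^A\Gamma^\mu s_2^B$ and $\gamma^{\mu\nu} = -\tfrac14\sbar_1^A\Gamma^{\mu\nu}s_2^B$, which reproduces the first identity of the lemma.

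To obtain \eqref{eq:fierztoo}, I would then specialise to $s_1 = s_2 = s$ and use Lemma~\ref{sec:idents} once more to simplify the $A,B$ structure. Parts (1) and (2) imply that $\sbar^As^B$ and $\sbar^A\Gamma^\mu s^B$ are antisymmetric in $A,B$, and are therefore of the form $\tfrac12\epsilon^{AB}\mu_s$ and $\tfrac12\epsilon^{AB}\kappa_s^\mu$ respectively (using $\epsilon^{AB}\epsilon_{AB} = 2$). Part (3) shows that $\sbar^A\Gamma^{\mu\nu}s^B = \omega_{s,\mu\nu}^{AB}$ is symmetric in $A,B$ and admits no further reduction. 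Substituting and recognising $\kappa_s = \kappa_s^\mu\Gamma_\mu$ and $\omega_s^{AB} = \tfrac12\omega_{s,\mu\nu}^{AB}\Gamma^{\mu\nu}$ then yields \eqref{eq:fierztoo}. No deeper obstacle arises; the only point requiring genuine care is tracking the factor of two that promotes the coefficient of $\omega_s^{AB}$ from $-\tfrac18$ (as in the general identity) to $-\tfrac14$, which comes from the $\tfrac12$ absorbed into the definition of $\omega_s^{AB}$.
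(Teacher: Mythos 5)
Your proof is correct and follows exactly the approach the paper signals: the general identity is the completeness relation in $\End_\CC(\Sigma)$ for the trace-orthogonal basis $(\1,\Gamma_\mu,\Gamma_{\mu\nu})$, with coefficients extracted by trace pairing against a rank-one endomorphism and then flipped via Lemma~\ref{sec:idents}; the paper's sole remark ($\tr\1 = 4$) flags the same normalisation subtlety you identify. The specialisation to $s_1 = s_2 = s$ using the (anti)symmetry of the bilinears in $A,B$ and the bookkeeping of the factor of $\tfrac12$ in $\omega_s^{AB}$ is likewise as intended.
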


From this latter Fierz identity there follow some useful relations between the
bilinears $\mu_s$, $\kappa_s$ and $\omega^{AB}_s$ defined above.

\begin{lemma}\label{sec:bilidents}
  Let $s \in S$ and let $\mu$, $\kappa$ and $\omega^{AB}$ be the
  corresponding bilinears.  Then they satisfy the following relations:
  \begin{enumerate}[label=(\roman*)]
  \item $\kappa \cdot s^A = \mu s^A$
  \item $3 \mu s^C + \epsilon_{AB} \omega^{CA} \cdot s^B = 0$
  \item $\omega^{(AB} \cdot s^{C)} = 0$
  \item $\omega^{AB} (\kappa, -) = 0$
  \end{enumerate}
\end{lemma}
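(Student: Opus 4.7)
The key tool is the Fierz identity \eqref{eq:fierztoo}. First I would simplify the scalar bilinears $\sbar^A s^B$ and $\sbar^A\Gamma_\mu s^B$ which, being antisymmetric in $A,B$ by Lemma~\ref{sec:idents}, must equal $\tfrac12 \epsilon^{AB}\mu$ and $\tfrac12 \epsilon^{AB}\kappa_\mu$ respectively; the proportionality is fixed by contracting against $\epsilon_{AB}$ and using $\epsilon_{AB}\epsilon^{AB}=2$. Post-multiplying \eqref{eq:fierztoo} on the right by $s^C$ and substituting the first of these evaluations on the left gives the master relation
\begin{equation*}
\tfrac12\mu\epsilon^{BC}s^A = -\tfrac18\epsilon^{AB}\mu s^C -\tfrac18\epsilon^{AB}\kappa\cdot s^C -\tfrac14\omega^{AB}\cdot s^C,
\end{equation*}
from which identities (i)--(iii) will follow by elementary index manipulations.

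For (i), I would contract the master relation with $\epsilon_{AB}$: the $\omega^{AB}$ term drops out because $\omega^{AB}$ is symmetric in its symplectic indices (Lemma~\ref{sec:idents}(3)) while $\epsilon_{AB}$ is antisymmetric, and the surviving terms rearrange to $\kappa\cdot s^C = \mu s^C$. For (iii), I would symmetrize the master relation over the triple $(A,B,C)$; both $\epsilon^{BC}$- and $\epsilon^{AB}$-terms vanish by antisymmetry of $\epsilon$ under the enforced symmetrization, leaving $\omega^{(AB}\cdot s^{C)}=0$. For (ii), I would first use (i) to eliminate the $\kappa\cdot s^C$ term from the master relation, relabel the free indices to match the shape of the target (so that $\omega^{AB}\cdot s^C$ becomes $\omega^{CA}\cdot s^B$), and contract with $\epsilon_{AB}$; the coefficient $3$ emerges from $\epsilon_{AB}\epsilon^{AB}=2$ combined with $\epsilon_{AB}\epsilon^{CA}=-\delta^C_B$.

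Identity (iv) needs a short separate argument. Expanding $\kappa^\mu\Gamma_{\mu\nu} = \kappa\Gamma_\nu - \kappa_\nu\1$ with $\kappa:=\kappa^\mu\Gamma_\mu$ yields
\begin{equation*}
\kappa^\mu \omega^{AB}_{\mu\nu} = \sbar^A\kappa\Gamma_\nu s^B - \kappa_\nu \sbar^A s^B.
\end{equation*}
The intertwining property \eqref{eq:CC} gives $\sbar^A\Gamma_\mu = \overline{\Gamma_\mu s^A}$, so that $\sbar^A\kappa = \overline{\kappa\cdot s^A} = \mu\sbar^A$ by identity (i). Substituting the evaluated bilinears, both surviving terms on the right equal $\tfrac12 \mu\kappa_\nu\epsilon^{AB}$ and therefore cancel.

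No step poses a genuine obstacle; the entire argument is careful bookkeeping of symplectic $\epsilon$-contractions and exploitation of the $(A,B)$-symmetry of $\omega^{AB}$ versus the antisymmetry of $\sbar^A s^B$ and $\sbar^A\Gamma_\mu s^B$.
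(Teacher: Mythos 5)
Your proof is correct and rests on the same essential tool as the paper's --- the Fierz identity \eqref{eq:fierztoo} --- but you organise the deduction rather more cleanly. The paper Fierz-expands three separate quantities ($\mu s^C$, $\kappa\cdot s^C$ and $\omega^{AB}\cdot s^C$), obtaining three auxiliary relations, and then extracts (i)--(iii) by taking a linear combination and (skew-)symmetrising in $AC$; in particular (iii) depends on (i) and (ii). You instead observe at the outset that $\sbar^A s^B$ and $\sbar^A\Gamma_\mu s^B$, being antisymmetric in $A,B$ by Lemma~\ref{sec:idents}, equal $\tfrac12\epsilon^{AB}\mu$ and $\tfrac12\epsilon^{AB}\kappa_\mu$ (fixed by $\epsilon_{AB}\epsilon^{AB}=2$), right-multiply Fierz by $s^C$ once to get a single master relation, and then obtain (i) by contracting with $\epsilon_{AB}$, (iii) by full symmetrisation over $(A,B,C)$, and (ii) by substituting (i), relabelling, and contracting with $\epsilon_{AB}$. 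This spares you two Fierz expansions and makes (iii) independent of (i) and (ii). For (iv) your argument coincides with the paper's: both expand $\kappa^\mu\Gamma_{\mu\nu}=\kappa\Gamma_\nu-\kappa_\nu\1$, use the $\C$-intertwining to move $\kappa$ through the bilinear, and cancel the two terms via (i). I have checked your sign bookkeeping (in particular $\epsilon_{AB}\epsilon^{CA}=-\delta^C_B$, which is what produces the coefficient $3$ in (ii)), and it is consistent with the paper's conventions.
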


\begin{proof}
  As a result of the Fierz identity \eqref{eq:fierztoo} we find that
  \begin{equation}
    \begin{split}
      \mu s^C &= \epsilon_{AB} \sbar^A s^B s^C\\
      &= \epsilon_{AB} s^C\sbar^A s^B\\
      &= \epsilon_{AB} \left(-\tfrac18 \epsilon^{CA} \mu \1 - \tfrac18 \epsilon^{CA}
        \kappa - \tfrac14 \omega^{CA}\right) \cdot s^B\\
      &= \tfrac18 \mu s^C + \tfrac18 \kappa \cdot s^C - \tfrac14
      \epsilon_{AB} \omega^{CA} \cdot s^B
    \end{split}
  \end{equation}
  or, equivalently,
  \begin{equation}\label{eq:nuS}
    \tfrac78 \mu s^C - \tfrac18 \kappa \cdot s^C + \tfrac14
    \epsilon_{AB} \omega^{CA} \cdot s^B = 0.
  \end{equation}
  Similarly, by calculating $\kappa \cdot s^C$ and using the Fierz
  identity~\eqref{eq:fierztoo}, we arrive at
  \begin{equation}\label{eq:kappaS}
    \tfrac58 \mu s^C - \tfrac{11}8 \kappa \cdot s^C - \tfrac14
    \epsilon_{AB} \omega^{CA} \cdot s^B = 0.
  \end{equation}
  Adding equations~\eqref{eq:nuS} and \eqref{eq:kappaS}, we obtain (i)
  and, plugging (i) back into either of those two equations, we obtain
  (ii).  To obtain (iii), we calculate $\omega^{AB} \cdot s^C$ again 
  using the Fierz identity \eqref{eq:fierztoo}, obtaining
  \begin{equation}
    \label{eq:omegaS}
    \omega^{AB} \cdot s^C + \tfrac12 \omega^{CA}\cdot s^B + \tfrac54
    \epsilon^{AC} \mu s^B + \tfrac14 \epsilon^{AC} \kappa \cdot s^B = 0.
  \end{equation}
  Using (i), we rewrite this as
  \begin{equation}
    \omega^{AB} \cdot s^C + \tfrac12 \omega^{CA}\cdot s^B + \tfrac32
    \epsilon^{AC} \mu s^B = 0.
  \end{equation}
  This equation decomposes into two equations by (skew-)symmetrising in
  $AC$.  The skew-symmetric component
  \begin{equation}
    \omega^{B[A} \cdot s^{C]} + \tfrac32 \epsilon^{AC} \mu s^B = 0
  \end{equation}
  vanishes identically by (ii), whereas the symmetric component
  \begin{equation}
    \tfrac12 \omega^{BA} \cdot s^C + \tfrac12 \omega^{BC} \cdot
    s^A  + \tfrac12 \omega^{CA}\cdot s^B = 0
  \end{equation}
  is precisely (iii).  Finally, to obtain (iv), we notice that for any $w
  \in V$,
  \begin{equation}
    \begin{split}
      \omega^{AB}(\kappa, w) &= \tfrac12 \sbar^A [\kappa, w] \cdot s^B\\
      &= \tfrac12 \sbar^A \kappa \cdot w \cdot s^B - \tfrac12 \sbar^A w
      \cdot \kappa \cdot s^B \\
      &= \tfrac12 \overline{\kappa \cdot s}^A w \cdot s^B - \tfrac12 \sbar^A w
      \cdot \kappa \cdot s^B,
    \end{split}
  \end{equation}
  which vanishes by (i) and where we have used that:
  \begin{equation}
    \sbar^A \kappa = (s^A)^T C \kappa = (s^A)^T \kappa^T C = (\kappa
    \cdot s^A)^T C = \overline{\kappa\cdot s}^A.
  \end{equation}
\end{proof}

It follows from (ii) and (iii) in Lemma~\ref{sec:bilidents} that
$\omega^{AB} \cdot s^C = - \mu \left(s^A \epsilon^{BC} + s^B
  \epsilon^{AC} \right) = - 2\mu s^{(A} \epsilon^{B)C}$. In addition, it
follows from (i) that $\eta(\kappa,\kappa) = \mu^2 \geq 0$, so that
$\kappa$ is causal: timelike if $\left<s,s\right> \neq 0$ and null if
$\left<s,s\right> =0$, and it follows from (iv) that the elements of
$\fso(V)$ corresponding to $\omega^{AB}$ leave $\kappa$ invariant. We
also have \(\omega_{\mu\nu}^{AB}\omega^{\mu\nu}_{AB}=6\mu^2\geq 0\) by
(i), (ii), and a further application of the Fierz identity.  There are
other identities which can be derived, such as
\begin{equation}
  \omega^{AB}_{[\mu\nu} \kappa_{\rho]} = -\tfrac16 \mu
  \epsilon_{\mu\nu\rho\sigma\tau} \omega^{AB\,\sigma\tau}.
\end{equation}

The $2$-forms $\omega^{AB}$ associated with $s$ define spinor
endomorphisms
\begin{equation}
  \hat\omega^{AB} := \tfrac14 \omega^{AB}_{\mu\nu} \Gamma^{\mu\nu},
\end{equation}
whose commutators satisfy:
\begin{equation}
  [\hat\omega^{12}, \hat\omega^{11}] = \mu \hat\omega^{11},\qquad
  [\hat\omega^{12}, \hat\omega^{22}] = -\mu \hat\omega^{22} \qquad\text{and}\qquad
  [\hat\omega^{11}, \hat\omega^{22}] = -2\mu \hat\omega^{12},
\end{equation}
defining, for $\mu \neq 0$, the complex Lie algebra
$\mathfrak{sl}(2,\CC)$.  However, the symplectic Majorana condition on
$s$ picks out a real form corresponding to the real $2$-forms; using 
Lemma~\ref{lemma:bil-real-cond}, this is the real span of the following 
spinor endomorphisms:
\begin{equation}
    L_1 := \tfrac12 (\hat\omega^{11} + \hat\omega^{22}),\qquad
    L_2 := \tfrac{i}2 (\hat\omega^{11} - \hat\omega^{22}) \qquad\text{and}\qquad
    L_3 := i \hat\omega^{12}.
\end{equation}
These satisfy
\begin{equation}
  [L_i,L_j] = \mu \epsilon_{ijk} L_k,
\end{equation}
for $i,j,k = 1,2,3$.  In summary we have proved the following.

\begin{proposition}
  Let \(s\in S\) and \(\mu\), \(\kappa\) and
  \(\omega^{AB}\) be the corresponding bilinears. Then the 2-forms
  \(\omega^{AB}\) define a Lie subalgebra of the stabiliser of
  \(\kappa\) in \(\fso(V)\) under the isomorphism
  \(\fso(V)\cong \wedge^2 V\).  If \(\kappa\) is null (equivalently,
  $\mu = 0$), this subalgebra is abelian and if \(\kappa\) is timelike,
  the subalgebra is isomorphic to $\mathfrak{sp}(1)$.
\end{proposition}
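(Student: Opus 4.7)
The plan is to verify the three assertions in turn: first, that each $2$-form $\omega^{AB}$ lies in the stabiliser of $\kappa$ in $\fso(V)\cong\wedge^2V$; second, that their real span closes under the Lie bracket; and third, to identify the isomorphism type in the two cases.

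The first assertion follows directly from Lemma~\ref{sec:bilidents}(iv): under $\fso(V)\cong\wedge^2V$ an element $A_\omega$ acts by $A_\omega v=-\iota_{v^\flat}\omega$, so $A_{\omega^{AB}}\kappa=0$ is equivalent to $\omega^{AB}(\kappa,-)=0$.

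For the remaining claims I would work in $\End(\Sigma)$ via the spinor representation $\rho:\fso(V)\to\End(\Sigma)$, $A\mapsto\tfrac14 A^{\mu\nu}\Gamma_{\mu\nu}$, of \eqref{eq:so-spinor-action}, which is a Lie algebra homomorphism and is faithful because the $\Gamma_{\mu\nu}$ are linearly independent in $\End(\Sigma)$. Under $\rho$, the element $A_{\omega^{AB}}$ is sent to $\hat\omega^{AB}$, so the brackets of the $A_{\omega^{AB}}$ in $\fso(V)$ can be read off from the commutators of the $\hat\omega^{AB}$. To obtain the three stated commutation relations
\begin{equation*}
[\hat\omega^{12},\hat\omega^{11}]=\mu\hat\omega^{11},\qquad
[\hat\omega^{12},\hat\omega^{22}]=-\mu\hat\omega^{22},\qquad
[\hat\omega^{11},\hat\omega^{22}]=-2\mu\hat\omega^{12},
\end{equation*}
I would invert the Fierz identity \eqref{eq:fierztoo} to write each $\hat\omega^{AB}$ in terms of $s^A\sbar^B$ modulo $\1$- and $\kappa$-pieces, and then simplify products using the scalar identity $\sbar^A s^B=\tfrac12\mu\epsilon^{AB}$ (which follows from Lemma~\ref{sec:idents}(i) and the definition of $\mu$), together with $\kappa\cdot s^A=\mu s^A$ from Lemma~\ref{sec:bilidents}(i) and $\kappa\cdot\kappa=\mu^2\1$. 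The main obstacle is the bookkeeping in this calculation: the $\1$- and $\kappa$-contributions must cancel, since the output lies in $\rho(\fso(V))$.

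Once the commutators are in hand, Lemma~\ref{lemma:bil-real-cond} and the stated definitions of $L_1,L_2,L_3$ provide a real basis of the span with $[L_i,L_j]=\mu\epsilon_{ijk}L_k$, proving closure. If $\mu=0$ all brackets vanish and the subalgebra is abelian. If $\mu\neq 0$, the rescaling $L'_i:=L_i/\mu$ yields the defining relations $[L'_i,L'_j]=\epsilon_{ijk}L'_k$ of $\fsp(1)\cong\fso(3)$; since $\fsp(1)$ is simple and at least one $L_i$ is nonzero when $\mu\neq 0$ (as the Fierz identity would otherwise force $\kappa=0$, contradicting $\eta(\kappa,\kappa)=\mu^2$), the subalgebra is isomorphic to $\fsp(1)$.
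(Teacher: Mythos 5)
Your proposal is correct, and it follows the paper's overall plan — stabiliser via Lemma~\ref{sec:bilidents}(iv), then the $\hat\omega^{AB}$-commutators, then passage to the real basis $L_i$ and a case split on $\mu$. Where you genuinely depart from the paper is in \emph{how} the three commutation relations for the $\hat\omega^{AB}$ are obtained. The paper simply records them as the output of a calculation (presumably in an explicit matrix realisation, as it does for Lemma~\ref{lemma:bil-real-cond}) without showing any working. Your approach — inverting \eqref{eq:fierztoo} to get $\hat\omega^{AB} = -2\, s^A\bar s^B - \tfrac14\epsilon^{AB}(\mu\1 + \kappa)$, then computing $[\hat\omega^{AB},\hat\omega^{CD}]$ using $\bar s^A s^B = \tfrac12\mu\,\epsilon^{AB}$, $\kappa\cdot s^A=\mu\, s^A$ and $\bar s^A\kappa = \mu\,\bar s^A$ — is basis-free and makes the algebra visible: the $\1$- and $\kappa$-pieces drop out because $\kappa$ commutes with $s^A\bar s^B$, and what remains is $[\hat\omega^{AB},\hat\omega^{CD}] = -\mu\bigl(\epsilon^{BC}\hat\omega^{AD} + \epsilon^{AD}\hat\omega^{BC}\bigr)$, which specialises to the three relations quoted. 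This is a real improvement in rigour over the paper's bare assertion, at the modest cost of Fierz bookkeeping (which you correctly flag as the main obstacle).

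One small point worth tightening: your parenthetical argument that the $L_i$ are not all zero when $\mu\neq 0$ is a bit compressed. The cleanest route is: if all $L_i$ vanish then all $\hat\omega^{AB}$ vanish, so the diagonal Fierz components $s^A\bar s^A = -\tfrac14\hat\omega^{AA}$ vanish, forcing $s^A=0$ for each $A$ (since $\C$ is invertible) and hence $s=0$ and $\mu=0$ — not quite ``forces $\kappa=0$'' as you wrote. (The paper instead records $\omega^{AB}_{\mu\nu}\omega^{\mu\nu}_{AB}=6\mu^2$ to the same effect.) You should also note that, given $[L_i,L_j]=\mu\epsilon_{ijk}L_k$ with $\mu\neq0$, if any one $L_i$ vanished the relations would force all three to vanish, so once one is nonzero all three are linearly independent and the span is 3-dimensional, hence isomorphic to $\fsp(1)$.
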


\subsection{The Poincaré superalgebra}
\label{sec:poin-super}

The Poincaré algebra $\fp(V)$ is the Lie algebra of isometries of
$(V,\eta)$.  It is a $15$-dimensional $\ZZ$-graded Lie algebra with
underlying vector space
\begin{equation}
  \fp(V) = \fp_{-2} \oplus \fp_0 = V \oplus \fso(V)
\end{equation}
and Lie brackets
\begin{equation}
  \label{eq:PLA}
  [A,B] = AB - BA \qquad [A,v] = Av \qquad\text{and}\qquad [v,w] = 0
\end{equation}
for all $v,w \in V$, and $A,B \in \fso(V)$.

The $d=5$ Poincaré superalgebra $\fs(V)$ is the minimal
superalgebra extending the Poincaré algebra of $V$.  In fact, the even Lie
algebra $\fs_{\bar 0} = \fp(V)$ is the Poincaré algebra.  The odd
subspace $\fs_{\bar 1} = S$, the real eight-dimensional Clifford module
$S$.  The underlying vector space of the $\ZZ$-graded Lie superalgebra
$\fs(V)$ is
\begin{equation}
  \fs(V) = \fs_{-2} \oplus \fs_{-1} \oplus \fs_0 = V \oplus S \oplus \fso(V)
\end{equation}
and Lie brackets given by those in \eqref{eq:PLA} and
\begin{equation}
  \label{eq:PSA}
  [A,s] = \half \omega_A \cdot s \qquad [v,s] = 0 \qquad\text{and}\qquad
  [s,s] = \kappa_s
\end{equation}
for all $s \in S$, $v \in V$ and $A \in \fso(V)$. In the remainder of the paper, 
we will denote the Poincaré superalgebra simply by \(\fs\), leaving \(V\) 
implicit in the notation.

\section{Spencer cohomology}
\label{sec:spencer-cohomology}

In this section we describe our calculation of the Spencer cohomology of
the $d=5$ Poincaré superalgebra.

\subsection{Lie algebra cohomology}
\label{sec:chevalley-eilenberg}

We start by briefly recalling the Chevalley--Eilenberg complex computing
the cohomology of a (finite-dimensional) Lie algebra $\fg$ relative to a
module $\fM$.  Let $G$ be the simply-connected Lie group with
Lie algebra $\fg$.  The exterior derivative of a left-invariant
differential form on $G$ is itself left-invariant and hence the
left-invariant forms on $G$ define a sub-complex of the de~Rham complex
of $G$.  Furthermore, since a left-invariant form is uniquely determined
by its value at the identity, we may localise the sub-complex of
left-invariant forms at the identity, resulting in a differential
complex $(C^\bullet(\fg)= \wedge^\bullet\fg^*, \d)$ known as the
Chevalley--Eilenberg complex.  Its cohomology $H^\bullet(\fg)$ is the
Lie algebra cohomology of $\fg$ or, more precisely, the Lie algebra
cohomology of $\fg$ with values in the trivial module $\RR$.

Let $\rho : \fg \to \fgl(\fM)$ be a representation of $\fg$ and let us
also denote by $\rho$ the corresponding representation of the
simply-connected $G$.  Letting $L_g: G \to G$ denote the left
multiplication by $g \in G$, we may consider the $G$-equivariant
$\fM$-valued differential forms on $G$: $\fM$-valued differential forms
$\omega$ which obey $L_g^* \omega = \rho(g) \circ \omega$ for all
$g \in G$.  Every such $\omega$ localises at the identity to a linear
map $\wedge^\bullet\fg \to \fM$.  The space
$C^\bullet(\fg,\fM):= \fM \otimes \wedge^\bullet\fg^*$ of such maps
becomes a complex under the Chevalley--Eilenberg differential
$\d : C^k(\fg,\fM) \to C^{k+1}(\fg,\fM)$ defined by extending linearly
the linear map
\begin{equation}
  \label{eq:ce-diff}
  \d (m \otimes \omega) = \d m \wedge \omega + m \otimes \d \omega,
\end{equation}
for $m \in \fM$ and $\omega \in \wedge^k\fg^*$, where $\d\omega$ is
induced, as before, by the de~Rham differential and $\d m : \fg \to \fM$
is defined by $\d m(X) = \rho(X) \, m$ for all $X \in \fg$.  The
cohomology of this complex is denoted $H^\bullet(\fg,\fM)$ and is called
the Lie algebra cohomology of $\fg$ with values in the module $\fM$.

The preceding discussion extends to the case of $\fg$ a Lie superalgebra
and $\fM$ a module, except that now $\wedge$ is taken in the
super-sense (i.e., symmetric on odd elements) and we must insert the
relevant Koszul signs.  Explicitly, the Chevalley--Eilenberg
differential $\d: C^k(\fg,\fM) \to C^{k+1}(\fg,\fM)$ in low degree is
given by
\begin{align}
  \begin{split}\label{eq:Spencer0}
    &\d : C^0(\fg,\fM) \to C^1 (\fg,\fM)\\
    &\d m (X) = \rho(X)\, m~,
  \end{split}
  \\
  \begin{split}\label{eq:Spencer1}
    &\d : C^1(\fg,\fM)\to C^2(\fg,\fM)\\
    &\d\varphi(X,Y) = \rho(X) \, \varphi(Y) - (-1)^{xy} \rho(Y) \, \varphi(X) - \varphi([X,Y])~,
  \end{split}
\\
  \begin{split}\label{eq:Spencer2}
    &\d : C^2(\fg,\fM) \to C^3(\fg,\fM)\\
    &\d\varphi(X,Y,Z) = \rho(X)\, \varphi(Y,Z) + (-1)^{x(y+z)} \rho(Y)\,
    \varphi(Z,X) + (-1)^{z(x+y)} \rho(Z)\, \varphi(X,Y) \\
    & {} \qquad\qquad\qquad - \varphi([X,Y],Z) - (-1)^{x(y+z)} \varphi([Y,Z],X) -(-1)^{z(x+y)} \varphi([Z,X],Y)~,
  \end{split}
\end{align}
where $x,y,z$ denote the parity of elements $X,Y,Z$ of $\fg$.

If we take $\fM = \fg$ to be the adjoint representation, with $\rho(X)\,
Y= [X,Y]$, then $H^\bullet(\fg,\fg)$ has the following
interpretation in low degree:
\begin{itemize}
\item $H^0(\fg,\fg)$ is isomorphic to the centre of $\fg$;
\item $H^1(\fg,\fg)$ is the quotient of the derivations of $\fg$ by the
  inner derivations; and
\item $H^2(\fg,\fg)$ are the infinitesimal deformations of $\fg$.
\end{itemize}
In this paper we concentrate on a refinement of this latter cohomology
group for the case of graded Lie superalgebras.  Rather than discuss
this in full generality, let us already specialise to the Poincaré
superalgebra defined in Section~\ref{sec:poin-super}.

\subsection{The Spencer complex}
\label{sec:spencer-complex}

Let $\fs_- := \fs_{-1} \oplus \fs_{-2}$ denote the negative-degree
subalgebra of the Poincaré superalgebra.  It is a Lie superalgebra in
its own right and the restriction of the adjoint action of $\fs$ to
$\fs_-$ makes $\fs$ into a module of $\fs_-$.  We define
$C^p(\fs_-;\fs) = \Hom(\wedge^p\fs_-,\fs)$, where $\wedge^p$ is taken in
the super-sense.  The direct sum $C^\bullet := \bigoplus_p C^p(\fs_-;\fs)$
becomes a differential complex relative to the Chevalley--Eilenberg
differential of $\fs_-$ relative to its module $\fs$.  We extend the
$\ZZ$-grading of $\fs$ to $C^\bullet$ in the natural way and, since
$\fs$ is $\ZZ$-graded, the differential has degree $0$.  Therefore the
Chevalley--Eilenberg complex decomposes into subcomplexes of a fixed
degree: we let $C^{d,p}(\fs_-;\fs)$ denote the $p$-cochains of degree
$d$ and $\partial : C^{d,p}(\fs_-;\fs) \to C^{d,p+1}(\fs_-;\fs)$ denote the
restriction of the Chevalley--Eilenberg differential.  We are interested
in calculating $H^{2,2}(\fs_-;\fs)$.  The relevant spaces of cochains
are:
\begin{equation}
  \begin{split}
    C^{2,1}(\fs_-;\fs) &= \Hom(V,\fso(V))\\
    C^{2,2}(\fs_-;\fs) &= \Hom(\wedge^2 V, V) \oplus \Hom(V \otimes S,
    S) \oplus \Hom(\odot^2 S, \fso(V))\\
    C^{2,3}(\fs_-;\fs) &= \Hom(\odot^2S \otimes V, V) \oplus \Hom(\odot^3 S, S),
  \end{split}
\end{equation}
and the differential can be read off from equations~\eqref{eq:Spencer1}
and \eqref{eq:Spencer2} above, with $\rho(X) Y = [X,Y]$ for $X \in
\fs_-$ and $Y \in \fs$.

As usual, the Spencer differential $\partial : C^{2,1}(\fs_-;\fs) \to
C^{2,2}(\fs_-;\fs)$ is injective and hence $H^{2,1}(\fs_-;\fs) = 0$.
Moreover, $H^{2,2}(\fs_-;\fs) \cong \eH^{2,2}$, where
\begin{equation}
  \eH^{2,2} = \ker \pi_1 \cap Z^{2,2},
\end{equation}
where $Z^{2,2}$ is the space of Spencer cocycles in $C^{2,2}(\fs_-;\fs)$
and $\pi_1 : C^{2,2}(\fs_-;\fs) \to \Hom(\wedge^2 V,V)$ is the natural
projection onto the first summand.  In other words, cohomology classes
in $H^{2,2}(\fs_-;\fs)$ are in bijective correspondence with normalised
cocycles $\beta + \gamma \in Z^{2,2}$, where $\beta : V \otimes S \to S$
and $\gamma : \odot^2S \to \fso(V)$.

There are two components to the cocycle condition: one in $\Hom(\odot^2S
\otimes V, V)$:
\begin{equation}
  \label{eq:cocycle-1}
  \gamma(s,s)  v + 2 [s, \beta(v,s)] = 0,
\end{equation}
and one in $\Hom(\odot^3 S, S)$
\begin{equation}
  \label{eq:cocycle-2}
  \gamma(s,s) s + \beta([s,s],s) = 0,
\end{equation}
for all $v \in V$ and $s \in S$.

\subsection{Solution of the first cocycle condition}
\label{sec:solut-first-cocycle}

We define $\beta_v \in \End(S)$ by $\beta_v(s) = \beta(v,s)$ for $v \in
V$ and $s \in S$.  We let $\beta_\mu := \beta_{\be_\mu}$ and parametrise $\beta_\mu$ as follows:
\begin{equation}
  (\beta_\mu s)^B = A_\mu s^B + B_\mu{}^B{}_C s^C + C_{\mu\nu}
  \Gamma^\nu s^B + D_{\mu\nu}{}^B{}_C \Gamma^\nu s^C + \half
  E_{\mu\nu\rho} \Gamma^{\nu\rho} s^B + \half F_{\mu\nu\rho}{}^B{}_C
  \Gamma^{\nu\rho} s^C,
\end{equation}
where \(B_\mu,D_{\mu\nu},F_{\mu\nu\rho}\in\fsp(1)\). In particular, with all lowered \(\Delta\)-indices, these components are symmetric in those indices. The first cocycle condition~\eqref{eq:cocycle-1} becomes
\begin{equation}
  \gamma(s,s)_{\mu\nu} + 2 \epsilon_{AB} \sbar^A\Gamma_\mu (\beta_\nu
  s)^B= 0.
\end{equation}
The skew-symmetric part in $\mu\nu$ expresses $\gamma(s,s)$ in terms of $\beta$,
whereas the symmetric part constrains $\beta$.

\begin{lemma}[First cocycle condition]\label{lem:cocycle-1}
  The solution of the first cocycle condition~\eqref{eq:cocycle-1} is
  given by
  \begin{equation}
    \begin{split}
      (\beta_\mu s)^B &= B_\mu{}^B{}_C s^C + C_{\mu\nu} \Gamma^\nu s^B +  \tfrac15 D^B{}_C \Gamma_\mu s^C + \half E_{\mu\nu\rho} \Gamma^{\nu\rho} s^B + \tfrac14 F^\rho{}^B{}_C \Gamma_{\mu\rho} s^C\\
      \gamma(s,s)_{\mu\nu} &= 2 \mu C_{\mu\nu} - \tfrac25 D_{AB}
      \omega^{AB}_{\mu\nu} + 2 E_{\mu\nu\rho} \kappa^\rho + \tfrac14
      \epsilon_{\mu\nu\rho}{}^{\sigma\tau} F^\rho_{AB} \omega^{AB}_{\sigma\tau},
    \end{split}
  \end{equation}
  for some $C \in \wedge^2V$ and $E \in \wedge^3 V$.
\end{lemma}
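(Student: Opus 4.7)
The plan is to substitute the parametrization of $\beta_\mu$ into the first cocycle condition~\eqref{eq:cocycle-1} and to decompose the result into parts antisymmetric and symmetric in $\mu\nu$. Writing
\[
  T_{\mu\nu}(s) := \epsilon_{AB}\,\sbar^A\Gamma_\mu(\beta_\nu s)^B,
\]
the cocycle condition becomes $\gamma(s,s)_{\mu\nu} = -2\,T_{\mu\nu}(s)$. Since $\gamma(s,s) \in \fso(V) \cong \wedge^2V$ is antisymmetric, this splits into $\gamma(s,s)_{\mu\nu} = -2\,T_{[\mu\nu]}(s)$, which supplies the formula for $\gamma$, and the symmetric constraint $T_{(\mu\nu)}(s) = 0$ for all $s \in S$, which cuts down the six tensors $A, B, C, D, E, F$.

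Computing $T_{\mu\nu}$ term by term uses the Clifford identities $\Gamma_\mu\Gamma^\sigma = \Gamma_\mu{}^\sigma + \delta_\mu^\sigma \1$ and $\Gamma_\mu\Gamma^{\sigma\rho} = \Gamma_\mu{}^{\sigma\rho} + \delta_\mu^\sigma\Gamma^\rho - \delta_\mu^\rho\Gamma^\sigma$, followed by the duality $\Gamma_{\mu\sigma\rho} = -\half\epsilon_{\mu\sigma\rho\lambda\tau}\Gamma^{\lambda\tau}$, to reduce every contribution to a product of a parameter with one of the bilinears $\mu_s$, $\kappa_s^\mu$, $\omega_{s,\mu\nu}^{AB}$. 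Two structural observations halve the work: first, $\omega_s^{AB}$ is symmetric in $AB$, so its $\epsilon_{AB}$-contraction vanishes, killing the $\omega$-contributions of the $\fsp(1)$-singlet parameters $A, C, E$; second, $B_\mu, D_{\mu\nu}, F_{\mu\nu\rho} \in \fsp(1)$ are traceless in their $\Delta$-indices, so their contractions with $\delta_B^C$ vanish, killing the $\mu_s$ and $\kappa_s$ contributions of the $\fsp(1)$-triplet parameters. After this reduction, $T_{\mu\nu}$ is a sum of the explicit terms $A_\nu\kappa_\mu + C_{\nu\mu}\mu_s + E_{\nu\mu\rho}\kappa^\rho$ together with two $\omega$-terms proportional to $D$ and $F$, the latter carrying an $\epsilon$-tensor inherited from the duality above.

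Demanding $T_{(\mu\nu)} = 0$ for all $s$ then decouples into independent conditions. The piece $A_{(\nu}\kappa_{\mu)}$, vanishing for all causal $\kappa_s$ (which span $V$), forces $A = 0$; the piece $C_{(\nu\mu)}\mu_s$ forces $C \in \wedge^2V$; and the piece $E_{(\nu|\mu|\rho)}\kappa^\rho$, combined with the pre-existing skew-symmetry of $E$ in its last two indices, forces $E$ to be totally antisymmetric, so $E \in \wedge^3V$. The main obstacle is the $D$- and $F$-sectors, where the symmetric part contains contractions of $D$ and $F$ with the two-form bilinear $\omega_s^{AB}$, which is not free but constrained by the Fierz identities of Lemma~\ref{sec:bilidents} --- notably $\omega^{AB}(\kappa, -) = 0$ and the duality $\omega^{AB}_{[\mu\nu}\kappa_{\rho]} = -\tfrac16\mu\,\epsilon_{\mu\nu\rho\sigma\tau}\omega^{AB\,\sigma\tau}$. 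Decomposing $D_{\mu\nu}{}^B{}_C$ into its trace, symmetric-traceless, and antisymmetric $\fso(V)$-components (and similarly for $F_{\mu\nu\rho}{}^B{}_C$) and testing the symmetric constraint against a generic $s$ shows that only the trace $\tfrac15\eta_{\mu\nu}D^B{}_C$ and the effectively one-index part of $F$ responsible for $F_{\mu\nu\rho}{}^B{}_C\Gamma^{\nu\rho} \mapsto \tfrac14 F^\rho{}^B{}_C\Gamma_{\mu\rho}$ survive. Substituting these surviving parameters back into $-2T_{[\mu\nu]}$ finally yields the stated formula for $\gamma(s,s)_{\mu\nu}$, the coefficients $\tfrac15$ and $\tfrac14$ arising from the trace-projection normalisation on $D$ and the $\epsilon$-contraction in the $F$-sector respectively.
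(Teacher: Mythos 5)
Your plan mirrors the paper's: substitute the parametrisation, split the cocycle condition into its $\mu\nu$-symmetric part (a constraint on the parameters) and $\mu\nu$-antisymmetric part (the formula for $\gamma$), then read off the constraint from the coefficients of the bilinears $\mu_s$, $\kappa_s$ and $\omega^{AB}_s$. The structural observations about which bilinear each parameter feeds into ($\fsp(1)$-singlets give only $\mu$- and $\kappa$-terms, $\fsp(1)$-triplets give only $\omega$-terms) are correct and a nice way to organise the calculation.

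However, there are two genuine gaps. First, the argument for $A=0$ and $E\in\wedge^3 V$ is not valid as stated: the $A$-piece and the $E$-piece both live in the $\kappa$-sector, so the symmetric constraint only gives the coupled relation $\eta_{\rho(\mu}A_{\nu)}+E_{(\mu\nu)\rho}=0$, and these two tensors are \emph{not} a priori in complementary $\fso(V)$-submodules of $\odot^2 V\otimes V$ --- a further step is needed (the paper sums cyclic permutations; equivalently one can take the two independent traces of the relation and combine them) before one can conclude $A=0$ and then $E_{(\mu\nu)\rho}=0$. Second, and more seriously, the reduction of $D$ and $F$ to their trace, respectively trace-vector, parts is precisely where the real work of the lemma lies, and you replace it with ``testing the symmetric constraint against a generic $s$ shows that only\dots\ survive.'' In the paper this requires deriving the full $\omega$-coefficient equation and then tracing it with $\eta^{\mu\nu}$ and $\eta^{\nu\tau}$, splitting into symmetric and antisymmetric pieces, and contracting with an $\epsilon$-tensor to kill the $\wedge^3 V$-component of $F$. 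Your appeal to the Fierz constraints on $\omega_s^{AB}$ (Lemma~\ref{sec:bilidents}) is also a red herring here: those are quadratic relations among the bilinears, whereas the map $\odot^2 S\to\RR\oplus V\oplus(\wedge^2V\otimes\odot^2\Delta)$ given by $(\mu,\kappa,\omega)$ is a linear isomorphism (that is the content of the Fierz expansion~\eqref{eq:fierztoo}), so the coefficient of $\omega$ in a linear-in-bilinears identity can be set to zero directly, with no need to use the constraints you cite.
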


\begin{proof}
  Symmetrising the first cocycle condition in $\mu\nu$ results in
  \begin{equation}
    0 = \epsilon_{AB} \sbar^A \left( \Gamma_\mu (\beta_\nu  s)^B  +
      \Gamma_\nu (\beta_\mu  s)^B \right),
  \end{equation}
  which expands to
  \begin{multline}
    0 = 2 \kappa_{(\mu} A_{\nu)} + 2 \mu C_{(\mu\nu)} +
    D_\mu{}^\rho{}_{AB} \omega_{\nu\rho}^{AB} +  D_\nu{}^\rho{}_{AB}
    \omega_{\mu\rho}^{AB} + 2 E_{(\mu\nu)\rho} \kappa^\rho\\ - \tfrac14 \epsilon_{\mu\rho\sigma}{}^{\tau\xi}
    F_\nu{}^{\rho\sigma}{}_{AB} \omega_{\tau\xi}^{AB} - \tfrac14 \epsilon_{\nu\rho\sigma}{}^{\tau\xi}
    F_\mu{}^{\rho\sigma}{}_{AB} \omega_{\tau\xi}^{AB}.
  \end{multline}
  Since this equation holds for all $s\in S$, the terms in $\mu$,
  $\kappa^\mu$ and $\omega_{\mu\nu}^{AB}$ must separately vanish.  The
  $\mu$-term simply says that $C_{(\mu\nu)} = 0$, so that $C \in
  \wedge^2 V$.  (We do not distinguish between $V$ and $V^*$.)  The
  $\kappa$-terms result in the equation
  \begin{equation}
    0 = E_{\mu\nu\rho} + E_{\nu\mu\rho} + \eta_{\mu\rho} A_\nu +
    \eta_{\nu\rho} A_\mu.
  \end{equation}
  Adding the cyclic permutations of this equation and using that
  $E_{\mu\nu\rho} = - E_{\mu\rho\nu}$, we find that
  \begin{equation}
    \eta_{(\mu\nu} A_{\rho)} = 0 \implies A_\mu = 0,
  \end{equation}
  and hence that $E_{\mu\nu\rho} = - E_{\nu\mu\rho}$, which says that $E
  \in \wedge^3 V$.  Finally, the $\omega$-terms result in
  \begin{equation}\label{eq:fcc-omega}
    0 = D_{\mu\xi} \eta_{\tau\nu} + D_{\nu\xi} \eta_{\tau\mu} -
    D_{\mu\tau} \eta_{\xi\nu} - D_{\nu\tau} \eta_{\xi\mu}- \tfrac12
    \epsilon_{\mu\rho\sigma\tau\xi} F_\nu{}^{\rho\sigma} - \tfrac12
    \epsilon_{\nu\rho\sigma\tau\xi} F_\mu{}^{\rho\sigma},
  \end{equation}
  where we have omitted the $\fsp(1)$-indices.  Tracing with
  $\eta^{\mu\nu}$ we find
  \begin{equation}\label{eq:DF-1}
    D_{[\tau\xi]} = \tfrac14 \epsilon_{\tau\xi\nu\rho\sigma} F^{\nu\rho\sigma},
  \end{equation}
  whereas tracing with $\eta^{\nu\tau}$ yields
  \begin{equation}
    5 D_{\mu\xi} - D \eta_{\mu\xi} + \tfrac12
    \epsilon_{\mu\xi\rho\sigma\tau} F^{\tau\rho\sigma} = 0,
  \end{equation}
  where we have introduced $D := \eta^{\mu\nu} D_{\mu\nu}$.  Breaking up
  into symmetric and skew-symmetric parts, we find
  \begin{equation}
    D_{(\mu\xi)} = \tfrac15 \eta_{\mu\xi} D \qquad\text{and}\qquad
    D_{[\mu\xi]} = -\tfrac1{10} \epsilon_{\mu\xi\rho\sigma\tau} F^{\rho\sigma\tau}.
  \end{equation}
  Taking these equations together with equation~\eqref{eq:DF-1}, we
  conclude that $D_{\mu\nu} = \tfrac15 \eta_{\mu\nu} D$ and
  $\epsilon_{\mu\nu\rho\sigma\tau} F^{\rho\sigma\tau} = 0$, so that the
  $\wedge^3V$ component of $F$ vanishes.  Back into the
  $\omega$-equation~\eqref{eq:fcc-omega}, we find
  \begin{equation}
    \epsilon_{\mu\rho\sigma\tau\xi} F_\nu{}^{\rho\sigma} +
    \epsilon_{\nu\rho\sigma\tau\xi} F_\mu{}^{\rho\sigma} = 0.
  \end{equation}
  Contracting with $\epsilon^{\nu\tau\xi\alpha\beta}$, we arrive at
  \begin{equation}
  	16 F_\mu{}^{\alpha\beta} + 4 \delta_\mu^\beta F_\nu{}^{\nu\alpha} - 4 \delta_\mu^\alpha F_\nu{}^{\nu\beta} = 0
  	\implies
    F_{\mu\nu\rho} = \tfrac14 \left(\eta_{\mu\nu} F_\rho -
      \eta_{\mu\rho} F_\nu\right),
  \end{equation}
  where we have defined $F_\mu := F^\nu{}_{\nu\mu}$.  Collecting these
  results together, we arrive at the expressions for $\beta$ and
  $\gamma$ in the statement of the lemma.
\end{proof}

\subsection{Solution of the second cocycle condition}
\label{sec:solut-second-cocycle}

The second cocycle condition~\eqref{eq:cocycle-2} becomes
\begin{equation}\label{eq:scc-2}
  \tfrac14 \gamma(s,s)_{\mu\nu}\Gamma^{\mu\nu} s^B + \kappa^\mu
  (\beta_\mu s)^B = 0,
\end{equation}
where $\gamma(s,s)_{\mu\nu}$ and $(\beta_\mu s)^B$ are given in
Lemma~\ref{lem:cocycle-1}.

\begin{lemma}[Second cocycle condition]
  The second cocycle condition reduces to
  \begin{equation}
    D^B{}_C = 0, \qquad B_\mu{}^B{}_C = - \half F_\mu{}^B{}_C
    \qquad\text{and}\qquad E_{\mu\nu\rho} = \tfrac14
    \epsilon_{\mu\nu\rho\sigma\tau} C^{\sigma\tau}.
  \end{equation}
\end{lemma}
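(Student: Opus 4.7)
The plan is to substitute the first-cocycle solutions from Lemma~\ref{lem:cocycle-1} into the second cocycle condition~\eqref{eq:scc-2}, reduce using Clifford and Fierz identities to a sum of linearly independent \(s\)-polynomials in \(\Hom(\odot^3 S, S)\), and then read off the constraints by equating each coefficient to zero. Every term produced by the substitution is the product of one of the bilinears \(\mu\), \(\kappa^\mu\), \(\omega^{AB}_{\mu\nu}\) (the only \(s\)-dependent scalars available) with a Clifford element, built from the parameters \(B,C,D,E,F\), acting on \(s\); the goal is to rewrite all of these in a common basis of \(\End(\Sigma)\otimes(\RR\1\oplus\fsp(1))\)-valued expressions in \(s^A\).

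To carry this out, I would first expand \(\kappa^\mu(\beta_\mu s)^B\) and use \(\Gamma_\mu\Gamma_\nu=\Gamma_{\mu\nu}+\eta_{\mu\nu}\1\) to commute the \(\kappa^\mu\Gamma_\mu\) factor out to act directly on \(s\), at which point Lemma~\ref{sec:bilidents}(i) replaces \(\kappa\cdot s^B\) by \(\mu s^B\); any residual \(\Gamma\)-products are further reduced with Lemma~\ref{lemma:gamma-traces}. In parallel, I would expand \(\tfrac14\gamma(s,s)_{\mu\nu}\Gamma^{\mu\nu}s^B\) and use the Hodge duality identities \(\Gamma_{\mu\nu\rho}=-\tfrac12\epsilon_{\mu\nu\rho\sigma\tau}\Gamma^{\sigma\tau}\) and \(\Gamma_{\mu\nu\rho\sigma}=\epsilon_{\mu\nu\rho\sigma\tau}\Gamma^\tau\) to rewrite the \(E\)- and \(F\)-derived contributions. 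Whenever the result still contains \(\omega^{AB}\) acting on \(s\), I would apply the Fierz consequences \(\omega^{AB}(\kappa,-)=0\) (Lemma~\ref{sec:bilidents}(iv)) and \(\omega^{AB}\cdot s^C = -2\mu s^{(A}\epsilon^{B)C}\) recorded after Lemma~\ref{sec:bilidents}. After these reductions, the scalar terms proportional to \(\mu s\) with trivial \(\fsp(1)\)-action force \(D^B{}_C=0\); the \(\fsp(1)\)-valued \(\mu s\)-terms give \(B_\mu{}^B{}_C = -\tfrac12 F_\mu{}^B{}_C\); and the terms of \(\kappa^\rho\Gamma\)-structure tie \(E\) to the Hodge dual of \(C\) via \(E_{\mu\nu\rho} = \tfrac14\epsilon_{\mu\nu\rho\sigma\tau}C^{\sigma\tau}\).

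The principal obstacle is bookkeeping: the five parameters produce many cross-terms, and the separation into linearly independent \(s\)-polynomials must simultaneously track the tensor type in \(V\) (scalars, 1-forms and 2-forms in the bilinears, together with 1-, 2- and 3-form parameters) and the \(\fsp(1)\)-representation. Sign errors are easy when Hodge-dualising the higher-rank \(\Gamma\)-products or when commuting \(\kappa\) through \(\Gamma_{\nu\rho}\), and one must verify that the \(\omega\cdot s\) reductions do not reintroduce a parameter that has already been eliminated. Beyond this, no new input is needed: every identity invoked is either a Clifford relation from Section~\ref{sec:spin-conv} or a consequence of the Fierz identity recorded in Lemma~\ref{sec:bilidents}.
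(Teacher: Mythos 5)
Your plan is sound in spirit but departs from the paper's actual argument at a structural level, and the departure introduces a gap that you should address. The paper's proof does not stay on the diagonal: it expands the bilinears in the condition $\bigl(\mu\,\Theta + \kappa^\mu\Psi_\mu + \tfrac12\omega^{AD}_{\mu\nu}\Phi^{\mu\nu}_{AD}\bigr)s = 0$, \emph{polarises} to a statement about $(s_1,s_2,s_3)$, and then isolates the $s_3$-dependence so that the cubic constraint collapses to an endomorphism-valued condition on $(s_1,s_2)$. Because that endomorphism is symmetric in $(s_1,s_2)$, it may then be evaluated on the diagonal $s_1=s_2=s$ and decomposed via the Fierz identity into $\mu$-, $\kappa$-, and $\omega$-coefficients, each of which must vanish \emph{as an endomorphism of $S$}. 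This yields three clean equations~\eqref{eq:scc-mu}--\eqref{eq:scc-omega}; it is then easy to check that \eqref{eq:scc-mu} is the Clifford trace of \eqref{eq:scc-kappa} and that \eqref{eq:scc-omega} is identically satisfied. The independence argument is transparent because $\mu$, $\kappa$, $\omega$ parametrise independent components of $\odot^2 S$.

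Your route works entirely on the diagonal and repeatedly substitutes the bilinear identities of Lemma~\ref{sec:bilidents}. That is not illegitimate --- a symmetric trilinear map is determined by its diagonal --- but the crucial claim ``reduce \ldots to a sum of linearly independent $s$-polynomials'' is left unjustified. Lemma~\ref{sec:bilidents} records consequences of the Fierz identity, not a complete set of relations for $\odot^3 S$, so after your reductions you do not \emph{a priori} know that the remaining terms (e.g.\ residual $\kappa_\rho s^A$, $\omega^{AB}_{\mu\nu}\Gamma^{\mu\nu}s^C$, $\mu\,\Gamma_\rho s^A$, \ldots) form a linearly independent set; without that you cannot safely equate coefficients. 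You also implicitly assume that commuting $\kappa\cdot$ through the various $\Gamma$'s and applying Lemma~\ref{sec:bilidents}(i) will lead to a canonical form; the commutator terms this produces ($\kappa_\rho s$ from $\kappa\cdot\Gamma_\rho$, etc.) are precisely where a hidden linear dependence could merge two constraints that ought to remain separate. To make your route rigorous you would need to exhibit a spanning set of $\Spin(V)\times\fsp(1)$-covariant cubic maps and show your reduced expression is written in that basis --- which is effectively what the paper's polarisation achieves automatically. Absent that, the gap is real, not merely a matter of bookkeeping.
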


\begin{proof}
  From the expressions for $\beta$ and $\gamma$ in
  Lemma~\ref{lem:cocycle-1}, the second cocycle condition
  \eqref{eq:scc-2} can be written as
  \begin{equation}\label{eq:scc-short}
    \left(\mu \Theta^B{}_C + \kappa^\mu \Psi_\mu{}^B{}_C + \tfrac12
    \omega_{\mu\nu}^{AD} \Phi_{AD}^{\mu\nu} \delta^B_C\right) s^C = 0,
  \end{equation} 
  where we have introduced the shorthands
  \begin{equation}
    \begin{split}
      \Theta^B{}_C &:= \tfrac12 C_{\mu\nu}\Gamma^{\mu\nu} \delta^B_C + \tfrac12 D^B{}_C
      - \tfrac14 F_\mu{}^B{}_C \Gamma^\mu\\
      \Psi_\mu{}^B{}_C &:= E_{\mu\nu\rho}\Gamma^{\nu\rho} \delta^B_C +
      B_\mu{}^B{}_C + \tfrac14 F_\mu{}^B{}_C + C_{\mu\nu} \Gamma^\nu
      \delta^B_C\\
      \Phi_{AB}^{\mu\nu} &:= \tfrac18 \epsilon^{\mu\nu\rho\sigma\tau}  F_{\rho\,AB} \Gamma_{\sigma\tau}.
    \end{split}
  \end{equation}
  We expand $\mu$, $\kappa$ and $\omega$ in
  equation~\eqref{eq:scc-short} to obtain
  \begin{equation}
    \left(\epsilon_{AB} \sbar^A s^B \Theta^C{}_D + \epsilon_{AB} \sbar^A
    \Gamma^\mu s^B \Psi_\mu{}^C{}_D + \tfrac12 \sbar^A \Gamma_{\mu\nu}
    s^B \Phi_{AB}^{\mu\nu} \delta^C_D\right) s^D = 0
  \end{equation}
  and we polarise away from the diagonal:
  \begin{equation}
    \begin{split}
      & \left(\epsilon_{AB} \sbar_1^A s_2^B \Theta^C{}_D + \epsilon_{AB}
        \sbar_1^A  \Gamma^\mu s_2^B \Psi_\mu{}^C{}_D + \tfrac12
        \sbar_1^A \Gamma_{\mu\nu} s_2^B \Phi_{AB}^{\mu\nu}
        \delta^C_D\right) s_3^D\\
      + {} & \left(\epsilon_{AB} \sbar_2^A s_3^B \Theta^C{}_D + \epsilon_{AB}
        \sbar_2^A  \Gamma^\mu s_3^B \Psi_\mu{}^C{}_D + \tfrac12
        \sbar_2^A \Gamma_{\mu\nu} s_3^B \Phi_{AB}^{\mu\nu}
        \delta^C_D\right) s_1^D\\
      + {} & \left(\epsilon_{AB} \sbar_1^A s_3^B \Theta^C{}_D + \epsilon_{AB}
        \sbar_1^A  \Gamma^\mu s_3^B \Psi_\mu{}^C{}_D + \tfrac12
        \sbar_1^A \Gamma_{\mu\nu} s_3^B \Phi_{AB}^{\mu\nu}
        \delta^C_D\right) s_2^D = 0,
    \end{split}
  \end{equation}
  where in the last line we have used the symmetry properties of
  Lemma~\ref{sec:idents}.  We write this as an endomorphism
  acting on $s_3^D$:
  \begin{multline}
    \left(\epsilon_{AB} \sbar_1^A s_2^B \Theta^C{}_D + \epsilon_{AB}
      \sbar_1^A  \Gamma^\mu s_2^B \Psi_\mu{}^C{}_D + \tfrac12
      \sbar_1^A \Gamma_{\mu\nu} s_2^B \Phi_{AB}^{\mu\nu} \delta^C_D
    \right. \\
    + \left. 2 \epsilon_{AD} \Theta^C{}_B s_1^B \sbar_2^A s_3^D + 2
      \epsilon_{AD} \Psi_\mu{}^C{}_B s_1^B \sbar_2^A  \Gamma^\mu +  \Phi_{AD}^{\mu\nu}
      s_1^C \sbar_2^A \Gamma_{\mu\nu}\right) s_3^D = 0.
  \end{multline}
  This is true for all $s_3$, hence the endomorphism in parenthesis has
  to vanish for all $s_1$ and $s_2$.  Being symmetric in $s_1$ and
  $s_2$, it is uniquely characterised by its values on the diagonal, so
  we can simply take $s_1 = s_2 = s$ and rewrite the second cocycle
  condition as
  \begin{multline}
    \epsilon_{AB} \sbar^A s^B \Theta^C{}_D + \epsilon_{AB}
    \sbar^A  \Gamma^\mu s^B \Psi_\mu{}^C{}_D + \tfrac12
    \sbar^A \Gamma_{\mu\nu} s^B \Phi_{AB}^{\mu\nu} \delta^C_D \\
    + 2 \epsilon_{AD} \Theta^C{}_B s^B \sbar^A s_3^D + 2
    \epsilon_{AD} \Psi_\mu{}^C{}_B s^B \sbar^A  \Gamma^\mu +  \Phi_{AD}^{\mu\nu}
    s^C \sbar^A \Gamma_{\mu\nu} = 0.
  \end{multline}
  We use the Fierz identity~\eqref{eq:fierztoo} in the last three terms
  and we collect terms in $\mu$, $\kappa$ and $\omega$ together, each of
  which must vanish (as an endomorphism of $S$) separately, resulting in
  three equations:
  \begin{equation}\label{eq:scc-mu}
    5 \Theta^C{}_D + \Psi_\mu{}^C{}_D \Gamma^\mu + \tfrac12
    \epsilon^{AC} \Phi_{AD}^{\mu\nu}\Gamma_{\mu\nu} = 0,
  \end{equation}
  \begin{equation}\label{eq:scc-kappa}
    5 \Psi_\mu{}^C{}_D + \Theta^C{}_D \Gamma_\mu + \Psi^{\nu\,C}{}_D
    \Gamma_{\mu\nu} + \tfrac12 \epsilon^{AC}\Phi_{AD}^{\nu\rho}
    \Gamma_\mu \Gamma_{\nu\rho} = 0,
  \end{equation}
  and
  \begin{equation}\label{eq:scc-omega}
    \Phi^{\mu\nu}_{AB} \delta^C_D + \tfrac12
    \epsilon_{D(A}\Theta^C{}_{B)} \Gamma^{\mu\nu} - \tfrac12
    \Psi_\rho{}^C{}_{(A} \epsilon_{B)D} \Gamma^{\mu\nu} \Gamma^\rho -
    \tfrac14 \delta^C_{(A} \Phi^{\rho\sigma}_{B)D}
    \Gamma^{\mu\nu}\Gamma_{\rho\sigma} = 0.
  \end{equation}
  We notice that the $\mu$-equation~\eqref{eq:scc-mu} is the
  Clifford trace of the $\kappa$-equation~\eqref{eq:scc-kappa}: indeed,
  contracting equation~\eqref{eq:scc-kappa} on the right with
  $\Gamma^\mu$ we obtain equation~\eqref{eq:scc-mu}.  Therefore the
  $\mu$-equation is redundant and we concentrate on the other two
  equations.
  
  The $\kappa$-equation can be rewritten as
  \begin{equation}
    5 \Psi_{\mu\,AB} + \Theta_{AB}\Gamma_\mu + \Psi^\nu_{AB}
    \Gamma_{\mu\nu} + \half \Phi^{\nu\rho}_{AB}\Gamma_\mu
    \Gamma_{\nu\rho} = 0.
  \end{equation}
  Upon substituting the expressions for $\Theta$, $\Psi$ and $\Phi$ into
  this equation, we first symmetrise in $AB$ (dropping the indices) and
  using the Clifford relations we arrive at
  \begin{equation}
    5 (B_\mu + \half F_\mu) + (B^\nu + \half F^\nu)\Gamma_{\mu\nu} +
    \half D \Gamma_\mu = 0,
  \end{equation}
  which results in
  \begin{equation}
    D_{AB} = 0 \qquad\text{and}\qquad B^\mu_{AB} = - \half F^\mu_{AB}.
  \end{equation}
  If instead we skew-symmetrise in $AB$ we arrive at
  \begin{equation}\label{eq:scc-kappa-skew}
    5 (E_{\mu\nu\rho} \Gamma^{\nu\rho} + C_{\mu\nu} \Gamma^\nu) + \half
    C_{\nu\rho}\Gamma^{\nu\rho}\Gamma_\mu +
    (E^{\nu\sigma\tau}\Gamma_{\sigma\tau} + C^{\nu\rho}_\rho)
    \Gamma_{\mu\nu} = 0.
  \end{equation}
  Contracting with $\Gamma^\mu$ on the right and simplifying we find
  \begin{equation}
    3 C_{\mu\nu} = \epsilon_{\mu\nu\rho\sigma\tau} E^{\rho\sigma\tau},
  \end{equation}
  which can be inverted to write
  \begin{equation}
    E_{\mu\nu\rho} = \tfrac14 \epsilon_{\mu\nu\rho\sigma\tau}
    C^{\sigma\tau}.
  \end{equation}
  Re-inserting this back into equation~\eqref{eq:scc-kappa-skew}, we
  find that it is identically satisfied.

  One can check that the remaining equation~\eqref{eq:scc-omega} is
  identically satisfied.
\end{proof}

In summary, we have proved the following

\begin{theorem}\label{thm:Spencer-calculation}
  As a module of $\fso(V) \oplus \fsp(1)$, we have the following
  isomorphism:
  \begin{equation}
    H^{2,2}(\fs_-;\fs) \cong \left(\wedge^2 V \otimes \wedge^2 \Delta
    \right) \oplus  \left( V \otimes \odot^2\Delta\right),
  \end{equation}
  where to a class $(C_{\mu\nu}\epsilon_{AB}, F^\mu_{AB}) \in H^{2,2}$
  there corresponds the cocycle $(\beta, \gamma) \in Hom(V\otimes S,
  S) \oplus \Hom(\odot^2S, \fso(V))$ given by
  \begin{equation}\label{eq:spencer-22-cocycle-solution}
    \begin{split}
      \beta(v,s)^B &=  \tfrac14 v \cdot C \cdot s^B - \tfrac34 C \cdot v
      \cdot s^B - \tfrac18 v \cdot F^B{}_C\cdot s^C - \tfrac38 F^B{}_C \cdot v
      \cdot s^C\\
      \gamma(s,s)_{\mu\nu} &= 2 \mu C_{\mu\nu} + \tfrac12 \kappa^\rho
      \epsilon_{\mu\nu\rho\sigma\tau} C^{\sigma\tau} + \tfrac14
      \epsilon_{\mu\nu\rho}{}^{\sigma\tau} F^\rho_{AB} \omega^{AB}_{\sigma\tau}.
    \end{split}
  \end{equation}
\end{theorem}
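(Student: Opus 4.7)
Theorem~\ref{thm:Spencer-calculation} is essentially a consolidation of Lemma~\ref{lem:cocycle-1} and the preceding lemma (the solution of the second cocycle condition); no further cohomological work is required, only the identification of the residual free parameters as an $\fso(V)\oplus\fsp(1)$-module and the rewriting of the formula for $\beta$ in manifestly covariant Clifford-product form.

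First I would combine the outputs of the two lemmas. Lemma~\ref{lem:cocycle-1} left $(\beta,\gamma)$ parametrised by $(B,C,D,E,F)$ with $C\in\wedge^2V$, $E\in\wedge^3V$, a scalar $D\in\fsp(1)$, and $F$ reduced to its trace $F_\mu\in V\otimes\fsp(1)$. The second cocycle lemma then forces $D=0$, $B_\mu=-\tfrac12 F_\mu$ and $E_{\mu\nu\rho}=\tfrac14\epsilon_{\mu\nu\rho\sigma\tau}C^{\sigma\tau}$, so the residual freedom is precisely $C\in\wedge^2V$ together with $F^\mu_{AB}\in V\otimes\fsp(1)$. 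Under the identifications $\wedge^2\Delta\cong\RR\epsilon$ and $\fsp(1)\cong\odot^2\Delta$ already set up in the spinorial conventions, this realises the stated isomorphism of $\fso(V)\oplus\fsp(1)$-modules.

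For $\gamma$, direct substitution of $D=0$ and the Hodge expression for $E$ into the formula of Lemma~\ref{lem:cocycle-1} immediately produces the claimed expression. For $\beta$, I would first apply the duality identity $\Gamma_{\mu\nu\rho}=-\tfrac12\epsilon_{\mu\nu\rho\sigma\tau}\Gamma^{\sigma\tau}$ to convert the $E$-contribution $\tfrac18\epsilon_{\mu\nu\rho\sigma\tau}C^{\sigma\tau}\Gamma^{\nu\rho}$ into a $\Gamma^{(3)}$-contraction with $C$, and then expand the two Clifford products $v\cdot C$ and $C\cdot v$ with $v=\be_\mu$ using $\Gamma_\mu\Gamma^{\rho\sigma}=\Gamma_\mu{}^{\rho\sigma}+\delta_\mu^\rho\Gamma^\sigma-\delta_\mu^\sigma\Gamma^\rho$; matching the resulting scalar and $\Gamma^{(3)}$ coefficients uniquely fixes the combination $\tfrac14\,v\cdot C\cdot s-\tfrac34\,C\cdot v\cdot s$. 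The analogous but simpler manipulation for the $B$- and $F$-contributions, after imposing $B_\mu=-\tfrac12 F_\mu$, reproduces $-\tfrac18\,v\cdot F\cdot s-\tfrac38\,F\cdot v\cdot s$. The only mild hazard is keeping $\epsilon$-sign conventions consistent when unwinding the duality identity; otherwise the verification is purely algebraic and presents no conceptual obstacle.
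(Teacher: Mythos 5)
Your proposal is correct and matches the paper's approach: the paper explicitly introduces the theorem with ``In summary, we have proved the following,'' so the proof is indeed just the consolidation of Lemma~\ref{lem:cocycle-1} with the second-cocycle lemma, plus the routine translation of the component expression for $\beta$ into Clifford-product form using $\Gamma_\mu\Gamma^{\sigma\tau}=\Gamma_\mu{}^{\sigma\tau}+2\delta_\mu^{[\sigma}\Gamma^{\tau]}$ and the duality $\Gamma_{\mu\nu\rho}=-\tfrac12\epsilon_{\mu\nu\rho\sigma\tau}\Gamma^{\sigma\tau}$, exactly as you outline.
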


This is not far from the naive dimensional reduction of the result in
\cite{deMedeiros:2018ooy}: the self-dual $3$-form reduces dimensionally
to a $2$-form ($C_{\mu\nu}$) and a $3$-form ($E_{\mu\nu\rho}$) which are
related by Hodge duality and the $\fsp(1)$-valued $1$-form gives rise to
an $\fsp(1)$-valued $1$-form ($F_\mu{}^A{}_B$) and an $\fsp(1)$-valued
scalar, which is missing from the five-dimensional calculation.  The
precise behaviour of the Spencer cohomology under dimensional reduction
lies beyond the scope of this paper and will be addressed in a separate
paper.

\section{Zero-curvature equations}
\label{sec:zero-curv-eq}

In this section we interpret the cohomological calculations of the
previous section geometrically.  The first step is to re-interpret the
Spencer complex geometrically and we do this in
Section~\ref{sec:setup-geometry}, arriving at a connection $\eD$ on
spinor fields, whose curvature we calculate in
Section~\ref{sec:curvature-calc}.  The final aim of this section is to
derive the conditions for maximal supersymmetry, which at least locally
is tantamount to the flatness of $\eD$.  We do this in two steps: in
Section~\ref{sec:gamma-trace} we impose the vanishing of the Clifford
trace of the curvature, which in the supergravity context often
coincides with the bosonic field equations, and finally in
Section~\ref{sec:flat-connection} we derive the conditions for vanishing
curvature.  In Section~\ref{sec:killing-superalg} we will show that the
$\eD$-parallel spinor fields generate a Lie superalgebra at least when
the curvature of $\eD$ is Clifford-traceless and in
Section~\ref{sec:max-susy} we will study the geometries on which $\eD$
is flat.

\subsection{Setup}
\label{sec:setup-geometry}

We shall fix a five-dimensional lorentzian spin manifold $(M,g)$ and let
$V$ be as in Section~\ref{sec:poinc-super}; that is, $V$ is a
five-dimensional lorentzian vector space we may identify with
$\RR^{1,4}$.  The spin bundle $\Spin(M)$ is a principal
$\Spin(V)$-bundle which comes with a bundle morphism
$\Spin(M) \to \SO(M)$ to the oriented orthonormal frame bundle, covering
the identity and agreeing fibrewise with the standard 2-to-1 covering
homomorphism $\Spin(V) \to \SO(V)$.  The principal bundle $\SO(M)$ is a
$G$-structure with $G=\SO(V)$ and therefore comes with a soldering form
which restricts pointwise to a vector space isomorphism $T_p M \to V$.
These isomorphisms assemble to a bundle isomorphism between $TM$ and the
``fake tangent bundle'' $\SO(M) \times_{\SO(V)} V$, which is the
associated vector bundle of $\SO(M)$ corresponding to the vector
representation of $\SO(V)$. Using this construction, we may locally write
the components of tensor fields on \(M\) as if they were tensors on \(V\)
using the orthonormal basis on \(V\). This may be equivalently viewed as 
working in a local orthonormal frame on \(M\).

Since we are interested in spin manifolds, the relevant principal bundle
is $\Spin(M)$.  Any associated vector bundle to $\SO(M)$ can be
interpreted as an associated vector bundle to $\Spin(M)$ via the bundle
morphism $\Spin(M) \to \SO(M)$, but there are of course also associated
vector bundles to $\Spin(M)$ which do not arise in this way: namely,
those involving spinorial representations.

As in Section~\ref{sec:spin-conv}, let $\Sigma$ denote one of the two
inequivalent Clifford modules of $\Cl(V)$.  It becomes a
$\Spin(V)$-module by restriction.  Let
$\$ := \Spin(M) \times_{\Spin(V)} \Sigma$ denote the corresponding
spinor bundle.  It is a complex rank-4 vector bundle with a quaternionic
structure $J$ and an invariant symplectic inner product
$(\sigma,\tau) \mapsto \bar\sigma\tau$ inherited from $\Sigma$.  We
introduce an auxiliary trivial\footnote{It could be interesting to relax
  this condition, gauge the R-symmetry and introduce a connection on a
  possibly non-trivial R-symmetry bundle, but that lies beyond the scope
  of the present paper.} complex rank-2 vector bundle
$\eH \cong M \times \Delta$.  It too has a quaternionic structure $j$
and a symplectic inner product $(\cdot,\cdot)$ inherited from $\Delta$.
We will make a global choice of symplectic frame $e_1,e_2$ for $\eH$
such that $(e_A,e_B) = \epsilon_{AB}$, where $\epsilon_{AB}$ is as in
Section~\ref{sec:spin-conv}.  On the tensor product bundle
$\bbS_\CC := \$ \otimes \eH$ we have an invariant real structure
$J \otimes j$.  Its real sub-bundle $\bbS$ is the real rank-8 vector
bundle associated to $\Spin(M)$ via the representation $S$, whose
complexification $S \otimes_\RR \CC = \Sigma \otimes_\CC \Delta$.  Any
spinor section $s$ of $\bbS_\CC$ or $\bbS$ may be expanded relative to
the global frame $e_A$ as $s = s^A \otimes e_A$, where in the case of
$\bbS$ the $s^A$ are subject to the reality condition
\eqref{eq:symplectic-majorana}.  The symplectic inner products on $\$$
and $\eH$ combine to an inner product on
$\bbS_\CC$ given by $\left<s_1,s_2\right> = \epsilon_{AB}\bar s_1^A
s_2^B$, which is real when restricted to
$\bbS$.  We shall refer to sections of
$\bbS_\CC$ as \textbf{(symplectic) Dirac spinor fields} and to sections of
$\bbS$ as \textbf{(symplectic) Majorana spinor fields}.

In summary, with any representation $W$ of $\Spin(V)$ made out of
$V$ and $S$ (via tensor product and taking duals) we can associate a
vector bundle $\Spin(M) \times_{\Spin(V)} W$ in such a way that to any
$\Spin(V)$-equivariant linear map $\varphi : W_1 \to W_2$ between two
such representations, we associate a corresponding bundle morphism
$\Spin(M) \times_{\Spin(V)} W_1 \to \Spin(M) \times_{\Spin(V)} W_2$.
Since the cochains in the generalised Spencer complex are
$\Spin(V)$-modules and the differential is $\Spin(V)$-equivariant, we
may interpret the Spencer complex as a complex of
the associated vector bundles and, in particular, as a complex on their
spaces of smooth sections, and similarly for its cohomology.

In this way, for example, the component $\beta$ of the Spencer cocycle
in equation~\eqref{eq:spencer-22-cocycle-solution} can be interpreted as
a section of the vector bundle associated to the representation $\Hom(V,
\End(S))$, which is the bundle of one-forms with values in
$\End(\bbS)$.  This is the bundle on whose space of sections the affine
space of connections on $\bbS$ is modelled on and therefore we
may understand $\beta$ as the difference between two such connections.
The natural connection on $\bbS$ is the spin connection $\nabla$ -- that
is, the one induced from the lift to $\Spin(M)$ of the Ehresmann connection on
$\SO(M)$ which induces the Levi-Civita connection on $TM$ -- and
therefore we may interpret the cocycle component $\beta$ as $\nabla -
\eD$ for some connection $\eD$ on $\bbS$.  The cocycle component $\beta$
depends on the additional geometric data parametrising the relevant
Spencer cohomology, namely a two-form $C \in \Omega^2(M)$ and an
$\fsp(1)$-valued one form $F \in \Omega^1(M,\fsp(1))$.
 
Thus let us define a \textbf{(bosonic) background} \((M,g,C,F)\) to be a spin
manifold \((M,g)\) with the structures described above along with the
forms \(C\in\Omega^2(M)\) and \(F\in\Omega^1(M;\fsp(1))\). It
will be useful in places to view the components \(F\indices{^A_B}\) of
\(F\) in a symplectic frame for \(\eH\) as 1-forms on \(M\).  The
$\End(\bbS)$-valued one-form $\beta$ corresponding to the Spencer
cocycle in equation~\eqref{eq:spencer-22-cocycle-solution} is given by
\begin{equation}\label{eq:beta-clifford}
  \beta_X s = \tfrac{1}{4}X\cdot C\cdot s - \tfrac{3}{4}C\cdot X\cdot s
  - \tfrac{1}{8}X\cdot F\cdot s - \tfrac{3}{8}F\cdot X\cdot s,
\end{equation} 
where \(s\) is a Majorana spinor field, \(X\) is a vector
field, \(\cdot\) denotes both the Clifford multiplication of forms and
the Clifford action of forms on spinor fields, and \(F\) also acts via
the \(\fsp(1)\) action on \(\Gamma(\bbS)\) to the right.\footnote{We abuse
notation in that on the right-hand side we should have the metrically
dual one-form $X^\flat$ and not the vector field, but we trust this
ought not be a cause of confusion.}

With the conventions chosen above, in components we have
\begin{equation}\label{eq:beta-comps} (\beta_\mu s)^A =
  \tfrac{1}{8}C^{\alpha\beta}\qty(\epsilon_{\mu\alpha\beta\sigma\tau}\Gamma^{\sigma\tau}
  + 8\eta_{\mu\alpha}\Gamma\indices{_\beta}) s^A +
  \tfrac{1}{4}F\indices{^\alpha^A_B}\qty(\Gamma\indices{_{\mu\alpha}} -
  2\eta_{\mu\alpha})s^B.
 \end{equation}
We define the \textbf{superconnection} \(\eD\) on \(\bbS\) by
\begin{equation}\label{eq:superconnection} \eD_X s = \nabla_X s -
  \beta_X s,
\end{equation} where \(\nabla\) is (the spin lift of) the
Levi-Civita connection, \(X\in\fX(M)\) and \(s\in\Gamma(\bbS)\). In
components,
\begin{equation} (\eD_\mu s)^A = \nabla_\mu s^A -
  \tfrac{1}{8}C^{\alpha\beta}\qty(\epsilon_{\mu\alpha\beta\sigma\tau}\Gamma^{\sigma\tau}
  + 8\eta_{\mu\alpha}\Gamma\indices{_\beta}) s^A -
  \tfrac{1}{4}F\indices{^\alpha^A_B}\qty(\Gamma\indices{_{\mu\alpha}} -
  2\eta_{\mu\alpha})s^B.
\end{equation}

Let us remark that, contrary to what one might have suspected, the
$\fsp(1)$-valued one-form $F$ does not correspond to the difference
between two connections on the auxiliary bundle $\eH$.  If that were the
case, the term $\tfrac{1}{4}F\indices{^\alpha^A_B}
\Gamma\indices{_{\mu\alpha}}$ in the expression for $\eD$ would be
absent, while its presence suggests a mixing of local Lorentz and
R-symmetries.

\begin{definition}
  A \textbf{Killing spinor (field)} on a background \((M,g,C,F)\) is a
  spinor field \(s\in\Gamma(\bbS)\) which is parallel with respect to
  the superconnection \(\eD\); that is, if it satisfies the
  \textbf{Killing spinor equation}
  \begin{equation}
    \nabla s = \beta s.
  \end{equation}
  A background \((M,g,C,F)\) is \textbf{supersymmetric} if it admits a
  Killing spinor, and \textbf{maximally supersymmetric} if its space of
  Killing spinors has maximal dimension.
\end{definition}

The notion of "maximal dimension" in the definition above arises because
a set of linearly independent sections has linearly independent values
at all points; hence such a set has size at most \(\rank\bbS=\dim S\).
The following proposition gives a necessary condition for maximal
supersymmetry.

\begin{proposition}\label{prop:max-susy-vanishing-curvature}
  If a background \((M,g,C,F)\) is maximally supersymmetric, it is flat
  with respect to the superconnection: the curvature tensor \({R^\eD}\),
  given by
  \begin{equation}\label{eq:Dcurv}
    {R^\eD}(X,Y) = \eD_{\comm{X}{Y}} - \comm{\eD_X}{\eD_Y}
  \end{equation}
  where \(X,Y\) are vector fields, vanishes. The converse holds if \(M\)
  is simply connected.
\end{proposition}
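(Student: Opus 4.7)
The plan is to invoke the classical correspondence between flat connections and maximal spaces of parallel sections. Since a Killing spinor is by definition a section $s\in\Gamma(\bbS)$ with $\eD s=0$, the space of Killing spinors is just $\ker\eD$. The paper has already noted that evaluation at any point is injective on this kernel, so $\dim\ker\eD \leq \operatorname{rank}_\RR\bbS =: N$, and maximal supersymmetry means equality. Both directions then follow from the identity
\begin{equation}
  R^\eD(X,Y)\,s = \eD_{[X,Y]} s - \eD_X\eD_Y s + \eD_Y\eD_X s,
\end{equation}
which implies that any $\eD$-parallel section is annihilated by $R^\eD(X,Y)$ for all $X,Y\in\fX(M)$.

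For the forward direction, assume $(M,g,C,F)$ is maximally supersymmetric and pick a basis $s_1,\dots,s_N$ of Killing spinors. By injectivity of the evaluation map, $\{s_i(p)\}$ is a basis of $\bbS_p$ at every point $p\in M$; that is, the $s_i$ form a global $\eD$-parallel frame of $\bbS$. For arbitrary $X,Y\in\fX(M)$, the displayed identity gives $R^\eD(X,Y)s_i=0$ for each $i$, and since the $s_i(p)$ span $\bbS_p$ pointwise, this forces the endomorphism $R^\eD(X,Y)$ to vanish identically.

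For the converse, assume $M$ is simply connected and $R^\eD=0$. Flatness of the connection $\eD$ on the real vector bundle $\bbS$ implies that $\eD$-parallel transport along a piecewise smooth curve depends only on the endpoint-fixed homotopy class of the curve, and simple connectedness then reduces this to dependence only on the endpoints. Fix $p_0\in M$ and a basis $\{\xi_i\}_{i=1}^N$ of $\bbS_{p_0}$, and define $s_i(p)$ to be the $\eD$-parallel transport of $\xi_i$ along any path from $p_0$ to $p$. The resulting sections are smooth, $\eD$-parallel, and linearly independent at $p_0$, hence (again by injectivity of evaluation) linearly independent everywhere, giving an $N$-dimensional space of Killing spinors. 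The argument is essentially routine; no difficulty arises from the specific form of $\eD$ beyond the fact that \eqref{eq:superconnection} is a genuine connection on the real bundle $\bbS$, so that the holonomy reduction procedure preserves the symplectic Majorana reality structure automatically.
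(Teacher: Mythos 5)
Your argument is correct and follows the same route as the paper's own proof: the forward direction uses that the pointwise values of a maximal set of Killing spinors span each fibre, forcing $R^\eD(X,Y)$ to annihilate everything, and the converse uses flatness plus simple connectedness to produce a full parallel frame by parallel transport. You simply spell out some standard details (homotopy-invariance of parallel transport, preservation of the reality structure) that the paper leaves implicit.
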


\begin{proof}
  Clearly from the definition, \(R^\eD\) annihilates Killing spinors. If
  there are \(\rank\bbS=\dim S\) linearly independent Killing spinors,
  at any point \(x\in M\) their values span the fibre \(\bbS_x\), thus
  \(R^\eD\) must annihilate all spinors at \(x\), hence \(R^\eD\)
  vanishes.

  Conversely, assume \(M\) is simply connected and \(\eD\) is flat. Any
  choice of a spinor at any point determines a Killing spinor by
  parallel transport, and furthermore, parallel transport of a basis at
  any point determines \(\dim S\) linearly independent Killing spinors.
\end{proof}

Determining the curvature tensor will allow necessary (and sufficient,
in the case of simply-connected backgrounds) conditions for maximal
supersymmetry to be found. It is also possible to recover the bosonic
supergravity equations by imposing a weaker restriction than the
vanishing of the curvature, namely the vanishing of its
Clifford trace: \(\Gamma^\nu {R^\eD}_{\mu\nu} =0\). Indeed, after
finding \({R^\eD}\), our approach will be to calculate its
Clifford trace and derive necessary and sufficient conditions for this
to vanish, which will simplify the vanishing curvature calculation.

\subsection{Conventions on curvature tensors}
\label{sec:curvature-convention}

We define the curvature \(R^D\in\Omega^2(M;\End(E))\) of any connection \(D\) on a vector bundle \(E\) over \(M\) by
\begin{equation}
	R^D(X,Y) = D_{\comm{X}{Y}}-\comm{D_X}{D_Y}.
\end{equation}
In particular, \(R^\nabla\) is the Riemann curvature tensor and \(R^\eD\) is the curvature of the superconnection \(\eD\). The Ricci tensor \(\Ric\in\odot^2T^*M\) is
\begin{equation}
	\Ric(X,Y)=\tr(Z\mapsto R^\nabla(X,Z)Y).
\end{equation}
In a local frame\footnote{Coordinate frame or local orthonormal frame. Recall that in the latter case, \(g_{\mu\nu}=\eta_{\mu\nu}\).} \(\{\be_\mu\}\), we define the components of the Riemann curvature tensor to be
\begin{equation}
	R^\nabla(\be_\mu,\be_\nu)\be_\tau = R\indices{_{\mu\nu}^\sigma_\tau}e_\sigma.
\end{equation}
We use the metric to raise and lower indices so that
\begin{equation}
	R_{\mu\nu\sigma\tau} = g_{\sigma\rho}R\indices{_{\mu\nu}^\rho_\tau} = g(\be_\sigma, R^\nabla(\be_\mu,\be_\nu)e_\tau).
\end{equation}
We denote the corresponding Riemann tensor in \(\wedge^2\odot^2 T^*M\) with these components by \(\Riem\):
\begin{equation}
	\Riem(W,X,Y,Z) = g(Y,R^\nabla(W,X)Z).
\end{equation}
The components of the Ricci tensor are then
\begin{equation}
	R_{\mu\nu} = \Ric(\be_\mu,\be_\nu) = R\indices{_{\mu\rho}^\rho_\nu},
\end{equation}
and the scalar curvature is the trace of the Ricci endomorphism defined by \(\Ric(X,Y)=g(X,\Ric(Y))\), or
\begin{equation}
	R = g^{\mu\nu}R_{\mu\nu}.
\end{equation}
Since the Riemann curvature tensor \(R^\nabla(X,Y)\) takes values in \(\fso(TM)\), it defines an element in the Clifford bundle as in \eqref{eq:so-spinor-action}, so the action on spinors fields \(s\) is given by
\begin{equation}
	R^\nabla(X,Y) \cdot s = \tfrac{1}{4}R_{\mu\nu\sigma\tau}X^\mu Y^\nu\Gamma^{\sigma\tau} s.
\end{equation} 
We will also use the Weyl tensor
\begin{equation}
	W = \Riem + \frac{R}{2n(n-1)}g\owedge g + \frac{1}{n-2}(\Ric-\tfrac{R}{n}g)\owedge g,
\end{equation}
which in components is
\begin{equation}
	W_{\mu\nu\rho\sigma} = R_{\mu\nu\rho\sigma} 
		+ \tfrac{1}{n-2}(R_{\mu\rho}g_{\nu\sigma} + R_{\nu\sigma}g_{\mu\rho} - R_{\mu\sigma}g_{\nu\rho} - R_{\nu\rho}g_{\mu\sigma}) 
		- \tfrac{R}{(n-1)(n-2)}(g_{\mu\rho}g_{\nu\sigma} - g_{\mu\sigma}g_{\nu\rho}).
\end{equation}

\subsection{Determination of the superconnection curvature}
\label{sec:curvature-calc}

From the definition of the curvature of the superconnection \(R^\eD\) and the Riemann curvature \(R^\nabla\), we have
\begin{equation}
  \begin{split}
    {R^\eD}(X,Y)s &= \eD_{\comm{X}{Y}}s - \comm{\eD_X}{\eD_Y}s \\
    &= \nabla_{\comm{X}{Y}}s - \beta_{\comm{X}{Y}}s -
    \comm{\nabla_X}{\nabla_Y}s - \comm{\beta_X}{\beta_Y}s +
    \comm{\nabla_X}{\beta_Y}s + \comm{\beta_X}{\nabla_Y}s \\
    &= R^\nabla(X,Y)\cdot s - \beta_{\comm{X}{Y}}s - \comm{\beta_X}{\beta_Y}s +
    \qty(\nabla_X \beta_Y)s - \qty(\nabla_Y \beta_X)s,
  \end{split}
\end{equation}
for all \(s\in\Gamma(\bbS)\). Locally, we can write
\begin{equation}\label{eq:curv-comps}
  {R^\eD}\indices{_{\mu\nu}^A_B} =
  \tfrac{1}{4}R\indices{_{\mu\nu\sigma\tau}}\Gamma\indices{^{\sigma\tau}}\delta^A_B
  - \comm{\beta_\mu}{\beta_\nu}\indices{^A_B} +
  2\nabla_{[\mu}\beta\indices{_{\nu]}^A_B}.
\end{equation}
We'll expand each of the \(\beta\) terms in turn. The components 
\(\beta\indices{_\mu^A_B}\) of \(\beta\) are defined by 
\((\beta_\mu s)^A = \beta\indices{_\mu^A_B}s^B\).
The differential term, straightforwardly, is
\begin{equation} \label{eq:curv-comps-diff-term}
  \begin{split}
    \nabla\indices{_{[\mu}}\beta\indices{_{\nu]}^A_B} & = -
    \tfrac{1}{8}\epsilon\indices{_{\alpha\beta\sigma\tau[\mu}}\nabla_{\nu]}C\indices{^{\alpha\beta}}
    \Gamma\indices{^{\sigma\tau}} \delta^A_B +
    \nabla_{[\mu}C\indices{_{\nu]\sigma}}\Gamma\indices{^\sigma}
    \delta^A_B - \tfrac{1}{4}\eta\indices{_{\sigma[\mu}}\nabla_{\nu]}
    F\indices{_\tau^A_B}\Gamma\indices{^{\sigma\tau}} -
    \tfrac{1}{2}\nabla_{[\mu} F\indices{_{\nu]}^A_B}.
  \end{split}
\end{equation}

For the commutator
\(\comm{\beta_\mu}{\beta_\nu}=2\qty({\beta_{[\mu}}{\beta_{\nu]}})\), we
first compute
\begin{equation}
  \begin{split}
    \beta\indices{_\mu^A_C}&\beta\indices{_\nu^C_B}\\
    & = \tfrac{1}{64}C\indices{^{\alpha\beta}}C\indices{^{\gamma\delta}}\qty(\epsilon\indices{_{\mu\alpha\beta\sigma\tau}}\Gamma\indices{^{\sigma\tau}} 
    + 8\eta\indices{_{\mu\alpha}}\Gamma\indices{_\beta}
    )\qty(
    \epsilon\indices{_{\nu\gamma\delta\kappa\lambda}}\Gamma\indices{^{\kappa\lambda}}
    + 8\eta\indices{_{\nu\gamma}} \Gamma_\delta
    )\delta^A_B
    \\
    &\qquad+ \tfrac{1}{16} F\indices{^\alpha^A_C} F\indices{^\beta^C_B} \qty(
    \Gamma\indices{_{\mu\alpha}} - 2 \eta_{\mu\alpha}
    )\qty(
    \Gamma\indices{_{\nu\beta}} - 2 \eta_{\nu\beta}
    )
    \\
    &\qquad + \tfrac{1}{32}C\indices{^{\alpha\beta}} F\indices{^\gamma^A_B}\qty[
    \qty(
    \epsilon\indices{_{\mu\alpha\beta\sigma\tau}}\Gamma\indices{^{\sigma\tau}}
    + 8\eta\indices{_{\mu\alpha}}\Gamma\indices{_\beta}
    ) 
    \qty( \Gamma\indices{_{\nu\gamma}} - 2 \eta_{\nu\gamma} )
    +\qty( \Gamma\indices{_{\mu\gamma}} - 2 \eta_{\mu\gamma} )
    \qty(
    \epsilon\indices{_{\nu\alpha\beta\sigma\tau}}\Gamma\indices{^{\sigma\tau}} 
    + 8\eta_{\nu\alpha}\Gamma\indices{_\beta}
    )
    ].
  \end{split}
\end{equation}

We will refer to the collections of terms in this expression
proportional to \(CC\), \(FF\) and \(CF\) as \([CC]\), \([ F F]\) and
\([C F]\) respectively. Expanding \([CC]\) and for now omitting the
\(\delta^A_B\), after a long calculation we have
\begin{equation}
  \begin{split}
    &\tfrac{1}{64}C\indices{^{\alpha\beta}}C\indices{^{\gamma\delta}} \qty[		\epsilon\indices{_{\mu\alpha\beta\sigma\tau}}\epsilon\indices{_{\nu\gamma\delta\kappa\lambda}}\Gamma\indices{^{\sigma\tau}}\Gamma\indices{^{\kappa\lambda}} 
    + 8\eta\indices{_{\nu\gamma}}\epsilon\indices{_{\mu\alpha\beta\sigma\tau}}\Gamma\indices{^{\sigma\tau}}\Gamma_\delta
    + 8\eta\indices{_{\mu\alpha}}\epsilon\indices{_{\nu\gamma\delta\kappa\lambda}}\Gamma\indices{_\beta}\Gamma\indices{^{\kappa\lambda}}
    + 64\eta\indices{_{\mu\alpha}}\eta\indices{_{\nu\gamma}}\Gamma\indices{_\beta}\Gamma_\delta
    ]
    \\
    &= \tfrac{1}{64}C\indices{^{\alpha\beta}}C\indices{^{\gamma\delta}}
    \epsilon\indices{_{\mu\alpha\beta\sigma\tau}}\epsilon\indices{_{\nu\gamma\delta\kappa\lambda}}
    \Gamma\indices{^{\sigma\tau\kappa\lambda}} 
    + \tfrac{1}{8}C\indices{^{\alpha\beta}}\qty(
    C\indices{_{\mu\rho}}\epsilon\indices{_{\nu\alpha\beta\sigma\tau}}
    + C\indices{_{\nu\rho}}\epsilon\indices{_{\mu\alpha\beta\sigma\tau}}
    )\Gamma\indices{^{\rho\sigma\tau}}
    \\
    & \qquad + \tfrac{5}{4}C\indices{_{\mu\alpha}}C\indices{_{\nu}^\alpha}
    - \tfrac{1}{8} \eta_{\mu\nu} C\indices{^{\alpha\beta}}C\indices{_{\alpha\beta}}
    \\
    & \qquad + \tfrac{1}{4}\qty(
    3C\indices{_{\mu\sigma}}C\indices{_{\nu\tau}}\Gamma\indices{^{\sigma\tau}} 
    - \qty(
    C\indices{_{\mu\alpha}}C\indices{_\sigma^\alpha}\Gamma\indices{_\nu^\sigma}
    - C\indices{_{\nu\alpha}}C\indices{_\sigma^{\alpha}}\Gamma\indices{_{\mu}^\sigma}
    )
    - \tfrac{1}{2}C\indices{^{\alpha\beta}}C\indices{_{\alpha\beta}}\Gamma\indices{_{\mu\nu}}
    )
    \\
    & \qquad 
    + \tfrac{1}{4}C\indices{^{\alpha\beta}}\qty(
    C\indices{_\mu^\gamma}\epsilon\indices{_{\nu\alpha\beta\gamma\tau}}
    - C\indices{_\nu^\gamma}\epsilon\indices{_{\mu\alpha\beta\gamma\tau}}
    )\Gamma^\tau.
  \end{split}
\end{equation}
                    
We have written the above expression so that terms symmetric in
\(\mu\nu\) appear first, followed by the
skew-symmetric terms. We will do the same with \([ F F]\) and \([C F]\).
For the former, after another calculation gives
\begin{equation}
  \begin{split}
    &\tfrac{1}{16} F\indices{^\alpha^A_C} F\indices{^\beta^C_B} \qty[
    \Gamma\indices{_{\mu\alpha}}\Gamma\indices{_{\nu\beta}} 
    - 2 \eta_{\nu\beta}\Gamma\indices{_{\mu\alpha}}
    - 2 \eta_{\mu\alpha}\Gamma\indices{_{\nu\beta}}
    + 4  \eta_{\mu\alpha} \eta_{\nu\beta}
    ]
    \\
    &=	- \tfrac{1}{32}\eta_{\mu\nu} \comm{ F_\alpha}{ F_\beta}\indices{^A_B}\Gamma\indices{^{\alpha\beta}}
    +\tfrac{5}{32}\acomm{ F_\mu}{ F_\nu}\indices{^A_B}
    -\tfrac{1}{16}\eta_{\mu\nu} \qty( F^\alpha F_\alpha)\indices{^A_B}
    \\
    & \qquad - \tfrac{1}{32}\comm{ F_\alpha}{ F_\beta}\indices{^A_B}\Gamma\indices{_{\mu\nu}^{\alpha\beta}} 
    -  \tfrac{1}{16}\qty( F^\alpha F_\alpha)\indices{^A_B}\Gamma\indices{_{\mu\nu}}
    + \tfrac{3}{32}\comm{ F_\mu}{ F_\nu}\indices{^A_B}
    \\
    & \qquad +\tfrac{1}{16}\qty[
    \qty( F_\nu F^\alpha - 2 F^\alpha F_\nu)\indices{^A_B}\Gamma\indices{_{\mu\alpha}}
    - \qty(2 F_\mu F^\alpha + F^\alpha F_\mu)\indices{^A_B}\Gamma\indices{_{\nu\alpha}}
    ],
  \end{split}
\end{equation}
where we have separated terms which are explicitly symmetric, skew-symmetric and of indeterminate symmetry in \(\mu\nu\). Turning finally to \([C F]\), we have
\begin{equation}
\begin{split}
  &\tfrac{1}{32}C\indices{^{\alpha\beta}} F\indices{^\gamma^A_B}\qty[
  \qty(
  \epsilon\indices{_{\mu\alpha\beta\sigma\tau}}\Gamma\indices{^{\sigma\tau}}
  + 8\eta\indices{_{\mu\alpha}}\Gamma\indices{_\beta} ) \qty(
  \Gamma\indices{_{\nu\gamma}} - 2 \eta\indices{_{\nu\gamma}} ) +\qty(
  \Gamma\indices{_{\mu\gamma}} - 2 \eta\indices{_{\mu\gamma}} ) \qty(
  \epsilon\indices{_{\nu\alpha\beta\sigma\tau}}\Gamma\indices{^{\sigma\tau}}
  + 8\eta\indices{_{\nu\alpha}}\Gamma\indices{_\beta} ) ]
  \\
	& = \tfrac{1}{16}\epsilon\indices{_{\alpha\beta\sigma\tau(\mu}}C\indices{^{\alpha\beta}} F\indices{^\gamma^A_B}
		\Gamma\indices{^{\sigma\tau}_{\nu)\gamma}}
	+ \tfrac{1}{2} F\indices{_\tau^A_B}C\indices{_{\sigma(\mu}}\Gamma\indices{_{\nu)}^{\sigma\tau}}
	- \tfrac{1}{4} C\indices{^{\alpha\beta}} F\indices{_{(\mu}^A_B} \epsilon\indices{_{\nu)\alpha\beta\sigma\tau}} 
		\Gamma\indices{^{\sigma\tau}}
	+ C{_{\sigma(\mu}} F\indices{_{\nu)}^A_B}\Gamma\indices{^\sigma}
\\
	&\qquad
		+ \tfrac{1}{8}C\indices{^{\alpha\beta}}\qty(
			-\epsilon\indices{_{\mu\nu\alpha\beta\sigma}} F\indices{_\tau^A_B}
			+ \eta\indices{_{\sigma[\mu}} \epsilon\indices{_{\nu]\alpha\beta\gamma\tau}} F\indices{^\gamma^A_B}
			)\Gamma\indices{^{\sigma\tau}}
		+ \tfrac{1}{2}\qty(
			C\indices{_{\mu\nu}} F\indices{_\sigma}
			+ \eta\indices{_{\sigma[\mu}}C\indices{_{\nu]\alpha}} F\indices{^\alpha^A_B}
			)\Gamma\indices{^{\sigma}}.
\end{split}
\end{equation}

We now have all of the terms of the commutator \(\comm{\beta_\mu}{\beta_\nu}\). This has two \(\Delta\)-indices, and after lowering both, it can be separated into symmetric and skew-symmetric parts with respect to these indices. The symmetric part is proportional to \(\epsilon\) and the skew-symmetric part takes values in \(\fsp(1)\). We must therefore determine the symmetry of the \(\Delta\) indices in each term. The following lemma addresses this.

\begin{lemma}
\(\comm{F_\mu}{F_\nu}_{AB}\) is symmetric in \(AB\) and \(\acomm{F_\mu}{F_\nu}_{AB}\) is skew-symmetric.
\end{lemma}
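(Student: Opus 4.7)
The plan is to exploit the algebraic identification $\End_\CC(\Delta) \cong \Delta\otimes\Delta \cong \odot^2\Delta \oplus \wedge^2\Delta$ used in Section~\ref{sec:spin-conv}, under which $\fsp(1)$ is the quaternionic-structure-preserving part of $\odot^2\Delta$ and $\RR\1$ is the quaternionic-structure-preserving part of $\wedge^2\Delta \cong \RR\epsilon$. Lowering the upper $\Delta$-index of an endomorphism of $\eH$ with $\epsilon$ realises this decomposition concretely: the endomorphism lies in $\fsp(1)$ iff its components with both indices lowered are \emph{symmetric} in $AB$, and it lies in $\RR\1$ iff those components are \emph{skew}-symmetric (in which case they are a scalar multiple of $\epsilon_{AB}$).

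With this dictionary in hand, the lemma becomes a statement about where in $\End(\eH) = \RR\1 \oplus \fsp(1)$ the commutator and anticommutator land. Since $F_\mu, F_\nu \in \fsp(1)$ and $\fsp(1)$ is a Lie subalgebra of $\End(\eH)$, the commutator $\comm{F_\mu}{F_\nu}$ lies again in $\fsp(1)$, giving the first claim. For the anticommutator, I would use that $\fsp(1) \cong \fsu(2)$ is generated by three elements obeying the Pauli-type relation $\acomm{L_i}{L_j} \propto \delta_{ij}\1$; hence the anticommutator of any two $\fsp(1)$ elements is proportional to the identity, placing $\acomm{F_\mu}{F_\nu}$ in $\RR\1$ and giving the second claim.

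If one prefers to avoid invoking the abstract structure of $\fsp(1)$, one can proceed by direct index manipulation. Writing $F_\mu{}^C{}_B = \epsilon^{CD} F_{\mu\,DB}$ and using the symmetry $F_{\mu\,AB} = F_{\mu\,BA}$ that characterises $\fsp(1)\subset \odot^2\Delta$, one computes
\begin{equation*}
  \comm{F_\mu}{F_\nu}_{AB} = F_{\mu\,AC}\,\epsilon^{CD}\,F_{\nu\,DB} - F_{\nu\,AC}\,\epsilon^{CD}\,F_{\mu\,DB},
\end{equation*}
and similarly for the anticommutator; swapping $A\leftrightarrow B$, applying the symmetry of each $F_{\mu,\cdot\cdot}$, and relabelling the dummies produces a single sign change from $\epsilon^{CD} = -\epsilon^{DC}$, yielding the two required symmetry properties.

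No step here is really an obstacle: the only thing to be careful about is keeping the placement of the $\Delta$-indices and the signs of $\epsilon^{AB}$ consistent with the conventions fixed in Section~\ref{sec:spin-conv}. I would favour the conceptual proof (commutator stays in $\fsp(1)$, anticommutator lands in $\RR\1$) since it makes clear why the statement is structural rather than a coincidence of the five-dimensional calculation, and because it will be reused implicitly when assembling the symmetric and skew parts of $\comm{\beta_\mu}{\beta_\nu}$ in the paragraphs that follow.
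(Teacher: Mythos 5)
Your proposal is correct and follows essentially the same two routes the paper itself offers: a one-line conceptual observation that the statement is immediate from $F$ taking values in $\fsp(1)$ together with the $\RR\1\oplus\fsp(1)$ decomposition established in Section~2.1, and a direct index manipulation using the symmetry of $F_{\mu\,AB}$ and the skew-symmetry of $\epsilon$. The only small discrepancy is a raised-index convention in your formula $F_\mu{}^C{}_B = \epsilon^{CD}F_{\mu\,DB}$, which should read $\epsilon^{DC}F_{\mu\,DB}$ under the paper's convention $\epsilon^{AB}X_{AC}=X^B{}_C$; this is an overall sign that cancels in the symmetry comparison and does not affect the conclusion.
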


\begin{proof}
The result follows immediately from the fact that \(F\) takes values in \(\fsp(1)\). Alternatively, we can show this purely by an exercise in indices: note that \(F_{AB}\) is symmetric in \(AB\) and consider the product \(F_\mu F_\nu\) with lowered indices:
\begin{equation}
\begin{split}
	(F_\mu F_\nu)\indices{_A_B}
		&= F\indices{_\mu_A_C} F\indices{_\nu^C_B}
		= \epsilon^{DC}F\indices{_\mu_A_C} F\indices{_\nu_D_B}
		= -\epsilon^{CD}F\indices{_\nu_B_D} F\indices{_\mu_C_A}
		= -F\indices{_\nu_B_D} F\indices{_\mu^D_A}
		= -(F_\nu F_\mu)\indices{_B_A};
\end{split}
\end{equation}
we thus have
\begin{align}
	(F_{(\mu} F_{\nu)})_{AB} &= (F_{(\nu} F_{\mu)})_{AB} = -(F_{(\mu} F_{\nu)})_{BA},
	&
	(F_{[\mu} F_{\nu]})_{AB} &= -(F_{[\nu} F_{\mu]})_{AB} = (F_{[\mu} F_{\nu]})_{BA},
\end{align}
hence the result.
\end{proof}

We now evaluate the commutator. Since \(\comm{\beta_\mu}{\beta_\nu}= 2\qty(\beta_{[\mu}\beta_{\nu]})\), only the \(\mu\nu\)-skew-symmetric terms of \([CC]\), \([CF]\) and \([FF]\) contribute. Organising the terms of the commutator by their \(\Delta\)-symmetry but suppressing those indices for convenience, the commutator \(\comm{\beta_\mu}{\beta_\nu}\) is
\begin{equation}
\begin{split}
	\label{eq:curv-comps-comm-term}
	&\tfrac{1}{4}\Big[
			\qty(
				6C\indices{_{\mu\sigma}}C\indices{_{\nu\tau}}
			 	+ 4\eta\indices{_{\sigma[\mu}}C\indices{_{\nu]\alpha}}C\indices{_\tau^\alpha}
				-C\indices{^{\alpha\beta}}C\indices{_{\alpha\beta}}
					\eta\indices{_{\mu\sigma}}\eta\indices{_{\nu\tau}}
			)\epsilon
		+ \tfrac{1}{2}\qty(
				\eta\indices{_{\sigma[\mu}}\acomm{ F_{\nu]}}{ F_\tau}
				- \qty( F^\alpha F_\alpha)\eta\indices{_{\mu\sigma}}\eta\indices{_{\nu\tau}}
				)
		\Big]\Gamma\indices{^{\sigma\tau}} 
\\
	& \qquad 
	+ \epsilon\indices{_{\alpha\beta\gamma\sigma[\mu}} C\indices{^{\alpha\beta}}C\indices{^\gamma_{\nu]}}
		\epsilon\Gamma\indices{^\sigma}
	+\tfrac{1}{4}\qty[ 
		\eta\indices{_{\sigma[\mu}}\comm{ F_{\nu]}}{ F_\tau}
		- \qty(
			\epsilon\indices{_{\mu\nu\alpha\beta\sigma}}C\indices{^{\alpha\beta}} F\indices{_\tau}
			- \eta\indices{_{\sigma[\mu}} \epsilon\indices{_{\nu]\alpha\beta\gamma\tau}}C\indices{^{\alpha\beta}} F\indices{^\gamma}
			)
		]\Gamma\indices{^{\sigma\tau}}
\\
	&\qquad+ \qty[
		- \tfrac{1}{16}\epsilon\indices{_{\mu\nu\alpha\beta\sigma}} \comm{ F^\alpha}{ F^\beta}
		+ C\indices{_{\mu\nu}} F\indices{_\sigma}
		+ \eta\indices{_{\sigma[\mu}}C\indices{_{\nu]\alpha}} F\indices{^\alpha}
		]\Gamma\indices{^{\sigma}}
	+ \tfrac{3}{16}\comm{ F_\mu}{ F_\nu}.
\end{split}
\end{equation}

An explicit expression for the curvature of the superconnection in terms of \(C\) and \(F\) can now finally be found by substituting equations~\eqref{eq:curv-comps-diff-term} and \eqref{eq:curv-comps-comm-term} into equation~\eqref{eq:curv-comps}. For the sake of readability, we will first define some new notation. It will be useful to denote the components of \(R^\eD\) as follows:
\begin{equation}
	{R^\eD}\indices{_{\mu\nu}_{AB}} 
		= \tfrac{1}{2}L\indices{_{\mu\nu\sigma\tau}_{AB}}\Gamma\indices{^{\sigma\tau}}
		+ M\indices{_{\mu\nu\sigma}_{AB}}\Gamma\indices{^\sigma}
		+ N\indices{_{\mu\nu}_{AB}},
\end{equation}
where
\begin{equation}
\begin{aligned}
	L &= L^\wedge \epsilon_{AB} + L^\odot_{AB}
	&L^\wedge & \in\textstyle \Omega^2(M;\wedge^2 TM)
	& L^\odot & \in\textstyle \Omega^2(M;\wedge^2 TM\otimes\fsp(1))\\
	M &= M^\wedge \epsilon_{AB} + M^\odot_{AB}
	&M^\wedge & \in\textstyle \Omega^2(M;\wedge^1 TM)
	& M^\odot & \in\textstyle \Omega^2(M;\wedge^1 TM\otimes\fsp(1))\\
	N &= N^\wedge \epsilon_{AB} + N^\odot_{AB}
	&N^\wedge & \in\textstyle \Omega^2(M)
	&N^\odot & \in\textstyle \Omega^2(M; \fsp(1)).
\end{aligned}
\end{equation}
Said another way, \(L\indices{_{\mu\nu\sigma\tau}_{AB}},M\indices{_{\mu\nu\sigma}_{AB}}\) and \(N\indices{_{\mu\nu}_{AB}}\) are skew-symmetric in \(\mu\nu\), and \(L\indices{_{\mu\nu\sigma\tau}_{AB}}\) is also skew-symmetric in \(\sigma\tau\); \(L^\wedge\) and \(L^\odot\) are respectively the skew-symmetric and symmetric parts of \(L\) (with respect to \(\Delta\)-indices).

Since it is skew-symmetric, the anticommutator \(\acomm{ F_\mu}{ F_\nu}\) can be written in terms of its trace: 
	\(\acomm{ F_\mu}{ F_\nu}\indices{_{AB}} = \tfrac{1}{2}\acomm{ F_\mu}{ F_\nu}\indices{^C_C}\epsilon{_{AB}}\).
Let us define 
	\(F_\mu \cdot F_\nu = F\indices{_{\mu}^{AB}}F\indices{_\nu_{AB}}= -( F_\mu F_\nu)\indices{^A_A}\) 
and \(F^2 = F^\alpha \cdot F_\alpha\). Note that \(F_\mu \cdot F_\nu = F_\nu \cdot F_\mu\), so
\begin{align}
	\acomm{ F_\mu}{ F_\nu}\indices{_{AB}} &= -\qty( F_\mu\cdot F_\nu)\epsilon{_{AB}}
	&& \text{and}
	& \qty( F_\alpha F^\alpha)\indices{_{AB}} &= -\tfrac{1}{2} F^2\epsilon{_{AB}}.
\end{align}

We thus have
\begin{align}
	\begin{split}
	{L^\wedge}\indices{_{\mu\nu\sigma\tau}} & = \tfrac{1}{2}\Big[
		R\indices{_{\mu\nu\sigma\tau}}
		- \qty(
			6 C\indices{_{\mu[\sigma}}C\indices{_{|\nu|\tau]}}
			+ 2 \eta\indices{_{\mu[\sigma}}C\indices{_{|\nu\alpha|}}C\indices{_{\tau]}^\alpha}
			- 2 \eta\indices{_{\nu[\sigma}}C\indices{_{|\mu\alpha|}}C\indices{_{\tau]}^\alpha}
			- C\indices{^{\alpha\beta}}C\indices{_{\alpha\beta}}
				\eta\indices{_{\mu[\sigma}}\eta\indices{_{|\nu|\tau]}}
				)
\\
	& \qquad\qquad \qquad \qquad
		- \epsilon\indices{_{\alpha\beta\sigma\tau[\mu}}\nabla_{\nu]}C\indices{^{\alpha\beta}}
		+ \tfrac{1}{4}\qty(
			\eta\indices{_{\mu[\sigma}}(F_{|\nu|}\cdot F_{\tau]})
			- \eta\indices{_{\nu[\sigma}}(F_{|\mu|}\cdot F_{\tau]})
			- F^2 \eta\indices{_{\mu[\sigma}}\eta\indices{_{|\nu|\tau]}}
			)
		\Big]
	\end{split}
\label{eq:Lwedge}
\\
	{M^\wedge}\indices{_{\mu\nu\sigma}} & =
		-\epsilon\indices{_{\alpha\beta\gamma\sigma[\mu}} 
			C\indices{^{\alpha\beta}}C\indices{^\gamma_{\nu]}}
		+ 2\nabla_{[\mu}C\indices{_{\nu]\sigma}}
\label{eq:Mwedge}
\\
	{N^\wedge}\indices{_{\mu\nu}} &= 0
\label{eq:Nwedge}
\\
	{L^\odot}\indices{_{\mu\nu\sigma\tau}} & =
		-\tfrac{1}{4}\qty(
		\eta\indices{_{\mu[\sigma}}\comm{ F_{|\nu|}}{ F_{\tau]}}
		-\eta\indices{_{\nu[\sigma}}\comm{ F_{|\mu|}}{ F_{\tau]}}
		)
		+ \tfrac{1}{2}
			\epsilon\indices{_{\mu\nu\alpha\beta[\sigma}}
				C\indices{^{\alpha\beta}} F\indices{_{\tau]}}
\\
		& \qquad - \tfrac{1}{4}\qty(
			\eta\indices{_{\mu[\sigma}}\epsilon\indices{_{\tau]\nu\alpha\beta\gamma}}
				C\indices{^{\alpha\beta}} F\indices{^\gamma}
			- \eta\indices{_{\nu[\sigma}}\epsilon\indices{_{\tau]\mu\alpha\beta\gamma}}
			C\indices{^{\alpha\beta}} F\indices{^\gamma}
			)
		- \tfrac{1}{2}\qty(
			\eta\indices{_{\mu[\sigma}}\nabla_{|\nu|} F\indices{_{\tau]}}
			- \eta\indices{_{\nu[\sigma}}\nabla_{|\mu|} F\indices{_{\tau]}}
			)
\label{eq:Lodot}
\\
	{M^\odot}\indices{_{\mu\nu\sigma}} & =
		\tfrac{1}{16}\epsilon\indices{_{\mu\nu\alpha\beta\sigma}}\comm{ F^\alpha}{ F^\beta}
		- C\indices{_{\mu\nu}} F\indices{_\sigma}
		- \eta\indices{_{\sigma[\mu}}C\indices{_{\nu]\alpha}} F\indices{^\alpha}
\label{eq:Modot}
\\
	{N^\odot}\indices{_{\mu\nu}} & =
		- \tfrac{3}{16}\comm{ F_\mu}{ F_\nu} - \nabla_{[\mu} F\indices{_{\nu]}}.
\label{eq:Nodot}
\end{align}

This explicitly gives the superconnection curvature \(R^\eD\) in terms of the background fields. We now go on to show how equations of motion and maximal supersymmetry conditions can be extracted from the curvature.

\subsection{Clifford trace of superconnection curvature}
\label{sec:gamma-trace}

In this section we seek to compute necessary and sufficient conditions
for the Clifford trace of the curvature
\(\Gamma^\nu{R^\eD}\indices{_{\mu\nu}}\) to vanish identically as a
one-form with values in $\End(\bbS)$.   To this end we will make use of
some of the identities from Appendix~\ref{sec:tensor-id}.

\begin{theorem}\label{thm:gamma-trace-zero}
Let \((M,g,C,F)\) be a 5-dimensional background with superconnection \(\eD\) given by equation~\eqref{eq:superconnection}. The Clifford trace of the curvature \(\Gamma^\nu{R^\eD}\indices{_{\mu\nu}}\) vanishes if and only if the following equations hold:
\begin{gather}
	\nabla\indices{^{\alpha}}C\indices{_{\alpha\mu}}
		=\tfrac{1}{2}\epsilon\indices{_{\mu\alpha\beta\gamma\delta}}
			C\indices{^{\alpha\beta}}C\indices{^{\gamma\delta}}
\label{eq:sugra-maxwell}
\\
	C\indices{_{\mu\alpha}} F\indices{^\alpha}  = 0
\label{eq:C-F-contraction}
\\
	\nabla\indices{_{[\sigma}}C\indices{_{\mu\nu]}}  = 0
\label{eq:C-closed-components}
\\
	R\indices{_{\mu\nu}}
	+ \qty(
		6C\indices{_{\mu\alpha}}C\indices{_\nu^\alpha}
		- \eta\indices{_{\mu\nu}}C\indices{^{\alpha\beta}}C\indices{_{\alpha\beta}}
		)
	- \tfrac{3}{8}\qty(
		\qty( F_\mu\cdot F_\nu) 
		- \eta\indices{_{\mu\nu}} F^2
		)
	= 0
\label{eq:sugra-einstein}
\\
	\nabla\indices{_{\mu}} F\indices{_{\nu}} 
		= -\tfrac{1}{2}\epsilon\indices{_{\mu\nu\alpha\beta\gamma}}
			C\indices{^{\alpha\beta}} F\indices{^\gamma}
\label{eq:C-F-eom}
\\
	\comm{ F_\mu}{ F_\nu}  = 0.
\label{eq:F-comm-zero}
\end{gather}
\end{theorem}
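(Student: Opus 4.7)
The strategy is a direct calculation using the explicit expressions \eqref{eq:Lwedge}--\eqref{eq:Nodot} for the components of $R^\eD$. Applying the Clifford identities
\begin{equation*}
\Gamma^\nu\Gamma^{\sigma\tau}=\Gamma^{\nu\sigma\tau}+\eta^{\nu\sigma}\Gamma^\tau-\eta^{\nu\tau}\Gamma^\sigma, \qquad \Gamma^\nu\Gamma^\sigma=\Gamma^{\nu\sigma}+\eta^{\nu\sigma}\1,
\end{equation*}
one writes
\begin{equation*}
\Gamma^\nu R^\eD_{\mu\nu\,AB} = \tfrac{1}{2}L\indices{_{\mu\nu\sigma\tau}_{AB}}\Gamma^{\nu\sigma\tau} + L\indices{_\mu^\sigma_{\sigma\tau}_{AB}}\Gamma^\tau + M\indices{_{\mu\nu\sigma}_{AB}}\Gamma^{\nu\sigma} + M\indices{_{\mu\nu}^\nu_{AB}}\1 + N\indices{_{\mu\nu}_{AB}}\Gamma^\nu.
\end{equation*}
In five dimensions the Hodge duality $\Gamma^{\nu\sigma\tau} = -\tfrac{1}{2}\epsilon^{\nu\sigma\tau\alpha\beta}\Gamma_{\alpha\beta}$ reduces this to a linear combination of the basis $\{\1,\Gamma^\nu,\Gamma^{\nu\sigma}\}$ of $\End(\Sigma)$, so the vanishing of $\Gamma^\nu R^\eD_{\mu\nu}$ is equivalent to the separate vanishing of each of the three coefficients. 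Splitting each further into its $\epsilon_{AB}$ part (using $L^\wedge, M^\wedge, N^\wedge$) and its $\fsp(1)$ part (using $L^\odot, M^\odot, N^\odot$) yields exactly six independent equations, which I expect to correspond one-for-one with \eqref{eq:sugra-maxwell}--\eqref{eq:F-comm-zero}.

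For the $\epsilon_{AB}$ sector: the coefficient of $\1$ is the trace $M^\wedge{}_{\mu\nu}{}^\nu$, which using $C\indices{_\alpha^\alpha}=0$ collapses to $\nabla^\nu C_{\nu\mu}$ minus the $\epsilon CC$ source term, giving \eqref{eq:sugra-maxwell}. The coefficient of $\Gamma^\nu$ is $L^\wedge{}_\mu{}^\sigma{}_{\sigma\tau}$ (together with the vanishing $N^\wedge$); the algebraic Bianchi identity $R_{\mu[\nu\sigma\tau]}=0$ promotes the Riemann trace to the Ricci tensor, and the traces of the $CC$ and $F\cdot F$ bilinears give exactly the remaining terms of \eqref{eq:sugra-einstein}. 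The coefficient of $\Gamma^{\nu\sigma}$ combines $M^\wedge{}_{\mu[\nu\sigma]}$ with the $\epsilon$-dual of $L^\wedge{}_{\mu[\nu\sigma\tau]}$: the $\epsilon\nabla C$ piece of $L^\wedge$ dualizes to an expression proportional to $\nabla_{[\mu}C_{\nu\sigma]}$, while the algebraic $CC$ and $F\cdot F$ pieces combine with the $\epsilon CC$ piece of $M^\wedge$ via the identities of Appendix~\ref{sec:tensor-id} to leave just \eqref{eq:C-closed-components}.

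An analogous decomposition handles the $\fsp(1)$ sector. The coefficient of $\1$ is $M^\odot{}_{\mu\nu}{}^\nu$; its $\epsilon[F,F]$ piece vanishes because $\epsilon$ is traced on a repeated index, and what remains is $C_{\mu\alpha}F^\alpha$, giving \eqref{eq:C-F-contraction}. The coefficient of $\Gamma^\nu$ collects the trace of $L^\odot$ and the full $N^\odot_{\mu\nu}$; the $\nabla F$ skew-symmetrization there combines with the $\epsilon CF$ and $[F,F]$ contributions to yield \eqref{eq:C-F-eom}. Finally, the coefficient of $\Gamma^{\nu\sigma}$ is built from $M^\odot{}_{\mu[\nu\sigma]}$ and the $\epsilon$-dual of $L^\odot{}_{\mu[\nu\sigma\tau]}$, and collapses --- after using \eqref{eq:C-F-contraction} and \eqref{eq:C-F-eom} already established --- to \eqref{eq:F-comm-zero}.

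The main obstacle is bookkeeping. Every rank-$2$ equation is fed by three distinct sources (the antisymmetric part of $M$, the double trace of $L$, and the rank-$3$ part of $L$ dualized via the five-dimensional volume form), and several $\epsilon$-contraction identities from Appendix~\ref{sec:tensor-id} are required to verify that the various algebraic $CC$, $CF$, $F\cdot F$ and $[F,F]$ pieces recombine exactly into the stated equations without leaving residual constraints. The rank-$1$ equations, and the Einstein equation in particular, require careful tracking of signs and coefficients among the Riemann, $CC$ and $F\cdot F$ traces inside $L^\wedge$.
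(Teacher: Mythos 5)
Your overall strategy is the same as the paper's: expand the Clifford trace in the basis $\{\1, \Gamma^\nu, \Gamma^{\nu\sigma}\}$ (using $\Gamma^{\nu\sigma\tau}\sim\epsilon^{\nu\sigma\tau\alpha\beta}\Gamma_{\alpha\beta}$), split each coefficient into its $\epsilon_{AB}$ and $\fsp(1)$ parts, then decompose each resulting rank-2 or rank-3 tensor equation into $\fso(V)$-irreducible components with the help of the Appendix~\ref{sec:tensor-id} identities. However, your claimed ``one-for-one'' correspondence between the six Clifford-sector coefficients and the six theorem equations is not correct, and the $\fsp(1)$ sector in particular would not go through as you describe.

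The issue is that in the $\Gamma^{\nu\sigma}$, $\fsp(1)$ coefficient $-\tfrac{1}{4}\epsilon_{\alpha\beta\gamma\sigma\tau}L^\odot{}_\mu{}^{\alpha\beta\gamma} + M^\odot{}_{\mu[\sigma\tau]}$, the $[F,F]$ contributions from $L^\odot$ and $M^\odot$ cancel identically, so this coefficient contains no $[F,F]$ term at all — it cannot yield \eqref{eq:F-comm-zero}. What it actually produces (after irreducible decomposition) is a totally skew relation between $C\wedge F$ and $\epsilon\nabla F$, plus a rederivation of \eqref{eq:C-F-contraction} from its trace. Conversely, the $\Gamma^\nu$, $\fsp(1)$ coefficient $L^\odot{}_\mu{}^\alpha{}_{\alpha\sigma} + N^\odot{}_{\mu\sigma}$ does contain a $[F,F]$ term (from $N^\odot$), so its skew part is a genuine coupling of $[F,F]$, $\nabla_{[\mu}F_{\nu]}$ and $\epsilon CF$, not just \eqref{eq:C-F-eom}. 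The correct untangling is the reverse of what you wrote: the totally skew part of the $\Gamma^{\nu\sigma}$ coefficient, contracted with $\epsilon$, yields \eqref{eq:C-F-eom} (and the symmetric part of the $\Gamma^\nu$ coefficient is also needed to establish $\nabla_{(\mu}F_{\nu)}=0$, since the right-hand side of \eqref{eq:C-F-eom} is antisymmetric); then substituting \eqref{eq:C-F-eom} into the skew part of the $\Gamma^\nu$ coefficient finally isolates \eqref{eq:F-comm-zero}. So the $\fsp(1)$ sector requires solving a genuinely coupled linear system, not reading off one equation per Clifford coefficient. The $\epsilon_{AB}$ sector has a milder version of the same phenomenon: the $\Gamma^{\nu\sigma}$ coefficient there reproduces \eqref{eq:C-closed-components} only in its totally skew part, and its trace redundantly reproduces \eqref{eq:sugra-maxwell}, while the ``elbow'' vanishes identically by the Appendix identities.
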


Equation~\eqref{eq:C-closed-components} is simply \(dC=0\), or \(C\) is
closed, consistent with \(C\) being the field strength tensor of a
one-form potential as in
supergravity. Equation~\eqref{eq:C-F-contraction} is simply
\(\iota_{F\indices{_{AB}}}C=0\). Equation~\eqref{eq:sugra-maxwell} is
the Maxwell-like supergravity bosonic equation of motion, while if we
set \( F=0\), equation~\eqref{eq:sugra-einstein} becomes the
Einstein-like equation. Equation~\eqref{eq:C-F-eom} provides a third
equation of motion wherever \( F\neq 0\).  Writing \(\Delta\)-indices,
we also learn that each \(F\indices{_{AB}}\) is a Killing vector field
and (from equation~\eqref{eq:F-comm-zero}) that the \( F_\mu\) commute
everywhere under the \(\fsp(1)\) commutator.

\begin{proof}
Using \(L,M,N\) as defined above,
\begin{equation}\label{eq:leftgtrace}
\begin{split}
	\Gamma\indices{^\nu}{R^\eD}\indices{_{\mu\nu}}
		&= \tfrac{1}{2}L\indices{_{\mu\nu\sigma\tau}}\qty
			(\Gamma\indices{^{\nu\sigma\tau}}+2\eta\indices{^{\nu[\sigma}}\Gamma\indices{^{\tau]}})
		+ M\indices{_{\mu\nu\sigma}}\qty(\Gamma\indices{^{\nu\sigma}}+\eta\indices{^{\nu\sigma}})
		+ N\indices{_{\mu\nu}}\Gamma\indices{^\nu},
\\
		&= \qty(
			-\tfrac{1}{4}\epsilon\indices{_{\alpha\beta\gamma\sigma\tau}}
				L\indices{_\mu^{\alpha\beta\gamma}}
			+ M\indices{_{\mu\sigma\tau}}
			)\Gamma\indices{^{\sigma\tau}}
		+ \qty(
			L\indices{_\mu^\alpha_\alpha_\sigma} 
			+ N\indices{_{\mu\sigma}}
			)\Gamma\indices{^\sigma}
		+ M\indices{_\mu^\alpha_\alpha}		
\end{split}
\end{equation}

For the first term we have
\begin{align}
	{L^\wedge}\indices{_{\mu[\alpha\beta\gamma]}}
	& = 3C\indices{_{\mu[\alpha}}C\indices{_{\beta\gamma]}}
	+ \tfrac{1}{4}\qty(
		\epsilon\indices{_{\kappa\lambda\alpha\beta\gamma}}\nabla_{\mu}C\indices{^{\kappa\lambda}}
		- \epsilon\indices{_{\mu\kappa\lambda[\alpha\beta}}\nabla_{\gamma]}C\indices{^{\kappa\lambda}}
		),
\\
	{L^\odot}\indices{_{\mu[\alpha\beta\gamma]}} & =
		\tfrac{1}{4}\eta\indices{_{\mu[\alpha}}\comm{ F_{\beta}}{ F_{\gamma]}}
		+ \tfrac{1}{2}\epsilon\indices{_{\mu\kappa\lambda[\alpha\beta}}
			C\indices{^{\kappa\lambda}} F\indices{_{\gamma]}}
		- \tfrac{1}{4}\eta\indices{_{\mu[\alpha}}\epsilon\indices{_{\beta\gamma]\kappa\lambda\rho}}
			C\indices{^{\kappa\lambda}} F\indices{^\rho}
		+ \tfrac{1}{2}\eta\indices{_{\mu[\alpha}}\nabla_{\beta} F\indices{_{\gamma]}},
\end{align}
thus
\begin{align}
	-\tfrac{1}{4}\epsilon\indices{_{\alpha\beta\gamma\sigma\tau}}
		{L^\wedge}\indices{_\mu^{\alpha\beta\gamma}} 
	&= -\tfrac{3}{4}\epsilon\indices{_{\alpha\beta\gamma\sigma\tau}}
			C\indices{_\mu^\alpha}C\indices{^{\beta\gamma}}
		-\tfrac{1}{2}\nabla_\mu C\indices{_{\sigma\tau}} 
		-\tfrac{1}{2}\eta\indices{_{\mu[\sigma}}\nabla^\gamma C\indices{_{|\gamma|\tau]}},
\\
	-\tfrac{1}{4}\epsilon\indices{_{\alpha\beta\gamma\sigma\tau}}
	{L^\odot}\indices{_\mu^{\alpha\beta\gamma}} 
	&= - \tfrac{1}{16}\epsilon\indices{_{\mu\alpha\beta\sigma\tau}}\comm{ F^\alpha}{ F^\beta}
		- \tfrac{1}{4} C\indices{_{\sigma\tau}} F\indices{_\mu}
		+ \tfrac{1}{2} C\indices{_{\mu[\sigma}} F\indices{_{\tau]}}
		- \eta_{\mu[\sigma} C\indices{_{\tau]\gamma}} F\indices{^\gamma}
		-\tfrac{1}{8} \epsilon\indices{_{\mu\alpha\beta\sigma\tau}}
			\nabla\indices{^\alpha} F\indices{^\beta}.
\end{align}

The trace in the second term is
\begin{align}
	{L^\wedge}\indices{_\mu^\alpha_\alpha_\sigma} 
	& = \tfrac{1}{2}\qty[
		R\indices{_{\mu\sigma}}
		+ \qty(
			6C\indices{_{\mu\alpha}}C\indices{_\sigma^\alpha}
			- \eta\indices{_{\mu\sigma}}C\indices{^{\alpha\beta}}C\indices{_{\alpha\beta}}
			)
		- \tfrac{3}{8}\qty(
			(F_\mu\cdot F_\sigma)
			- \eta\indices{_{\mu\sigma}} F^2
			)
		+ \tfrac{1}{2}\epsilon\indices{_{\mu\sigma\alpha\beta\gamma}}\nabla^\alpha
			C\indices{^{\beta\gamma}}
		],
\\
	{L^\odot}\indices{_\mu^\alpha_\alpha_\sigma} 
	& =
	\tfrac{3}{8}\comm{ F_{\mu}}{ F_\sigma}
	- \tfrac{1}{8}\epsilon\indices{_{\mu\sigma\alpha\beta\gamma}}
		C\indices{^{\alpha\beta}} F\indices{^\gamma}
	+ \tfrac{1}{4}\qty(
		\eta\indices{_{\mu\sigma}}\nabla_{\alpha} F\indices{^\alpha} + 3 \nabla_\mu F_\sigma
		).
\end{align}

We also need
\begin{align}
	{M^\wedge}\indices{_{\mu[\sigma\tau]}}
	& = -\tfrac{1}{2}\epsilon\indices{_{\mu\alpha\beta\gamma[\sigma}} 
		C\indices{_{\tau]}^\alpha}C\indices{^{\beta\gamma}}
	+ \tfrac{1}{2}\epsilon\indices{_{\alpha\beta\gamma\sigma\tau}} 
		C\indices{^\alpha_\mu}C\indices{^{\beta\gamma}}
	+ \nabla\indices{_\mu}C\indices{_{\sigma\tau}}
	+ \nabla\indices{_{[\sigma}}C\indices{_{\tau]\mu}},
\\
	{M^\wedge}\indices{_{\mu}^{\alpha}_{\alpha}}
	& = -\tfrac{1}{2}\epsilon\indices{_{\mu\alpha\beta\gamma\delta}}
		C\indices{^{\alpha\beta}}C\indices{^{\gamma\delta}}
	+ \nabla\indices{^{\alpha}}C\indices{_{\alpha\mu}},
\\
	{M^\odot}\indices{_{\mu[\sigma\tau]}} 
	& = \tfrac{1}{16}\epsilon\indices{_{\mu\sigma\tau\alpha\beta}} \comm{ F^\alpha}{ F^\beta}
		- C\indices{_{\mu[\sigma}} F\indices{_{\tau]}}
		+ \tfrac{1}{2}\eta\indices{_{\mu[\sigma}}C\indices{_{\tau]\alpha}} F\indices{^\alpha},
\\
	{M^\odot}\indices{_{\mu}^{\alpha}_{\alpha}}
		&= C\indices{_{\mu\alpha}} F\indices{^\alpha}.
\end{align}

It follows immediately from equation~\eqref{eq:leftgtrace} that \(\Gamma\indices{^\nu}{R^\eD}\indices{_{\mu\nu}}\)  vanishes if and only if
\begin{equation}
\begin{aligned}
	-\tfrac{1}{4}\epsilon\indices{_{\alpha\beta\gamma\sigma\tau}}
		{L^\wedge}\indices{_\mu^{\alpha\beta\gamma}}
	+ {M^\wedge}\indices{_{\mu[\sigma\tau]}}
	&= 0,
&
	-\tfrac{1}{4}\epsilon\indices{_{\alpha\beta\gamma\sigma\tau}}
		{L^\odot}\indices{_\mu^{\alpha\beta\gamma}}
	+ {M^\odot}\indices{_{\mu[\sigma\tau]}}
	&= 0,
\\	
	{L^\wedge}\indices{_\mu^\alpha_\alpha_\sigma}
	&= 0,
&
	{L^\odot}\indices{_\mu^\alpha_\alpha_\sigma} 
	+ {N^\odot}\indices{_{\mu\sigma}}
	&= 0,
\\	
	{M^\wedge}\indices{_\mu^\alpha_\alpha}
	&=0,
&
	{M^\odot}\indices{_\mu^\alpha_\alpha}
	&=0.
\end{aligned}
\end{equation}

Expanding each of these out, we obtain equations~\eqref{eq:sugra-maxwell} and \eqref{eq:C-F-contraction} as well as
\begin{gather}
	-\tfrac{1}{4}\epsilon\indices{_{\alpha\beta\gamma\sigma\tau}}
		C\indices{_\mu^\alpha}C\indices{^{\beta\gamma}}
	-\tfrac{1}{2}\epsilon\indices{_{\mu\alpha\beta\gamma[\sigma}} 
		C\indices{_{\tau]}^\alpha}C\indices{^{\beta\gamma}}
	+\tfrac{3}{2}\nabla\indices{_{[\mu}} C\indices{_{\sigma\tau]}} 
	-\tfrac{1}{2}\eta\indices{_{\mu[\sigma}}\nabla^\gamma C\indices{_{|\gamma|\tau]}}
	 = 0,
	\label{eq:gamma-trace-w2w2}
\\
	R\indices{_{\mu\sigma}}
	+ \qty(
		6C\indices{_{\mu\alpha}}C\indices{_\sigma^\alpha}
		- \eta\indices{_{\mu\sigma}}C\indices{^{\alpha\beta}}C\indices{_{\alpha\beta}}
		)
	- \tfrac{3}{8}\qty((F_\mu\cdot F_\sigma)- \eta\indices{_{\mu\sigma}} F^2)
	+ \tfrac{1}{2}\epsilon\indices{_{\mu\sigma\alpha\beta\gamma}}\nabla^\alpha
		C\indices{^{\beta\gamma}}
	 = 0,
	\label{eq:gamma-trace-w1w2}
\\
	- \tfrac{3}{4} C\indices{_{[\sigma\tau}} F\indices{_{\mu]}}
	- \tfrac{1}{2}\eta_{\mu[\sigma} C\indices{_{\tau]\gamma}} F\indices{^\gamma}
	-\tfrac{1}{8} \epsilon\indices{_{\mu\alpha\beta\sigma\tau}}
		\nabla\indices{^\alpha} F\indices{^\beta}
	 = 0,
	 \label{eq:gamma-trace-w2s2}
\\
	\tfrac{3}{16}\comm{ F_{\mu}}{ F_\sigma}
	-\tfrac{1}{4}\nabla\indices{_{[\mu}} F\indices{_{\sigma]}}
	- \tfrac{1}{8}\epsilon\indices{_{\mu\sigma\alpha\beta\gamma}}
		C\indices{^{\alpha\beta}} F\indices{^\gamma}
	+ \tfrac{1}{4}\qty(
		\eta\indices{_{\mu\sigma}}\nabla_{\alpha} F\indices{^\alpha} + 3 \nabla\indices{_{(\mu}} F\indices{_{\sigma)}}
		)
	 = 0.
	 \label{eq:gamma-trace-w1s2}
\end{gather}

We'll tackle these equations by splitting each of them into irreducible \(\mathfrak{so}(V)\)-module components. Taking equation~\eqref{eq:gamma-trace-w2w2} first, its full skew-symmetrisation is
\begin{equation}
	\tfrac{1}{4}\epsilon\indices{_{\alpha\beta\gamma[\sigma\tau}}
		C\indices{_{\mu]}^\alpha}C\indices{^{\beta\gamma}}
	+\tfrac{3}{2}\nabla\indices{_{[\mu}} C\indices{_{\sigma\tau]}}
	 = 0.
\end{equation}
The first term vanishes identically by \eqref{eq:tensor-id-AA-1} from Appendix~\ref{sec:tensor-id}, so this is equivalent to equation~\eqref{eq:C-closed-components}. The trace of \eqref{eq:gamma-trace-w2w2} is
\begin{equation}
	\tfrac{1}{2}\epsilon\indices{_{\tau\alpha\beta\gamma\delta}}
		C\indices{^{\alpha\beta}}C\indices{^{\gamma\delta}}
	-\nabla^\alpha C\indices{_{\alpha\tau}}
	 = 0,
\end{equation}
which gives us equation~\eqref{eq:sugra-maxwell} again. What remains of \eqref{eq:gamma-trace-w2w2} after subtracting off the skew and trace parts is
\begin{equation}
	-\tfrac{1}{4}\epsilon\indices{_{\alpha\beta\gamma\sigma\tau}}
		C\indices{_\mu^\alpha}C\indices{^{\beta\gamma}}
	-\tfrac{1}{2}\epsilon\indices{_{\mu\alpha\beta\gamma[\sigma}} 
		C\indices{_{\tau]}^\alpha}C\indices{^{\beta\gamma}}
	-\tfrac{1}{4}\eta\indices{_{\mu[\sigma}}\epsilon\indices{_{\tau]\alpha\beta\gamma\delta}}
		C\indices{^{\alpha\beta}}C\indices{^{\gamma\delta}}
	 = 0,
\end{equation}
which using the tensor identities \eqref{eq:tensor-id-AA-1} and \eqref{eq:tensor-id-AA-2} from Appendix~\ref{sec:tensor-id} is trivial. The skew-symmetric part of equation~\eqref{eq:gamma-trace-w1w2} gives equation~\eqref{eq:C-closed-components} again. The symmetric part is equation~\eqref{eq:sugra-einstein}. The totally skew-symmetric part of equation~\eqref{eq:gamma-trace-w2s2} is
\begin{equation}
	- \tfrac{3}{4} C\indices{_{[\sigma\tau}} F\indices{_{\mu]}}
	-\tfrac{1}{8} \epsilon\indices{_{\mu\sigma\tau\alpha\beta}}
		\nabla\indices{^\alpha} F\indices{^\beta}
	 = 0.
	\label{eq:gamma-trace-w2s2w3}
\end{equation}
The trace gives equation~\eqref{eq:C-F-contraction} again. The remaining part vanishes identically. The skew-symmetric part of equation~\eqref{eq:gamma-trace-w1s2} is
\begin{equation}
	\tfrac{3}{16}\comm{ F_{\mu}}{ F_\sigma}
	-\tfrac{1}{4}\nabla\indices{_{[\mu}} F\indices{_{\sigma]}}
	- \tfrac{1}{8}\epsilon\indices{_{\mu\sigma\alpha\beta\gamma}}
		C\indices{^{\alpha\beta}} F\indices{^\gamma}
	 = 0,
	\label{eq:gamma-trace-w1s2w2}
\end{equation}
and the symmetric part is
\begin{equation}
	\tfrac{1}{4}\qty(
		\eta\indices{_{\mu\sigma}}\nabla_{\alpha} F\indices{^\alpha} 
		+ 3 \nabla\indices{_{(\mu}} F\indices{_{\sigma)}}
		)
	 = 0.
\end{equation}
Tracing this equation gives \(\nabla_\alpha F^\alpha=0\), which can be
back-substituted to give \(\nabla_{(\mu}F_{\sigma)}=0\). Thus
contracting equation~\eqref{eq:gamma-trace-w2s2w3} with
\(\epsilon^{\kappa\lambda\mu\sigma\tau}\) yields
equation~\eqref{eq:C-F-eom}. Substituting that equation into
equation~\eqref{eq:gamma-trace-w1s2w2} finally produces
equation~\eqref{eq:F-comm-zero}.
\end{proof}

\subsection{Flatness of the superconnection and maximal supersymmetry}
\label{sec:flat-connection}

We now turn to finding conditions on \((M,g,C,F)\) equivalent to the
vanishing of the curvature of the superconnection, which by
Proposition~\ref{prop:max-susy-vanishing-curvature} is necessary for the
background to be maximally supersymmetric. We will prove the following
theorem.

\begin{theorem}\label{thm:maximalsusy}
  Let \((M,g,C,F)\) be a 5-dimensional background with superconnection
  \(\eD\) given by Equation~\eqref{eq:superconnection}. If it is
  maximally supersymmetric then \(C=0\) or \(F=0\), and
  \begin{enumerate}
  \item If \(C=0\) and \(F=0\) then the Riemann curvature vanishes.
  \item If \(C = 0\) and \(F\neq0\) then \(F = \varphi\otimes r\) for some parallel one-form \(\varphi\) and some \(r\in\fsp(1)\), and the Riemann curvature is given by
    \begin{equation}
      \label{eq:max-susy-riemann-F}
      R_{\mu\nu\sigma\tau}
      = \varphi^2 \eta\indices{_{\mu[\sigma}}\eta\indices{_{|\nu|\tau]}}
      - \qty(\eta\indices{_{\mu[\sigma}}\varphi_{|\nu|}\varphi_{\tau]}
      - \eta\indices{_{\nu[\sigma}}\varphi_{|\mu|}\varphi_{\tau]}).
    \end{equation}
  \item If \(F=0\) and \(C\neq0\) then \(C\) is a closed 2-form such that
    \begin{equation}
      \label{eq:strong-sugra-maxwell-trace}
      \nabla_\sigma C_{\mu\nu} 
      = \tfrac{1}{2}\eta\indices{_{\sigma[\mu}}\nabla\indices{^\alpha} C\indices{_{|\alpha|\nu]}}
      =\tfrac{1}{4}\eta\indices{_{\sigma[\mu}}\epsilon\indices{_{\nu]\alpha\beta\gamma\delta}}
      C^{\alpha\beta}C^{\gamma\delta},
    \end{equation}
    and the Riemann curvature is given by
    \begin{equation}
      \label{eq:max-susy-riemann-C}
      R_{\mu\nu\sigma\tau} 
      = 2 C\indices{_{\mu\nu}}C\indices{_{\sigma\tau}}
      + 2 C\indices{_{\mu[\sigma}}C\indices{_{|\nu|\tau]}}
      + 2 \eta\indices{_{\mu[\sigma}}C\indices{_{|\nu\alpha|}}C\indices{_{\tau]}^\alpha}
      - 2 \eta\indices{_{\nu[\sigma}}C\indices{_{|\mu\alpha|}}C\indices{_{\tau]}^\alpha}
      - C\indices{^{\alpha\beta}}C\indices{_{\alpha\beta}}
      \eta\indices{_{\mu[\sigma}}\eta\indices{_{|\nu|\tau]}}.
    \end{equation}
  \end{enumerate}
\end{theorem}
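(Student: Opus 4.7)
The plan is to exploit the explicit component decomposition \eqref{eq:Lwedge}--\eqref{eq:Nodot}: vanishing of $R^\eD$ is equivalent to the vanishing of all six tensors $L^\wedge, M^\wedge, N^\wedge, L^\odot, M^\odot, N^\odot$ (and $N^\wedge$ is already identically zero). Since flatness of $\eD$ implies vanishing of its Clifford trace, I will begin by invoking Theorem~\ref{thm:gamma-trace-zero} to assume equations \eqref{eq:sugra-maxwell}--\eqref{eq:F-comm-zero} from the outset; in particular $\comm{F_\mu}{F_\nu}=0$, $\nabla^\alpha C_{\alpha\mu}=\tfrac12\epsilon_{\mu\alpha\beta\gamma\delta}C^{\alpha\beta}C^{\gamma\delta}$, $dC=0$, $C_{\mu\alpha}F^\alpha=0$, and the equation \eqref{eq:C-F-eom} for $\nabla F$.

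The decisive step is to establish the alternative $C=0$ or $F=0$. For this I would feed $\comm{F_\alpha}{F_\beta}=0$ and $C_{\mu\alpha}F^\alpha=0$ into ${M^\odot}=0$ from \eqref{eq:Modot}, which then collapses to $C_{\mu\nu}F_\sigma=0$ pointwise, so at every point either $C$ or $F$ vanishes. A brief local-continuity argument (e.g.\ passing to the open dense set where one of them is nonzero on each component of $M$) then shows this alternative holds globally on each connected component.

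Once the alternative is in hand I would treat the three cases. If $C=F=0$, the residue of ${L^\wedge}=0$ in \eqref{eq:Lwedge} is $R_{\mu\nu\sigma\tau}=0$. If $C=0$ but $F\neq0$, then ${N^\odot}=0$ from \eqref{eq:Nodot}, combined with $\comm{F_\mu}{F_\nu}=0$, yields $\nabla_{[\mu}F_{\nu]}=0$; the symmetric part vanishes automatically from \eqref{eq:C-F-eom} since $C=0$, so $\nabla F=0$. Because $[F_\mu,F_\nu]=0$ and the values $F_\mu\in\fsp(1)$ commute pairwise, their span in $\fsp(1)$ is at most one-dimensional, so $F=\varphi\otimes r$ with $r\in\fsp(1)$ fixed and $\varphi$ a parallel one-form (absorbing the normalisation of $r$ into $\varphi$). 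Substituting this into ${L^\wedge}=0$ with $C=0$ gives \eqref{eq:max-susy-riemann-F}. In the remaining case $F=0$, $C\neq0$, I would combine ${M^\wedge}=0$ from \eqref{eq:Mwedge} with $dC=0$ and the trace identity \eqref{eq:sugra-maxwell} and use the tensor identities of Appendix~\ref{sec:tensor-id} to rearrange the result into \eqref{eq:strong-sugra-maxwell-trace}; then ${L^\wedge}=0$ with $F=0$ delivers \eqref{eq:max-susy-riemann-C} after simplification using \eqref{eq:strong-sugra-maxwell-trace} to eliminate $\nabla C$ terms against the $\epsilon\nabla C$ contribution in \eqref{eq:Lwedge}.

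The main obstacle I foresee is the $F=0$, $C\neq0$ branch: extracting the sharp condition \eqref{eq:strong-sugra-maxwell-trace} from the raw equation $2\nabla_{[\mu}C_{\nu]\sigma}=\epsilon_{\alpha\beta\gamma\sigma[\mu}C^{\alpha\beta}C^\gamma{}_{\nu]}$ together with $dC=0$ requires matching two independent antisymmetrisations of $\nabla C$ and invoking the five-dimensional $\epsilon$-identities of Appendix~\ref{sec:tensor-id} to rewrite the right-hand side as $\tfrac12 \eta_{\sigma[\mu}\nabla^\alpha C_{|\alpha|\nu]}$; the subsequent reduction of \eqref{eq:Lwedge} to \eqref{eq:max-susy-riemann-C} is then a bookkeeping exercise, but a nontrivial one because several of the contractions cancel only after using \eqref{eq:strong-sugra-maxwell-trace}.
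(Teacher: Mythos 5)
Your proposal follows the paper's proof essentially step for step: invoke the Clifford-trace conditions of Theorem~\ref{thm:gamma-trace-zero}, read $C_{\mu\nu}F_\sigma=0$ off from $M^\odot=0$, deduce $\nabla F=0$, derive $F=\varphi\otimes r$ from commutativity plus parallelism, and extract the Riemann tensor from $L^\wedge=0$ in each branch (using $M^\wedge=0$ with the appendix identities to get the sharp $\nabla C$ formula in the $F=0$ case). That is the same decomposition and the same key lemmas.

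The one genuine gap is the globalization of the dichotomy $C=0$ or $F=0$. The ``local-continuity argument'' you sketch (passing to an open dense set where one of them is nonzero) does not by itself rule out the possibility that $C$ and $F$ have disjoint supports, with both vanishing on a region in between: the pointwise condition $C_{\mu\nu}F_\sigma=0$ would then be satisfied everywhere without either tensor vanishing identically. What actually closes this loophole is establishing $\nabla F=0$ \emph{globally before} the case split. That is available to you: $N^\odot=0$ together with $[F_\mu,F_\nu]=0$ gives $\nabla_{[\mu}F_{\nu]}=0$, and the right-hand side of equation~\eqref{eq:C-F-eom} is skew-symmetric in $\mu\nu$ irrespective of whether $C$ vanishes, so $\nabla_{(\mu}F_{\nu)}=0$ holds always. (Your remark that the symmetric part vanishes ``since $C=0$'' is therefore a red herring, and is a symptom of having placed the $\nabla F=0$ argument inside the $C=0$ branch rather than before it.) Once $F$ is parallel on the connected manifold, $C_{\mu\nu}F_\sigma=0$ at one point where $C\neq 0$ forces $F=0$ there and hence identically, and the alternative is genuine. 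With that reordering your argument matches the paper's exactly.
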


\begin{proof}
The curvature \({R^\eD}\indices{_{\mu\nu}}\) vanishes if and only if each of the components, given by equations~\eqref{eq:Lwedge}-\eqref{eq:Nodot}, vanishes separately. We first simplify the problem by noting that if \({R^\eD}\) vanishes, in particular its Clifford trace vanishes, so we may use the conditions derived at the end of the previous subsection. We will also need some of the identities of Appendix~\ref{sec:tensor-id}. Indeed, using \(\comm{ F_\mu}{ F_\nu}=0\) and \(C_{\mu\alpha} F^\alpha=0\), the vanishing of \(M^\odot\) and \(N^\odot\) gives
\begin{align}
	\nabla\indices{_{[\mu}} F\indices{_{\nu]}} &= 0	&&\text{and} 	& C_{\mu\nu} F_\sigma &= 0,
\end{align}
so in particular, either \( F=0\) or \(C=0\) and since we already have \(\nabla_{(\mu}F_{\nu)}=0\) from the zero Clifford trace equations, \(F\) is parallel. Now \(L^\odot=0\) exactly. Using \(dC=0\) and equation~\eqref{eq:tensor-id-AA-1}, the vanishing of \(M^{\wedge}\) is equivalent to 
\begin{equation}
	\label{eq:strong-sugra-maxwell}
	\nabla_\sigma C_{\mu\nu} 
		= \tfrac{1}{2}\epsilon\indices{_{\alpha\beta\gamma\mu\nu}}
			C\indices{^{\alpha\beta}}C\indices{^\gamma_\sigma},
\end{equation}
and then using equations~\eqref{eq:tensor-id-AA-2} and \eqref{eq:sugra-maxwell} (which is in fact the trace of equation~\eqref{eq:strong-sugra-maxwell}) yields equation~\eqref{eq:strong-sugra-maxwell-trace}. Now using equation~\eqref{eq:strong-sugra-maxwell-trace}, \(L^\wedge=0\) if and only if
\begin{equation}
\begin{split}
	R\indices{_{\mu\nu\sigma\tau}} 
			&= 2 C\indices{_{\mu\nu}}C\indices{_{\sigma\tau}}
			+ 2 C\indices{_{\mu[\sigma}}C\indices{_{|\nu|\tau]}}
			+ 2 \eta\indices{_{\mu[\sigma}}C\indices{_{|\nu\alpha|}}C\indices{_{\tau]}^\alpha}
			- 2 \eta\indices{_{\nu[\sigma}}C\indices{_{|\mu\alpha|}}C\indices{_{\tau]}^\alpha}
			- C\indices{^{\alpha\beta}}C\indices{_{\alpha\beta}}
				\eta\indices{_{\mu[\sigma}}\eta\indices{_{|\nu|\tau]}}
\\
		& \qquad - \tfrac{1}{4}\qty(
			\eta\indices{_{\mu[\sigma}}\qty( F_{|\nu|}\cdot F_{\tau]})
			- \eta\indices{_{\nu[\sigma}}( F_{|\mu|}\cdot F_{\tau]})
			- F^2 \eta\indices{_{\mu[\sigma}}\eta\indices{_{|\nu|\tau]}}
			),
\end{split}
\end{equation}
which we note can be traced to give equation~\eqref{eq:sugra-einstein}. For the \(F=0\) case, we get equation~\eqref{eq:max-susy-riemann-C}. Finally, for the \(C=0\) case, since both \(\nabla F=0\) and \(\comm{F_\mu}{F_\nu}=0\), \(F=\varphi\otimes r\), for some one-form \(\varphi\) and some fixed \(r\in\fsp(1)\). In components, \(F_\mu^{AB}=\varphi_{\mu} r^{AB}\), so \(F_\mu\cdot F_\nu = \varphi_\mu\varphi_\nu r^{AB}r_{AB}\). We have \(r^{AB}r_{AB}=-\tr(rr)\), where this trace is taken in the vector representation of \(\fsp(1)\cong \fsu(2)\) on \(\CC^2\), and is therefore negative-definite. We can assume without loss of generality that \(r\neq 0\) (by choosing \(\varphi\)=0 if \(F=0\)), and then by rescaling \(r\) and \(\varphi\) we can also assume that \(r^{AB}r_{AB}=-\tr(rr)=4\), yielding equation~\eqref{eq:max-susy-riemann-F}.
\end{proof}

\section{The Killing superalgebra}
\label{sec:killing-superalg}

In this section we define and prove the existence of the supersymmetry algebra of a supersymmetric background. To do so, we will need the notion of the spinorial derivative, for which we will give a definition and state some properties. We will also need to upgrade our definitions of spinor bilinears from Section~\ref{sec:spin-conv} to bilinears of spinor fields and derive some of their differential properties.

\subsection{The spinorial Lie derivative}
\label{sec:spin-lie-der}

Throughout, let \((M,g)\) be a spin manifold with an associated spinor bundle \(\bbS\).

Recall that for any vector field \(X\), \(\nabla X\) defines an endomorphism on the \(C^\infty(M)\)-module of vector fields \(\fX(M)\) by \(Y \mapsto \nabla_Y X\); \(\nabla X\) is a section of \(\End(TM)\) with components \((\nabla X)\indices{^\mu_\nu} = \nabla_\nu X^\mu\). Furthermore, \(\nabla X\) is actually a section of  \(\mathfrak{so}(TM)\) if and only if \(X\) is a Killing vector field (this follows directly from the definition), in which case \(\nabla X\) has the action on spinors \((\nabla X) s = -\tfrac{1}{4}\nabla_\mu X_\nu \Gamma^{\mu\nu}s\). This allows us to make the following definition.

\begin{definition}[\cite{MR0312413,FigueroaO'Farrill:1999va}]
The \textbf{spinorial Lie derivative} of a spinor field \(s\) along the Killing vector field \(X\) is  given by
\begin{equation}
	\eL_X s = \nabla_X s - (\nabla X) s.
\end{equation}
\end{definition}
Locally, we have
\begin{equation}
	\eL_X s = \nabla_X s + \tfrac{1}{4} \nabla_\mu X_\nu \Gamma^{\mu\nu} s.
\end{equation}

This obeys the Leibniz rule: for a smooth function \(f\) and a spinor field \(s\),
\begin{equation}
\begin{split}
	\eL_X (f s) 
		&= \nabla_X(fs) - (\nabla X) (fs) \\
		&= X(f)s + f\nabla_X s - f((\nabla X) s)\\
		&= X(f)s + f\eL_X s.
\end{split}
\end{equation}

We will not prove the following lemma; we refer the reader to \cite{MR0312413,FigueroaO'Farrill:1999va}.

\begin{lemma}\label{prop:liederprops}
The spinorial Lie derivative obeys the following properties:
\begin{enumerate}
	\item Representation of the Lie algebra of vector fields on spinor fields:
	\begin{equation}
		\eL_X\eL_Ys = \eL_{\eL_X Y} s + \eL_Y \eL_X s
	\end{equation}
	for all Killing vectors \(X,Y\) and all \(s\in\Gamma(\bbS)\),
	\item Leibniz rule with respect to the Clifford action:
	\begin{equation}
		\eL_X (\Phi \cdot s) = (\eL_X\Phi)\cdot s + \Phi \cdot (\eL_X s)
	\end{equation}
	for all Killing vectors \(X\) and all \(\Phi \in \Omega^\bullet(M)\), \(s\in\Gamma(\bbS)\),
	\item Compatibility with the Levi-Civita connection:
	\begin{equation}
		\eL_X \nabla_Y s = \nabla_{\eL_X Y}s + \nabla_Y \eL_X s
	\end{equation}
	for all Killing vectors \(X\) and all \(Y\in\fX(M)\), \(s\in\Gamma(\bbS)\).
\end{enumerate}
\end{lemma}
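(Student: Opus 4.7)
The plan is to establish the three properties in the order (2), (3), (1), since (1) can be derived by combining the first two. These are the standard properties of the Kosmann--Lichnerowicz spinorial Lie derivative, and the key ingredients are (a) the fact that $(\nabla X) \in \Gamma(\fso(TM))$ for any Killing vector field $X$, so that it acts naturally on all associated bundles via the appropriate representation of $\fso$; (b) the $\Spin$-equivariance of Clifford multiplication; and (c) the Killing-vector identity for second covariant derivatives.

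For property~(2), I would split $\eL_X(\Phi \cdot s) = \nabla_X(\Phi \cdot s) - (\nabla X)(\Phi \cdot s)$ and expand each term. The Leibniz rule for $\nabla_X$ on Clifford products, $\nabla_X(\Phi \cdot s) = (\nabla_X \Phi) \cdot s + \Phi \cdot \nabla_X s$, is a standard property of the spin lift of the Levi-Civita connection. For the endomorphism piece, any $A \in \fso(TM)$ acts as a derivation on Clifford products, $A(\Phi \cdot s) = (A_*\Phi) \cdot s + \Phi \cdot A s$, where $A_*$ denotes the natural action on forms induced from the vector representation; this follows because Clifford multiplication is a map of $\Spin$-representations. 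Specialising to $A = \nabla X$ and regrouping terms gives $(\eL_X \Phi) \cdot s + \Phi \cdot \eL_X s$.

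For property~(3), a direct expansion yields
\begin{equation*}
\eL_X \nabla_Y s - \nabla_Y \eL_X s = [\nabla_X, \nabla_Y] s + \bigl(\nabla_Y(\nabla X)\bigr) s.
\end{equation*}
Using the paper's convention $R^\nabla(X,Y) = \nabla_{[X,Y]} - [\nabla_X,\nabla_Y]$ from Section~\ref{sec:curvature-convention}, the right-hand side equals $\nabla_{[X,Y]} s - R^\nabla(X,Y) s + (\nabla_Y(\nabla X)) s$. It therefore remains to identify $R^\nabla(X,Y) s$ with $(\nabla_Y(\nabla X)) s$ as endomorphisms of $\bbS$. This is precisely the Killing-vector identity $\nabla_\mu \nabla_\nu X_\rho = R_{\nu\rho\mu\sigma} X^\sigma$, obtained by differentiating $\nabla_{(\mu} X_{\nu)} = 0$ and invoking the algebraic Bianchi identity. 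Substituting into $\tfrac{1}{4}(\nabla_Y(\nabla X))_{\rho\sigma}\Gamma^{\rho\sigma}s$ recovers $\tfrac{1}{4}R_{\mu\nu\rho\sigma} X^\mu Y^\nu \Gamma^{\rho\sigma} s = R^\nabla(X,Y) s$.

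Finally, property~(1) follows by a direct computation of $[\eL_X, \eL_Y] s - \eL_{[X,Y]} s$. Expanding both $\eL_X \eL_Y s$ and $\eL_Y \eL_X s$ using the definition produces a Ricci-identity term $[\nabla_X,\nabla_Y] s$, a first-derivative term involving $(\nabla_X (\nabla Y) - \nabla_Y (\nabla X)) s$, and a commutator $[(\nabla X),(\nabla Y)] s$ in $\fso$. Comparing with $\eL_{[X,Y]} s = \nabla_{[X,Y]} s - (\nabla[X,Y]) s$ and using (i) that $[X,Y]$ is Killing (so $\nabla[X,Y] \in \Gamma(\fso(TM))$), (ii) the Ricci identity from~(3), and (iii) the Killing identity applied to $X$ and $Y$ to match the curvature term with the second-derivative terms, the difference collapses to zero. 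The main obstacle is bookkeeping: keeping sign conventions and the precise form of the Killing identity consistent with those of Section~\ref{sec:curvature-convention}; once this is pinned down, each of the three identities reduces to elementary tensor manipulations.
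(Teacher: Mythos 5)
The paper does not actually prove Lemma~\ref{prop:liederprops}; it defers to the references \cite{MR0312413,FigueroaO'Farrill:1999va}, so there is no in-paper argument to compare against. Assessed on its own merits, your proposal is sound. You correctly identify the essential ingredients: that \(\nabla X\) takes values in \(\fso(TM)\) for a Killing vector field \(X\), that Clifford multiplication is \(\Spin\)-equivariant (so that any \(A\in\fso\) acts as a derivation on Clifford products), the Leibniz rule for the spin lift of the Levi-Civita connection, and the second-order Killing identity. The order (2), (3), (1) is efficient, and proving (1) by direct expansion is the standard route.

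Two small points worth tightening in a fully written-out version. First, when verifying that \((\nabla_Y(\nabla X))s = R^\nabla(X,Y)s\) in step~(3), keep track of the sign in the paper's convention: since \((\nabla X)s = -\tfrac14 \nabla_\mu X_\nu\,\Gamma^{\mu\nu}s\), one has
\begin{equation}
  (\nabla_Y(\nabla X))s = -\tfrac14\,Y^\rho\nabla_\rho\nabla_\mu X_\nu\,\Gamma^{\mu\nu}s,
\end{equation}
and the Killing identity in the paper's curvature convention \(R^\nabla(X,Y)=\nabla_{[X,Y]}-[\nabla_X,\nabla_Y]\) is
\(\nabla_\rho\nabla_\mu X_\nu = R_{\mu\nu\rho\sigma}X^\sigma\); substituting and using the pair and antisymmetry properties of the Riemann tensor does give \(\tfrac14 R_{\mu\nu\sigma\tau}X^\mu Y^\nu\Gamma^{\sigma\tau}s = R^\nabla(X,Y)s\), so the form you quote is indeed the correct one here. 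Second, in step~(1), after the cross-terms \((\nabla X)(\nabla_Y s)\) and \((\nabla Y)(\nabla_X s)\) cancel, the remaining identification requires comparing \(\nabla[X,Y]\) (which is in \(\fso\) because \([X,Y]\) is Killing) with the sum of the first-derivative term and the \(\fso\)-commutator; this is exactly where the Killing identity applied to both \(X\) and \(Y\) is needed, so it is worth spelling out rather than leaving under the heading of ``bookkeeping''. With these details filled in, the argument is complete.
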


We have an additional result for backgrounds.

\begin{lemma}\label{lemma:liedersuperD}
If \((M,g,C,F)\) is a 5-dimensional background with superconnection \(\eD\),
\begin{equation}
	\eL_X \eD_Y s = \eD_{\eL_X Y}s + \eD_Y \eL_X s - (\eL_X\beta)_Y s
\end{equation}
for all Killing vectors \(X\) and all \(Y \in \fX(M)\), \(s\in\Gamma(\bbS)\).
\end{lemma}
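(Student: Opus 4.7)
The plan is to unpack the definition $\eD_Y s = \nabla_Y s - \beta_Y s$ and apply $\eL_X$ term by term, using the properties of the spinorial Lie derivative from Lemma~\ref{prop:liederprops} (in particular parts (2) and (3)) to distribute it across the various factors making up $\beta_Y s$. The only subtle point is to give a consistent meaning to $\eL_X \beta$; once that is done, everything falls out of a Leibniz-style calculation.

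First I would write
\begin{equation}
  \eL_X \eD_Y s = \eL_X \nabla_Y s - \eL_X (\beta_Y s),
\end{equation}
and immediately invoke Lemma~\ref{prop:liederprops}(3) to replace the first term by $\nabla_{\eL_X Y} s + \nabla_Y \eL_X s$. The remaining task is to differentiate $\beta_Y s$. Using the explicit formula
\begin{equation}
  \beta_Y s = \tfrac14 Y\cdot C\cdot s - \tfrac34 C\cdot Y\cdot s - \tfrac18 Y\cdot F\cdot s - \tfrac38 F\cdot Y\cdot s
\end{equation}
and applying Lemma~\ref{prop:liederprops}(2) to each Clifford factor individually — noting that $Y$ enters as a $1$-form (so $\eL_X Y$ appears), that $C\in\Omega^2(M)$ and $F\in\Omega^1(M;\fsp(1))$ are tensor fields on $M$ whose usual Lie derivatives make sense, and that $s$ contributes $\eL_X s$ — one obtains three groups of terms: those in which the derivative falls on $Y$, those in which it falls on $C$ or $F$, and those in which it falls on $s$.

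The natural definition of the derived object $\eL_X \beta$, consistent with $\beta$ being built algebraically from $C$ and $F$, is the $\End(\bbS)$-valued one-form obtained by replacing $C\mapsto \eL_X C$ and $F\mapsto \eL_X F$ in the formula for $\beta$:
\begin{equation}
  (\eL_X\beta)_Y s := \tfrac14 Y\cdot(\eL_X C)\cdot s - \tfrac34 (\eL_X C)\cdot Y\cdot s - \tfrac18 Y\cdot(\eL_X F)\cdot s - \tfrac38 (\eL_X F)\cdot Y\cdot s.
\end{equation}
With this, the three groups of terms collect as
\begin{equation}
  \eL_X(\beta_Y s) = \beta_{\eL_X Y} s + (\eL_X\beta)_Y s + \beta_Y \eL_X s.
\end{equation}
Combining with the $\nabla$-term and regrouping $\nabla_{\eL_X Y} s - \beta_{\eL_X Y} s = \eD_{\eL_X Y} s$ and $\nabla_Y \eL_X s - \beta_Y \eL_X s = \eD_Y \eL_X s$ yields the claimed identity.

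There is essentially no obstacle here: the argument is a direct application of the Leibniz rules already established for $\eL_X$. The only point requiring a moment's thought is the bookkeeping definition of $\eL_X\beta$, and as indicated above it is forced on us by the structure of $\beta$ as a Clifford-algebraic expression in the background fields $C$ and $F$.
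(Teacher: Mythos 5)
Your proof is correct and follows essentially the same route as the paper: unpack $\eD_Y s = \nabla_Y s - \beta_Y s$, apply Lemma~\ref{prop:liederprops}(3) to the $\nabla$-term and the Leibniz rule of Lemma~\ref{prop:liederprops}(2) to $\beta_Y s$, then regroup. Your extra care in spelling out what $\eL_X\beta$ means (replacing $C\mapsto\eL_X C$, $F\mapsto\eL_X F$ in the Clifford expression for $\beta$, with $\eL_X Y^\flat = (\eL_X Y)^\flat$ because $X$ is Killing) is consistent with what the paper leaves implicit.
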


\begin{proof}
Using Lemma~\ref{prop:liederprops} and the definition of \(\eD\),
\begin{equation}
\begin{split}
	\eL_X \eD_Y s
		&= \eL_X \nabla_Y s - \eL_X (\beta_Y s)\\
		&= \nabla_{\eL_X Y}s + \nabla_Y \eL_X s - (\eL_X\beta)_Y s - \beta_{\eL_X Y} s - \beta_Y\eL_X s \\
		&= \eD_{\eL_X Y} s + \eD_Y \eL_X s - (\eL_X\beta)_Y s,
\end{split}
\end{equation}
hence the result.
\end{proof}

\subsection{Spinor field bilinears}
\label{sec:spinor-field-bilinears}

We now "geometrise" our definitions of the spinor bilinears equation~\eqref{eq:spinor-bilinears}. For \(s\in\Gamma(\bbS)\), we define \(\mu_s\in C^\infty(M)\), \(\kappa_s\in\fX(M)\) and \(\omega_s^{AB}\in\Omega^2(M;\CC)\) as follows:
\begin{align}
	\mu_s &= \langle s,s \rangle,
	& g(\kappa_s,X) &= \langle s, X\cdot s \rangle,
	& \omega_s^{AB}(X,Y) &=\tfrac{1}{2} \bar{s}^A \comm{X}{Y}\cdot s^B,
\end{align}
for all \(X,Y\in\fX(M)\). It is then useful to define the (linear) Dirac current map \(\kappa:\odot^2\Gamma(\bbS)\to \fX(M)\) by
\begin{equation}
	g(\kappa(s_1,s_2),X) = \langle s_1, X\cdot s_2 \rangle,
\end{equation}
and we note that clearly \(\kappa_s=\kappa(s,s)\).

\begin{lemma}\label{lemma:Liederequi}
The Dirac current map is equivariant under the action of Killing vector fields via \(\eL\);
\begin{equation}
	\eL_X \kappa(s,s) = 2\kappa(\eL_X s,s)
\end{equation}
for all spinor fields \(s\in\Gamma(\bbS)\) and Killing vector fields \(X\).
\end{lemma}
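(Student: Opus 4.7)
The plan is to test the identity against an arbitrary vector field $Y$, use the characterising property $g(\kappa(s_1,s_2),Y) = \left<s_1, Y\cdot s_2\right>$ of the Dirac current map, and invoke the Leibniz-type properties of $\eL_X$ on both sides. First I would record the symmetry $\kappa(s_1,s_2) = \kappa(s_2,s_1)$: indeed, since $\left<\cdot,\cdot\right>$ is symmetric and satisfies $\left<s_1,Y\cdot s_2\right> = \left<Y\cdot s_1, s_2\right>$, one has $g(\kappa(s_1,s_2),Y) = \left<s_2, Y\cdot s_1\right> = g(\kappa(s_2,s_1),Y)$ for all $Y$. In particular $\kappa(\eL_X s, s) + \kappa(s, \eL_X s) = 2\kappa(\eL_X s, s)$, which is what will let me combine the two cross-terms at the end.

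The one preliminary identity I need beyond Lemma~\ref{prop:liederprops} is a Leibniz rule of $\eL_X$ with respect to $\left<\cdot,\cdot\right>$, namely
\begin{equation*}
  X\left<s_1,s_2\right> = \left<\eL_X s_1, s_2\right> + \left<s_1, \eL_X s_2\right>.
\end{equation*}
This follows by expanding $\eL_X s_i = \nabla_X s_i - (\nabla X) s_i$: the Levi-Civita-lifted spin connection preserves $\left<\cdot,\cdot\right>$, so $X\left<s_1,s_2\right> = \left<\nabla_X s_1,s_2\right> + \left<s_1,\nabla_X s_2\right>$, and since $X$ is Killing, $\nabla X$ is a section of $\fso(TM)$, whose Clifford action on $\bbS$ is skew with respect to $\left<\cdot,\cdot\right>$; the resulting cross terms therefore cancel.

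With these pieces in hand, the main computation is straightforward. I would apply $X$ to both sides of $g(\kappa_s,Y) = \left<s, Y\cdot s\right>$. On the left, $\eL_X g = 0$ (because $X$ is Killing) gives $g(\eL_X \kappa_s, Y) + g(\kappa_s, \eL_X Y)$. On the right, the Leibniz rule above followed by property (2) of Lemma~\ref{prop:liederprops} for $\eL_X(Y\cdot s) = (\eL_X Y)\cdot s + Y\cdot \eL_X s$ yields
\begin{equation*}
  \left<\eL_X s, Y\cdot s\right> + \left<s, (\eL_X Y)\cdot s\right> + \left<s, Y\cdot \eL_X s\right>
  = g(\kappa(\eL_X s, s), Y) + g(\kappa_s, \eL_X Y) + g(\kappa(s, \eL_X s), Y).
\end{equation*}
Cancelling the common $g(\kappa_s, \eL_X Y)$ term and applying the symmetry of $\kappa$ recorded above gives $g(\eL_X \kappa_s, Y) = 2 g(\kappa(\eL_X s, s), Y)$ for all $Y \in \fX(M)$, which is the claimed identity. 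I do not anticipate any serious obstacles: the only non-mechanical point is the derivation of the Leibniz rule for $\eL_X$ on $\left<\cdot,\cdot\right>$, and even that reduces to two well-known inputs already present in the paper.
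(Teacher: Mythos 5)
Your proof is correct, and it takes a somewhat different (more conceptual) route than the paper's. The paper works with the left-hand side $g(\eL_X\kappa_s,Y)$ directly: it uses $\nabla$-compatibility of $g$ to arrive at $\left<s, 2Y\cdot\nabla_X s + (\nabla_Y X)\cdot s\right>$, and then repackages this as $2\left<s, Y\cdot\eL_X s\right>$ by an explicit gamma-matrix computation (splitting $\Gamma_\lambda\Gamma^{\mu\nu}$ into a totally antisymmetric $\Gamma^{\lambda\mu\nu}$ part, which drops out because $\left<s,\Gamma^{\sigma\tau}s\right>=0$, and a trace part that cancels against the $(\nabla_Y X)$ term). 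Your route instead first extracts from the same facts a clean Leibniz rule $X\left<s_1,s_2\right>=\left<\eL_X s_1,s_2\right>+\left<s_1,\eL_X s_2\right>$ — the essential input being the $\fso(V)$-invariance of $\left<\cdot,\cdot\right>$, so the $(\nabla X)$ cross-terms cancel — and then the rest follows uniformly from Leibniz rules ($\eL_X g=0$ and Lemma~\ref{prop:liederprops}(2)), with no gamma-matrix manipulations. Both proofs ultimately rest on the same algebraic fact (skewness of the $\fso$-action relative to $\left<\cdot,\cdot\right>$, which for $\Gamma_{\sigma\tau}$ is the statement $\left<s,\Gamma^{\sigma\tau}s\right>=0$), but yours packages it once as a reusable lemma and avoids the coordinate computation; the paper's is closer to a direct verification. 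Note also that your symmetry observation $\kappa(s_1,s_2)=\kappa(s_2,s_1)$ plays the role of the paper's opening remark that it suffices to polarise.
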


\begin{proof}
First, note that since \(\kappa\)  is symmetric in its arguments, it is sufficient to prove the result on the diagonal since the full result then follows by a polarisation identity. Hence, \(X\) be a Killing vector field, \(Y\) a vector field and \(s\) a spinor field. Then locally,
\begin{equation}
\begin{split}
	Y\cdot ((\nabla X) s) 
		&= -\tfrac{1}{4} Y^\lambda \nabla_\mu X_\nu \Gamma_\lambda \Gamma^{\mu\nu} s\\
		&= -\tfrac{1}{4} Y_\lambda \nabla_\mu X_\nu \Gamma^{\lambda\mu\nu}s
			- \tfrac{1}{2} Y^\mu \nabla_\mu X_\nu \Gamma^\nu s\\
		&= -\tfrac{1}{8} \epsilon^{\lambda\mu\nu\sigma\tau}Y_\lambda \nabla_\mu X_\nu \Gamma_{\sigma\tau}\cdot s
			- \tfrac{1}{2} (\nabla_Y X)\cdot s
\end{split}
\end{equation}
and so, since \(\left <s,\Gamma_{\sigma\tau}s\right >\) = 0,
\begin{equation}
	\left<s,Y\cdot(\nabla X)s\right> 
		= -\tfrac{1}{2} \left<s,(\nabla_Y X)\cdot s\right>.
\end{equation}
Using this and the definition of the Dirac current, we have
\begin{equation}
\begin{split}
	g(\eL_X \kappa(s,s),Y)
		&=X \qty(g(\kappa(s,s),Y)) - g(\kappa(s,s),\eL_X Y)\\
		&=g(\nabla_X\kappa(s,s),Y) + g(\kappa(s,s),\nabla_X Y) - g(\kappa(s,s),\comm{X}{Y})\\
		&= 2g(\kappa(s,\nabla_X s),Y) + g(\kappa(s,s),\nabla_Y X)\\
		&= \left<s,2 Y\cdot(\nabla_X s) + (\nabla_Y X)\cdot s\right>\\
		&= \left<s,2 Y\cdot \qty(\nabla_X s - (\nabla X)s)\right>\\
		&= 2\left<s,Y\cdot \eL_X s\right>\\
		&= 2g(\kappa(s,\eL_X s)),Y),
\end{split}
\end{equation}
hence the result, since \(Y\) is arbitrary. 
\end{proof}

We also ``geometrise" the \(\gamma\) component of a Spencer cocycle as follows. The map \(\gamma:\odot^2 S\to\fso(TM)\) is defined via the first cocycle condition~\eqref{eq:cocycle-1}:
\begin{equation}
	\gamma(s,s)X = -2\kappa(s,\beta_Xs)
\end{equation}
for all \(s\in\Gamma(\bbS)\) and \(X\in\fX(M)\). In components, this is given by \eqref{eq:spencer-22-cocycle-solution}.

\subsection{Properties of Killing spinors}
\label{sec:killing-spinor-props}

From now on, we work with a background \((M,g,C,F)\). We are now almost ready to define the Killing superalgebra of such a background. However, we will see that in order for its bracket to close, the Dirac current of a Killing spinor must be a Killing vector which preserves the background fields \(C\) and \(F\). The following proposition will help to show this. 

\begin{proposition}
Let \(s\in\Gamma(\bbS)\) be a Killing spinor field, and for notational convenience fix \(\mu,\kappa,\omega\) to be the Dirac bilinears of \(s\). Then the following identities hold:
\begin{align}
	\label{eq:der-mu}
	\nabla_\mu \mu 
		&= 2C_{\mu\alpha}\kappa^\alpha 
			+ \tfrac{1}{2} F\indices{^\alpha_{AB}} \omega\indices{_{\mu\alpha}^{AB}}
\\
	\label{eq:der-kappa}
	\nabla_\mu \kappa_\nu 
		&= \gamma_{\mu\nu}(s,s)
		= 2C_{\mu\nu} \mu
			+\tfrac{1}{2}\epsilon_{\mu\nu\alpha\beta\gamma}C^{\alpha\beta}\kappa^\gamma
			+ \tfrac{1}{4}\epsilon_{\mu\nu\alpha\beta\gamma}F\indices{^\alpha_{AB}}
				\omega\indices{^{\beta\gamma}^{AB}}
\\
	\label{eq:der-omega}
	\begin{split}
	\nabla_\mu\omega_{\nu\rho}^{AB} 
		&= \epsilon_{\alpha\beta\gamma\mu[\nu}C^{\alpha\beta}
			\omega\indices{^\gamma_{\rho]}^{AB}} 
		+\epsilon_{\alpha\beta\gamma\nu\rho}
			C\indices{^\alpha_\mu}\omega\indices{^{\beta\gamma}^{AB}}
		-\tfrac{1}{4}\epsilon_{\mu\nu\rho\alpha\beta}F\indices{^{\alpha}^{AB}}\kappa^{\beta}
		+\tfrac{1}{2}\eta_{\mu[\nu}F\indices{_{\rho]}^{AB}}\mu
	\\
		& \qquad\qquad\qquad
		+ F\indices{_{[\nu}^{(A}_C}\omega\indices{_{\rho]\mu}^{B)C}} 
		- F\indices{_{\mu}^{(A}_C}\omega\indices{_{\nu\rho}^{B)C}}
		- g\indices{_{\mu[\nu}} F\indices{^\alpha^{(A}_C}\omega\indices{_{\rho]\alpha}^{B)C}},
	\end{split}
\end{align}
and in particular
\begin{align}\label{eq:div-omega}
	\nabla^\nu\omega_{\nu\rho}^{AB} 
		&= \tfrac{1}{2}\epsilon_{\rho\alpha\beta\gamma\delta}C^{\alpha\beta}
			\omega\indices{^{\gamma\delta}^{AB}}
		+F\indices{_{\rho}^{AB}}\mu
		- \tfrac{1}{2}F\indices{^\nu^{(A}_C}\omega\indices{_{\rho\nu}^{B)C}}
\\ \label{eq:ext-omega}
	\nabla_{[\mu}\omega_{\nu\rho]}^{AB} 
		&= -\tfrac{1}{4}\epsilon_{\mu\nu\rho\alpha\beta}
			F\indices{^{\alpha}^{AB}}\kappa^{\beta}.
\end{align}
\end{proposition}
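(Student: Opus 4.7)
The approach is to invoke the Killing spinor equation $\nabla_\mu s^A = (\beta_\mu s)^A$, with $\beta$ given by Equation~\eqref{eq:beta-comps}, and reduce the covariant derivative of each bilinear to a Clifford-algebraic manipulation of $s$ and the background fields. Each of the three computations proceeds in three common steps: (i) differentiate the defining expression of the bilinear via the Leibniz rule and use the relevant symmetry statement from Lemma~\ref{sec:idents} to combine the two resulting terms into $2\sbar^{(A}(\text{Clifford object})(\beta_\mu s)^{B)}$; (ii) substitute $\beta_\mu$ and expand all Clifford products into the basis $(\1,\Gamma^\sigma,\Gamma^{\sigma\tau})$ of $\End(\Sigma)$ using Lemma~\ref{lemma:gamma-traces} and the volume-element identities; and (iii) re-assemble the resulting spinor bilinears as $\mu$, $\kappa$, or $\omega^{AB}$, collapsing $\kappa\cdot s$ and $\omega^{AB}\cdot s^C$ occurrences through Lemma~\ref{sec:bilidents}.

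For Equation~\eqref{eq:der-mu}, Lemma~\ref{sec:idents}(1) combined with the antisymmetry of $\epsilon_{AB}$ yields $\nabla_\mu\mu = 2\epsilon_{AB}\sbar^A(\beta_\mu s)^B$. Substituting $\beta_\mu$, contributions proportional to $\Gamma_{\sigma\tau}$ and $\Gamma^\sigma$ get contracted against $\epsilon_{AB}$; the symmetry of $\omega^{AB}$ in $AB$ kills the $\wedge^2\Delta$-projected $\omega$-terms, and what survives is exactly the pair displayed. Equation~\eqref{eq:der-kappa} is essentially a geometric restatement of the first cocycle condition: Lemma~\ref{sec:idents}(2) forces the two Leibniz contributions to coincide, giving $\nabla_\mu\kappa_\nu = 2\epsilon_{AB}\sbar^A\Gamma_\nu(\beta_\mu s)^B$, and comparing with the definition $\gamma(s,s)X = -2\kappa(s,\beta_X s)$ together with the explicit $\gamma$ of Theorem~\ref{thm:Spencer-calculation} produces the claimed formula. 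Antisymmetry in $\mu\nu$ of the right-hand side confirms that $\kappa_s$ is a Killing vector.

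Equation~\eqref{eq:der-omega} is the principal calculation. By Lemma~\ref{sec:idents}(3) we have $\nabla_\mu\omega^{AB}_{\nu\rho} = 2\sbar^{(A}\Gamma_{\nu\rho}(\beta_\mu s)^{B)}$. We expand $\Gamma_{\nu\rho}\beta_\mu$ into the Clifford basis, treating the $C$- and $F$-contributions separately. The $C$-contributions produce products $\Gamma_{\nu\rho}\Gamma^{\sigma\tau}$ and $\Gamma_{\nu\rho}\Gamma_\beta$; after volume-element reduction and re-expression in terms of bilinears (using $\kappa\cdot s = \mu s$ from Lemma~\ref{sec:bilidents}(i) to absorb factors of $\kappa$), they yield the $\epsilon$-terms on the right-hand side of \eqref{eq:der-omega}. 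The $F$-contributions, after symmetrisation in $AB$, produce the $\fsp(1)$-structured terms $F^{(A}_C\omega^{B)C}$ together with the $\mu$- and $\kappa$-pieces. Equations~\eqref{eq:div-omega} and~\eqref{eq:ext-omega} then follow directly from~\eqref{eq:der-omega} by tracing with $g^{\mu\nu}$ and by total antisymmetrisation over $\mu\nu\rho$ respectively, invoking Lemma~\ref{sec:bilidents}(iv) to eliminate the $\omega^{AB}(\kappa,-)$ contractions that would otherwise appear.

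The main obstacle is the bookkeeping in the $\omega$ calculation: the product $\Gamma_{\nu\rho}\beta_\mu$ generates many Clifford terms with intertwined $\fso(V)$ and $\fsp(1)$ index content, and reducing them to the compact form on the right-hand side of \eqref{eq:der-omega} requires systematic use of the volume-element identities and of Lemma~\ref{sec:bilidents}(ii)--(iii) whenever $\omega^{AB}\cdot s^C$ appears. All other steps are routine linear algebra once the Clifford reduction has been carried out.
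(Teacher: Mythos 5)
Your overall strategy is the same as the paper's: differentiate each bilinear, replace $\nabla_\mu s$ by $\beta_\mu s$ via the Killing spinor equation, and re-express the result in terms of the Dirac bilinears. Your treatment of \eqref{eq:der-mu} and \eqref{eq:der-kappa} (the latter via the first cocycle condition) matches the paper, and the identification of \eqref{eq:der-omega} as the bulk of the work is correct.

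However, the final step is misattributed. You claim that \eqref{eq:div-omega} and \eqref{eq:ext-omega} are obtained from \eqref{eq:der-omega} ``invoking Lemma~\ref{sec:bilidents}(iv) to eliminate the $\omega^{AB}(\kappa,-)$ contractions that would otherwise appear.'' But inspecting the right-hand side of \eqref{eq:der-omega}, there are no $\omega^{AB}(\kappa,-)$ contractions anywhere: the only $\kappa$-dependent term is the lone $\epsilon_{\mu\nu\rho\alpha\beta}F^{\alpha AB}\kappa^\beta$ piece, which is not contracted against $\omega$. Tracing and total antisymmetrisation are purely mechanical here. The actual nontrivial ingredient in deriving \eqref{eq:ext-omega} is that, after antisymmetrising over $\mu\nu\rho$, the $g_{\mu[\nu}(\cdots)_{\rho]}$ pieces die, the $F$--$\omega$ pieces cancel pairwise by a cyclic rearrangement, and the two surviving $C$--$\omega$ terms
\begin{equation}
  \epsilon_{\alpha\beta\gamma[\mu\nu}C^{\alpha\beta}\omega\indices{^\gamma_{\rho]}^{AB}}
  + \epsilon_{\alpha\beta\gamma[\mu\nu}C\indices{^\alpha_{\rho]}}\omega\indices{^{\beta\gamma}^{AB}}
\end{equation}
cancel \emph{identically} by the combinatorial tensor identity \eqref{eq:tensor-id-AB-2} from Appendix~\ref{sec:tensor-id} — not by any spinor-bilinear identity. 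Without that combinatorial identity, you would be stuck staring at two $\epsilon C\omega$ terms with no obvious reason for them to vanish, and Lemma~\ref{sec:bilidents}(iv) offers no help. You should replace the appeal to Lemma~\ref{sec:bilidents}(iv) with the correct appeal to \eqref{eq:tensor-id-AB-2}; with that fix the argument is complete.
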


\begin{proof}
Each of these identities follow from the Killing spinor equation. We have
\begin{equation}
	\nabla_\mu (\bar{s}^As^B)
	 	= \overline{\nabla_\mu s}^As^B + \bar{s}^A\nabla_\mu s^B
		= - \bar{s}^B\nabla_\mu s^A + \bar{s}^A\nabla_\mu s^B
		= 2\bar{s}^{[A}\nabla_\mu s^{B]}
		= 2\bar{s}^{[A}\qty(\beta_\mu s)^{B]},
\end{equation}
which is equivalent to
\begin{equation}
	\nabla_\mu \mu = 2\epsilon_{AB}2\bar{s}^{A}\qty(\beta_\mu s)^{B},
\end{equation}
and substituting in equation~\eqref{eq:beta-comps} for \(\beta\) yields equation~\eqref{eq:der-mu}. Similarly, we have
\begin{equation}
	\nabla_\mu \kappa_\nu = 2\epsilon_{AB}\bar{s}^{A}\Gamma_\nu\qty(\beta_\mu s)^{B},
		\label{eq:nabla-kappa}
\end{equation}
which, using cocycle condition~\eqref{eq:cocycle-1} gives equation~\eqref{eq:der-kappa}, and
\begin{equation}
	\nabla_\mu \omega_{\nu\rho}^{AB}
		= 2\bar{s}^{A}\Gamma_{\nu\rho}\qty(\beta_\mu s)^{B},
		\label{eq:nabla-omega}
\end{equation}
which after expanding and evaluating the resulting products of \(\Gamma\)-matrices yields equation~\eqref{eq:der-omega}. The expression \(\nabla\omega^{AB}\) has three components -- the trace, skew-symmetrisation and the elbow -- which we will treat separately. The trace part gives the divergence of \(\omega^{AB}\), while the skew-symmetrisation gives its exterior derivative
\begin{equation}
\begin{split}
	\nabla_{[\mu}\omega_{\nu\rho]}^{AB} &= 
	\epsilon_{\alpha\beta\gamma[\mu\nu}C^{\alpha\beta}\omega\indices{^\gamma_{\rho]}^{AB}} 
	+\epsilon_{\alpha\beta\gamma[\mu\nu}
		C\indices{^\alpha_{\rho]}}\omega\indices{^{\beta\gamma}^{AB}}
	-\tfrac{1}{4}\epsilon_{\mu\nu\rho\alpha\beta}F\indices{^{\alpha}^{AB}}\kappa^{\beta},
\end{split}
\end{equation}
of which the first two terms cancel identically by equation~\eqref{eq:tensor-id-AB-2}, giving equation~\eqref{eq:ext-omega}.
\end{proof}

We highlight one of these identities in particular.

\begin{corollary}\label{coro:kappa-gradient}
If \(s\) is a Killing spinor field, the gradient of its Dirac current is given by
\begin{equation}
	\nabla \kappa = - \gamma(s,s).
\end{equation}
In particular, \(\kappa\) is a Killing vector.
\end{corollary}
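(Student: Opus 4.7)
The plan is to extract both assertions directly from identity~\eqref{eq:der-kappa} in the preceding proposition, which was derived by applying the Killing spinor equation to $\nabla_\mu\kappa_\nu = 2\epsilon_{AB}\sbar^A\Gamma_\nu(\beta_\mu s)^B$ and invoking the first cocycle condition~\eqref{eq:cocycle-1}. That identity reads $\nabla_\mu\kappa_\nu = \gamma_{\mu\nu}(s,s)$, so the corollary amounts to translating this component equation into the index-free language of $\nabla\kappa$ as an endomorphism of $TM$.

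First I would recall the convention fixed in Section~\ref{sec:spin-lie-der}, namely $(\nabla X)\indices{^\mu_\nu} = \nabla_\nu X^\mu$, so that after lowering, $(\nabla\kappa)_{\mu\nu} = \nabla_\nu\kappa_\mu$. Then using the skew-symmetry of $\gamma(s,s)$ in its spacetime indices, which holds because $\gamma$ is by construction valued in $\fso(V)\cong\wedge^2V$ (being the second component of a Spencer cocycle in $\Hom(\odot^2 S,\fso(V))$), I would compute
\[
(\nabla\kappa)_{\mu\nu} = \nabla_\nu\kappa_\mu = \gamma_{\nu\mu}(s,s) = -\gamma_{\mu\nu}(s,s),
\]
which is precisely $\nabla\kappa = -\gamma(s,s)$ as claimed.

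For the second assertion, I would simply note that a vector field $X$ is Killing if and only if the endomorphism $\nabla X$ takes values in $\fso(TM)$, i.e., $(\nabla X)_{\mu\nu}$ is skew in $\mu\nu$; this is the content of the Killing equation $\nabla_{(\mu}X_{\nu)}=0$. Since we have just identified $(\nabla\kappa)_{\mu\nu}$ with $-\gamma_{\mu\nu}(s,s)$, and the latter is manifestly skew-symmetric, the skew-symmetry of $\nabla\kappa$ follows at once, so $\kappa$ is a Killing vector field.

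There is no real obstacle here: the proof is purely a matter of matching conventions and invoking the already-established algebraic skew-symmetry of the Spencer cocycle component $\gamma$. The only point that requires care is keeping track of which index of $\nabla\kappa$ is contracted with what, so that the sign in the statement $\nabla\kappa = -\gamma(s,s)$ is consistent with the positive sign in $\nabla_\mu\kappa_\nu = \gamma_{\mu\nu}(s,s)$.
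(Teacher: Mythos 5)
Your proof is correct and takes essentially the same route as the paper, which simply remarks that the corollary is a direct consequence of the first cocycle condition~\eqref{eq:cocycle-1} (via identity~\eqref{eq:der-kappa} of the preceding proposition). You have spelled out the sign bookkeeping — matching the convention $(\nabla X)\indices{^\mu_\nu}=\nabla_\nu X^\mu$ against $\nabla_\mu\kappa_\nu=\gamma_{\mu\nu}(s,s)$ and using the skewness of $\gamma$ — which is exactly the point the paper leaves implicit.
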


Note that this is a direct consequence of the cocycle condition equation~\eqref{eq:cocycle-1}.

\subsection{Existence of Killing superalgebras}
\label{sec:killing-exist}

We can now construct the supersymmetry algebra of a supersymmetric background \((M,g,C,F)\). For such a background, we denote the space of (symplectic Majorana) Killing spinors by \(\fK_{\bar{1}}\); that is,
\begin{equation}
	\fK_{\bar{1}} = \qty{s \in \Gamma(\bbS) |\, \eD s = 0},
\end{equation}
and we also define
\begin{equation}
	\fK_{\bar{0}} = \qty{X \in \fX(M) |\, \eL_X g = \eL_X C = \eL_X F = 0},
\end{equation}
the space of Killing vector fields which preserve \(C\) and \(F\). We define the bracket \(\comm{\cdot}{\cdot}\) on \(\fK = \fK_{\bar{0}}\oplus\fK_{\bar{1}}\) by extension of the following:
\begin{itemize}
	\item the usual Lie bracket of vector fields on \(\fK_{\bar{0}}\otimes\fK_{\bar{0}}\), \(\comm{X}{Y} = \eL_XY\),
	\item the Dirac current on \(\fK_{\bar{1}}\otimes\fK_{\bar{1}}\), \(\comm{s}{s} = \kappa_s\),
	\item the spinorial Lie derivative on \(\fK_{\bar{0}}\otimes\fK_{\bar{1}}\), \(\comm{X}{s} = \eL_X s\). 
\end{itemize}

\begin{theorem}\label{thm:killing-superalg-exist}
If \((M,g,C, F)\) is a background such that \(\Gamma^\nu{R^\eD}_{\mu\nu}=0\),
then \((\fK,\comm{\cdot}{\cdot})\) as described above is a Lie superalgebra.
\end{theorem}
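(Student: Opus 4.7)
The plan is to verify closure of the bracket on each graded piece and then check the four super-Jacobi identities, flagging where the Spencer cocycle conditions, the general properties of the spinorial Lie derivative, and the hypothesis $\Gamma^\nu R^\eD_{\mu\nu}=0$ are each used.

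Closure on $\fK_{\bar{0}}^2$ is immediate, since the Lie derivative is a representation on tensor fields and clearly commutes with the preservation conditions for $g,C,F$. For the mixed bracket $\fK_{\bar{0}}\otimes\fK_{\bar{1}}\to\fK_{\bar{1}}$, I would apply Lemma~\ref{lemma:liedersuperD} to a Killing spinor $s$ and a symmetry $X$: the equation $\eD s=0$ collapses the lemma to $\eD_Y\eL_X s=(\eL_X\beta)_Y s$ for every $Y\in\fX(M)$; since $\beta$ is built algebraically from $g,C,F$ via Clifford multiplication (see~\eqref{eq:beta-clifford}), the preservation conditions $\eL_X g=\eL_X C=\eL_X F=0$ together with Lemma~\ref{prop:liederprops}(2) imply $\eL_X\beta=0$, whence $\eL_X s\in\fK_{\bar{1}}$. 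The substantive step is closure on $\fK_{\bar{1}}^2$: Corollary~\ref{coro:kappa-gradient} already guarantees that $\kappa_s$ is Killing, so what remains is to show $\eL_{\kappa_s}C=0$ and $\eL_{\kappa_s}F=0$. Here I would use Cartan's formula together with $dC=0$ (from~\eqref{eq:C-closed-components}) for $C$, and the explicit expression for $\nabla F$ from~\eqref{eq:C-F-eom} together with $\nabla_{(\mu}F_{\nu)}=0$ for $F$, substituting~\eqref{eq:der-kappa} for $\nabla\kappa_s$; after rewriting the result in terms of the bilinears $\mu_s,\kappa_s,\omega_s^{AB}$ via the differential identities~\eqref{eq:der-mu}--\eqref{eq:ext-omega}, the remaining zero-Clifford-trace equations \eqref{eq:sugra-maxwell}, \eqref{eq:C-F-contraction} and~\eqref{eq:F-comm-zero} should cause the expressions to collapse to zero.

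For the super-Jacobi identities, the three-even case is standard, the two-even-one-odd case is exactly Lemma~\ref{prop:liederprops}(1), and the one-even-two-odd case follows by polarisation from the equivariance identity of Lemma~\ref{lemma:Liederequi}. The three-odd case reduces, using the graded symmetry of the bracket, to $\eL_{\kappa_s}s=0$; applying the Killing spinor equation to $\nabla_{\kappa_s}s$ and~\eqref{eq:der-kappa} to the $\tfrac{1}{4}\nabla_\mu(\kappa_s)_\nu\Gamma^{\mu\nu}s$ term in the definition of $\eL$ gives
\[
  \eL_{\kappa_s}s = \beta_{\kappa_s}s + \tfrac{1}{4}\gamma(s,s)_{\mu\nu}\Gamma^{\mu\nu}s,
\]
which vanishes by the second cocycle condition~\eqref{eq:scc-2}. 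The main obstacle is therefore the verification that $\eL_{\kappa_s}C=0$ and $\eL_{\kappa_s}F=0$: this is the only place in the argument at which the hypothesis $\Gamma^\nu R^\eD_{\mu\nu}=0$ genuinely enters, and it requires careful combinatorial bookkeeping of the bilinear identities against the field equations~\eqref{eq:sugra-maxwell}--\eqref{eq:F-comm-zero}.
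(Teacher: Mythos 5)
Your proposal is correct and follows essentially the same route as the paper's proof: closure of each graded piece via Lemmas~\ref{prop:liederprops}, \ref{lemma:liedersuperD} and \ref{lemma:Liederequi}, reduction of the odd--odd closure to $\eL_{\kappa_s}C=\eL_{\kappa_s}F=0$ via Cartan's formula and the bilinear differential identities, and the super-Jacobi identities from the spinorial Lie derivative properties and the Spencer cocycle conditions. One small bookkeeping slip: the conditions from Theorem~\ref{thm:gamma-trace-zero} that actually enter the closure calculation are \eqref{eq:C-F-contraction}, \eqref{eq:C-closed-components}, \eqref{eq:C-F-eom} and \eqref{eq:F-comm-zero}, but not \eqref{eq:sugra-maxwell}, which, like the Einstein equation \eqref{eq:sugra-einstein}, is superfluous for closure (as the paper observes in the remark following the theorem).
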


\begin{proof}
Throughout, let \(s\) be a Killing spinor field and let \(\mu, \kappa\) and \(\omega^{AB}\) be the bilinears associated to \(s\). First, note that the super skew-symmetry of the bracket follows directly from the definition. We will next show that \(\fK\) is closed under the bracket operation. First, we have \(\eL_\kappa g=0\) by Corollary~\ref{coro:kappa-gradient}. Using the Cartan formula for the Lie derivative, we have
\begin{equation}
	\eL_\kappa C = d\iota_\kappa C + \iota_\kappa dC,
\end{equation}
but by equation~\eqref{eq:der-mu}, \(\iota_\kappa C = -\tfrac{1}{2}d\mu-\tfrac{1}{4}\iota_{F_{AB}}\omega^{AB}\), so
\begin{equation}
	\eL_\kappa C = -\tfrac{1}{4}d \iota_{F_{AB}}\omega^{AB}+\iota_\kappa dC.
\end{equation}
In components, the first term is proportional to
\begin{equation}\label{eq:d-contract-f-omega}
\begin{split}
	\nabla\indices{_{[\mu}}\qty(\omega\indices{_{\nu]\rho}^{AB}}F\indices{^\rho_{AB}})
		&= \qty(\nabla\indices{_{[\mu}}\omega\indices{_{\nu]\rho}^{AB}})
			F\indices{^\rho_{AB}}
		+\omega\indices{_{[\nu|\rho|}^{AB}}
			\qty(\nabla\indices{_{\mu]}}F\indices{^\rho_{AB}})\\
		&= \tfrac{3}{2}\qty(\nabla\indices{_{[\mu}}\omega\indices{_{\nu\rho]}^{AB}})
			F\indices{^\rho_{AB}}
		- \tfrac{1}{2}\qty(\nabla\indices{_\rho}\omega\indices{_{\mu\nu}^{AB}})
			F\indices{^\rho_{AB}}
		+\omega\indices{_{[\nu|\rho|}^{AB}}
			\qty(\nabla\indices{_{\mu]}}F\indices{^\rho_{AB}}).
\end{split}
\end{equation}
Using equation~\eqref{eq:ext-omega}, the first term in this expansion vanishes identically:
\begin{equation}
	\tfrac{3}{2}\qty(\nabla\indices{_{[\mu}}\omega\indices{_{\nu\rho]}^{AB}})
			F\indices{^\rho_{AB}}
		= -\tfrac{3}{8}\epsilon_{\mu\nu\rho\alpha\beta}
			\qty(F^\rho \cdot F^{\alpha})\kappa^{\beta}
		= 0,
\end{equation}
because \(F^\rho \cdot F^{\alpha}\) is symmetric. Substituting equation~\eqref{eq:der-omega} into the second term of equation~\eqref{eq:d-contract-f-omega}, three terms vanish by symmetry, leaving
\begin{equation}
\begin{split}
	-\tfrac{1}{2}\qty(\nabla_\rho\omega_{\mu\nu}^{AB})F^\rho_{AB}
		&= -\tfrac{1}{2}\qty[
			\epsilon_{\alpha\beta\gamma\rho[\mu}C^{\alpha\beta}F^\gamma_{AB}
				\omega\indices{_{\nu]}^\rho^{AB}} 
			+\epsilon_{\mu\nu\alpha\beta\gamma}C\indices{^\alpha_\rho}F^\rho_{AB}
				\omega\indices{^{\beta\gamma}^{AB}}
			+ \comm{F^\rho}{F_{[\mu}}_{BC}\omega_{\nu]\rho}^{BC}
		].
\end{split}
\end{equation}
Thus 
\begin{equation}
\begin{split}
	\nabla\indices{_{[\mu}}\qty(\omega\indices{_{\nu]\rho}^{AB}}F\indices{^\rho_{AB}})
		=& \qty(\nabla_{[\mu}F^\rho_{AB} 
			-\tfrac{1}{2}\epsilon_{\alpha\beta\gamma\rho[\mu}C^{\alpha\beta}F^\gamma_{AB}
			)\omega_{\nu]\rho}^{AB}	\\
		&-\tfrac{1}{2}\epsilon_{\mu\nu\alpha\beta\gamma}C\indices{^\alpha_\rho}F^\rho_{AB}
				\omega\indices{^{\beta\gamma}^{AB}}
			-\tfrac{1}{2}\comm{F^\rho}{F_{[\mu}}_{BC}\omega_{\nu]\rho}^{BC}.
\end{split}
\end{equation}
Thus \(\eL_\kappa C\) vanishes when \(\Gamma^\nu{R^\eD}_{\mu\nu}=0\) by Theorem~\ref{thm:gamma-trace-zero}. Since each \(F\indices{_{AB}}\) is a one-form, we can also use the Cartan formula for its Lie derivative:
\begin{equation}
	\eL_\kappa F_{AB} = d\iota_\kappa F_{AB} + \iota_\kappa d F_{AB},
\end{equation}
which in components gives
\begin{equation}
	(\eL_\kappa F_{AB})_\mu
	=\nabla_\mu(\kappa^\nu F\indices{_\nu_{AB}}) 
		+ \kappa^\nu \nabla_\nu F\indices{_\mu_{AB}}
		- \kappa^\nu \nabla_\mu F\indices{_\nu_{AB}}
	= (\nabla_\mu\kappa^\nu) F\indices{_\nu_{A B}}
		+ \kappa^\nu\nabla_\nu F\indices{_{\mu A B}}.
\end{equation}
Using equation~\eqref{eq:der-kappa}, the first term on the right hand side is
\begin{equation}
\begin{split}
	\gamma_{\mu\nu}F\indices{^\nu_{A B}}
	&=2C_{\mu\nu}F\indices{^\nu_{A B}}\mu
		+\tfrac{1}{2}\epsilon_{\mu\nu\alpha\beta\gamma}
			C^{\alpha\beta}F\indices{^\nu_{A B}}\kappa^\gamma
		+ \tfrac{1}{4}\epsilon_{\mu\nu\alpha\beta\gamma}
			F\indices{^\nu_{A B}}F\indices{^\alpha_{CD}}
			\omega\indices{^{\beta\gamma}^{CD}}\\
	&=2C_{\mu\nu}F\indices{^\nu_{A B}}\mu
		+\tfrac{1}{2}\epsilon_{\mu\alpha\beta\gamma\delta}
			C^{\alpha\beta}F\indices{^\gamma_{A B}}\kappa^\delta
		+ \tfrac{1}{8}\epsilon\indices{_{\mu\alpha\beta\sigma\tau}}
			\comm{F^\alpha}{F^\beta}\indices{^C_{(A}}\omega\indices{^{\sigma\tau}_{B)}_C},
\end{split}
\end{equation}
where the second equality arises as follows: we have
\begin{align} 
	F\indices{^{[\alpha}_{A(B}}F\indices{^{\beta]}_{C)D}}
	&=F\indices{^{[\alpha}_{[A|(B}}F\indices{^{\beta]}_{C)|D]}}
	=\tfrac{1}{2}\epsilon_{AD}\epsilon^{EF}F\indices{^{[\alpha}_{E(B}}F\indices{^{\beta]}_{C)F}}
	=-\tfrac{1}{4}\epsilon_{AD}\comm{F^\alpha}{F^\beta}_{BC},\\
	F\indices{^{[\alpha}_{A[B}}F\indices{^{\beta]}_{C]D}}
	&=\tfrac{1}{2}\epsilon_{BC}\epsilon^{EF}F\indices{^{[\alpha}_{AE}}F\indices{^{\beta]}_{FD}}
	=-\tfrac{1}{4}\epsilon_{BC}\comm{F^\alpha}{F^\beta}_{AD},
\end{align}
and so
\begin{equation}
	F\indices{^{[\alpha}_{AB}}F\indices{^{\beta]}_{CD}} 
		= -\tfrac{1}{4}\qty(\epsilon_{AD}\comm{F^\alpha}{F^\beta}_{BC} + \epsilon_{BC}\comm{F^\alpha}{F^\beta}_{AD}),
\end{equation}
and finally symmetrising in \(AB\) gives \(F\indices{^{[\alpha}_{AB}}F\indices{^{\beta]}_{CD}} = \tfrac{1}{2}\epsilon\indices{_{(C|(A}} \comm{F^\alpha}{F^\beta}\indices{_{B)|D)}}\)\footnote{This reflects the Hodge isomorphism \(\wedge^2\odot^2\Delta\cong\odot^2\Delta\) with respect to the inner product on \(\odot^2\Delta\) induced by the symplectic product on \(\Delta\).}. We thus have
\begin{equation}
	\qty(\eL_\kappa F_{AB})_\mu 
	= 2C_{\mu\nu}F\indices{^\nu_{A B}}\mu
		+\qty(
			\nabla_\delta F\indices{_\mu_{AB}}
			+\tfrac{1}{2}\epsilon_{\mu\alpha\beta\gamma\delta}
				C^{\alpha\beta}F\indices{^\gamma_{A B}}
			)\kappa^\delta
		+ \tfrac{1}{4}\epsilon\indices{_{\mu\alpha\beta\sigma\tau}}
			\comm{F^\alpha}{F^\beta}\indices{^C_{(A}}\omega\indices{^{\sigma\tau}_{B)}_C}.
\end{equation}

By Theorem~\ref{thm:gamma-trace-zero}, this vanishes if \(\Gamma^\nu{R^\eD}_{\mu\nu}=0\). We have thus shown that \(\comm{\fK_{\bar{1}}}{\fK_{\bar{1}}} \subseteq {\fK_{\bar{0}}}\). One can easily verify that \(\comm{\fK_{\bar{0}}}{\fK_{\bar{0}}} \subseteq {\fK_{\bar{0}}}\). We must show that \(\comm{\fK_{\bar{1}}}{\fK_{\bar{0}}} + \comm{\fK_{\bar{0}}}{\fK_{\bar{1}}} \subseteq {\fK_{\bar{1}}}\); that is, \(\eL_K s\in\fK_{\bar{1}}\) for all \(K\in\fK_{\bar{0}}\) and \(s\in\fK_{\bar{1}}\). By Lemma~\ref{lemma:liedersuperD}, for all \(X\in\fX(M)\) we have
\begin{equation}
	\eD_X \eL_K s = \eD_{\comm{X}{K}}s + \eL_K\eD_X s + (\eL_K\beta)_Xs = (\eL_K\beta)_Xs,
\end{equation}
where we have used \(\eD s=0\), but using the expression~\eqref{eq:beta-clifford} for \(\beta\) and the Leibniz rule for the Lie derivative with respect to Clifford multiplication, we see that \(\eL_K\beta=0\) for all \(K\in\fK_{\bar{0}}\). Thus \(\eL_K s\in\fK_{\bar{1}}\).

It remains to be shown only that the Jacobi identity is satisfied. For
three Killing vector fields this is clear because the bracket is simply
the commutator. For two Killing vector fields \(X,Y\) and a Killing
spinor field \(s\),
\begin{equation}
	\comm{s}{\comm{X}{Y}} + \comm{X}{\comm{Y}{s}} - \comm{Y}{\comm{s}{X}}
		= -\eL_{\eL_XY}s + \eL_X \eL_Y s - \eL_Y \eL_X s,
\end{equation}
so the Jacobi identity follow from \(\eL\) being a Lie algebra
representation of vector fields on spinor fields, which follows from
Proposition~\ref{prop:liederprops}. By symmetry, for the case of one
Killing vector field and two Killing spinor fields, we need only
consider the identity where both spinor fields are the same:
\begin{equation}
	\comm{X}{\comm{s}{s}} + \comm{s}{\comm{s}{X}} - \comm{s}{\comm{X}{s}}
		= \comm{X}{\kappa(s,s)} - \kappa(s,\eL_X s) - \kappa(s,\eL_X s),
\end{equation}
which vanishes by Lemma~\ref{lemma:Liederequi}. Finally, for three
Killing spinor fields, again we need only consider the case where they
are all the same; using Corollary~\ref{coro:kappa-gradient} this reduces
to the vanishing of
\begin{equation}
	\comm{\comm{s}{s}}{s} 
		= \eL_\kappa s
		= \nabla_\kappa s - (\nabla\kappa) s
		= \beta_\kappa s + \gamma(s,s) s,
\end{equation}
which is simply the cocycle condition~\eqref{eq:cocycle-2}.
\end{proof}

We can now finally state the definition of the Killing superalgebra.

\begin{definition}
The \textbf{Killing superalgebra} of a background \((M,g,C, F)\) with \(\Gamma^\nu{R^\eD}_{\mu\nu}=0\) is the Lie superalgebra \(\fK = \fK_{\bar{0}}\oplus\fK_{\bar{1}}\) where
\begin{equation}
	\fK_{\bar{0}} = \qty{X \in \fX(M) |\, \eL_X g = \eL_X C = \eL_X F = 0}
\end{equation}
is the space of Killing vector fields which preserve \(C\) and \(F\),
\begin{equation}
	\fK_{\bar{1}} = \qty{s \in \Gamma(\bbS) |\, \eD s = 0}
\end{equation}
is the space of Killing spinor fields, and the bracket \(\comm{\cdot}{\cdot}\) is defined above.
\end{definition}

In the proof of Theorem~\ref{thm:killing-superalg-exist}, the vanishing 
Clifford trace of the curvature is used to show that the algebra 
closes. One might ask whether this is necessary; that is,
do Killing superalgebras only exist for backgrounds which satisfy the
equations of motion? Recalling Theorem~\ref{thm:gamma-trace-zero}, note
that equations~\eqref{eq:sugra-maxwell} and \eqref{eq:sugra-einstein}
were not necessary to close the algebra. Thus, if \(F=0\), \(dC=0\) is
sufficient to close the algebra. This is the case in 5-dimensional
supergravity: \(C\) is (proportional to) the field strength of a Maxwell
field and hence closed, so there may exist "off-shell" supersymmetric
bosonic backgrounds. On the other hand, if \(C=0\), \(\nabla F=[F,F]=0\)
suffices.

\section{Maximally supersymmetric backgrounds}
\label{sec:max-susy}

Throughout we work in a maximally supersymmetric background
$(M,g,C,F)$ which we assume to be connected.  We seek only to classify
the geometries up to local isometry, so we also assume that $M$ is
simply connected.  Under this assumption, maximal supersymmetry is
equivalent to the vanishing of the curvature $R^\eD$ of the
superconnection, so Theorem~\ref{thm:maximalsusy} gives necessary
\emph{and} sufficient conditions for maximal supersymmetry. We now work
with local coordinate frames rather than local orthonormal frames, still
using Greek indices.

\subsection{Maximally supersymmetric supergravity backgrounds}
\label{sec:max-susy-sugra}

Taking $F=0$ reduces the problem to the determination of maximally
supersymmetric geometries in minimal 5-dimensional supergravity. These
are already known in the literature.  They were constructed directly in 
\cite{Gauntlett:2002nw} and via quotients from the maximally supersymmetric
backgrounds of $(1,0)$ six-dimensional supergravity in
\cite{Chamseddine:2003yy}.  They are given by
\begin{itemize}
\item the near-horizon geometry of the BMPV black hole \cite{Breckenridge:1996is},
\item $AdS_3\times S^2$ and $AdS_2\times S^3$, which also arise as
  limits of the BMPV near-horizon solution,
\item a particular Cahen--Wallach pp-wave \cite{Meessen:2001vx},
\item a 5-dimensional analogue of the Gödel universe.
\end{itemize}
In addition, in \cite{Gauntlett:2002nw} there are three further candidate
geometries which, as the authors already speculate, are in fact locally
isometric to cases already listed above.  This is easy to check by
calculating their Killing superalgebras and showing that they are
isomorphic to the ones above and observing, as was shown in
\cite{Figueroa-OFarrill:2016khp} in the context of eleven-dimensional
supergravity but holds more generally, that for a ($>\tfrac12$)-BPS
background, the Killing superalgebra, which is transitive, determines
the geometry up to local isometry.

\subsection{Maximally supersymmetric backgrounds with $F\neq 0$}
\label{sec:max-susy-rigid}

By Theorem~\ref{thm:maximalsusy}, the geometry is given by a nonzero
parallel one-form $\varphi$ (which we now regard dually as a vector
field) and the Riemann curvature is constrained by
equation~\eqref{eq:max-susy-riemann-F}, which in a coordinate frame
reads
\begin{equation}\label{eq:riem-non-sugra}
  R_{\mu\nu\sigma\tau}
  = \varphi^2 g_{\mu[\sigma} g_{\tau]\nu} - \varphi_\mu
  \varphi_{[\sigma} g_{\tau]\nu} + \varphi_\nu \varphi_{[\sigma}
  g_{\tau]\mu}.
\end{equation}
The Ricci and scalar curvatures are then
\begin{equation}
    R_{\mu\nu} = \tfrac{3}{2}\qty(\varphi_\mu\varphi_\nu-\varphi^2g_{\mu\nu}) \qquad\text{and}\qquad R = -6\varphi^2,
\end{equation}
and the Weyl tensor vanishes, so the metric is conformally flat. It is
also locally symmetric (hence symmetric, since $M$ is simply connected):
$\nabla_\lambda R_{\mu\nu\sigma\tau}=0$ since $\nabla\varphi=0$. Another
consequence of $\varphi\neq 0$ being parallel is that it is nowhere
vanishing and $\varphi^2$ is constant -- in particular, the scalar
curvature is constant. The geometry is thus determined by the causal
type of $\varphi$.

\subsubsection{Spacelike $\varphi$}

When $\varphi^2<0$, $\varphi$ defines a distribution of rank 1 in $TM$
which is preserved by the holonomy of $\nabla$ since $\varphi$ is
parallel. The rank-4 perpendicular distribution is also preserved by
holonomy, and the metric is nondegenerate on either distribution. The de
Rham--Wu decomposition theorem then allows us to decompose $(M,g)$ as a
product $(N,g_N)\times(\RR,-\dd x^2)$, where $(N,g_N)$ is a
4-dimensional lorentzian manifold and $x$ is the standard coordinate on
$\RR$. We then define the symmetric tensor
\begin{equation}
  h_{\mu\nu} = g_{\mu\nu} - \frac{\varphi_\mu\varphi_\nu}{\varphi^2}
\end{equation}
so we can write the curvature tensors as
\begin{equation}
  \begin{split}
    R_{\mu\nu\sigma\tau} &= \varphi^2 h_{\mu[\sigma}h_{\tau]\nu}\\
   R_{\mu\nu} &= -\tfrac{3}{2}\varphi^2 h_{\mu\nu}\\
  R &= -6\varphi^2.
\end{split}
\end{equation}
The pullback of $h$ to $N$ coincides with $g_N$. Since $(N,g_N)$ is a
lorentzian symmetric space with constant positive scalar curvature \footnote{One might object that $AdS_4$ has \emph{negative} scalar curvature; however, this is only true in mostly-positive signature. The Ricci tensor is invariant under the homothety \(g \to -g\), while the scalar curvature undergoes a change of sign, thus \(AdS_4\) has positive curvature in our conventions.}, it must be $AdS_4$ by a result of Cahen and Wallach
\cite{MR267500,MR461388}.

\subsubsection{Timelike $\varphi$}

We can treat this similarly to the spacelike case. Here, we get a
decomposition $(\RR,\dd t^2)\times (N,-g_N)$ where $t$ is the standard
coordinate on $R$ and $(N,g_N)$ is a riemannian manifold. Note that the
pullback of $h$ coincides with $-g_N$. The sectional curvature of
$(N,g_N)$ is a positive constant, so it is $S^4$.

\subsubsection{Null $\varphi$}

In this case, the de Rham--Wu theorem cannot be used since the
distribution defined by $\varphi$ is degenerate. The geometry here is a
Brinkmann pp-wave space -- a lorentzian manifold with a parallel null
vector field. The curvature tensors reduce to
\begin{equation}\label{eq:null-varphi-curvature}
  \begin{split}
    R_{\mu\nu\sigma\tau} &=
    -g\indices{_{\mu[\sigma}}\varphi_{|\nu|}\varphi_{\tau]} + g\indices{_{\nu[\sigma}}\varphi_{|\mu|}\varphi_{\tau]}\\
    R_{\mu\nu} &= \tfrac{3}{2}\varphi_\mu\varphi_\nu
\end{split}
\end{equation}
and $R=0$.  Thus $(M,g)$ is a scalar-flat lorentzian symmetric space; by
the Cahen--Wallach theorem \cite{MR267500,MR461388} it is a
Cahen--Wallach pp-wave $CW_5(A)$. Such a space has a coordinate system
$(x^+,x^-,x^1,x^2,x^3)$ in which the metric is given by
\begin{equation}
  g = 2\dd x^+\dd x^- - \qty(\sum_{i,j=1}^3 A_{ij}x^ix^j)(\dd x^-)^2 -
  \sum_{i=1}^3 \qty(\dd x^i)^2,
\end{equation}
where $A\in \odot^2\RR^3$. We take $\varphi=\partial_+$ to be the
distinguished parallel null vector field. This metric is scalar-flat for
any $A$, and the non-vanishing components of the Riemann and Ricci
tensors are $R_{i-j-}=-A_{ij}$ and
$R_{--}=-\sum_{i=1}^3A_{ii}$. The non-vanishing components of
the Weyl tensor are given by the trace-free part of $A$: namely,
\begin{equation}
  W_{i-j-} = -A_{ij} + \tfrac{1}{3}\delta_{ij}\sum_{k=1}^3A_{kk}.
\end{equation}
We already saw that the Weyl tensor for a maximally supersymmetric
geometry vanishes, so we have $A_{ij}=a\delta_{ij}$ where
$a=\tfrac{1}{3}\sum_{k=1}^3A_{kk}$. Now, comparing
with equations~\eqref{eq:null-varphi-curvature} for \(\varphi=\partial_+\), we find that
$a=-\frac{1}{2}$. We have thus shown the following.

\begin{theorem}\label{thm:max-susy-bgs}
  Let $(M,g,C,F)$ be a maximally supersymmetric 5-dimensional
  background. If $F=0$ then $(M,g)$ is a maximally supersymmetric
  background of minimal 5-dimensional supergravity. If $F\neq 0$ then
  $C=0$ and $F=\varphi\otimes r$ for some one-form $\varphi$ and some
  $r\in\fsp(1)$, and up to local isometry,
  \begin{itemize}
  \item If $\varphi^2>0$, $(M,g)=\RR^{1,0}\times S^4$, where $S^4$ is
    the round sphere with scalar curvature $6\varphi^2$ (radius
    $\sqrt{\frac{2}{\varphi^2}}$),
  \item If $\varphi^2<0$, $(M,g)=AdS_4\times \RR^{0,1}$, where $AdS_4$ is the 4-dimensional
    anti-de Sitter spacetime with scalar curvature $6\varphi^2$
    (cosmological constant $\Lambda=\frac{3}{2}\varphi^2$),
  \item If $\varphi^2=0$, $(M,g)$ is a Cahen--Wallach pp-wave with coordinates
    $(x^+,x^-,x^1,x^2,x^3)$ in which $\varphi=\partial_+$ and
    \begin{equation}
      g = 2\dd x^+\dd x^- + \tfrac{1}{2}\qty(\sum_{i=1}^3 (x^i)^2)(\dd x^-)^2 - \sum_{i=1}^3 \qty(\dd x^i)^2.
    \end{equation}
  \end{itemize}
\end{theorem}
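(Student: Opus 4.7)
The plan is to invoke Theorem~\ref{thm:maximalsusy}, which under our connected and simply-connected assumption is an iff characterisation of maximal supersymmetry via the vanishing of $R^\eD$ and which already forces either $F=0$ or $C=0$. This immediately divides the analysis into two disjoint cases. The first case, $F=0$, is dispatched by appealing to the classification of maximally supersymmetric backgrounds of minimal 5-dimensional supergravity reviewed in Section~\ref{sec:max-susy-sugra} and established in \cite{Gauntlett:2002nw,Chamseddine:2003yy}. The bulk of the proof therefore concerns the remaining case $C=0$, $F\neq 0$.

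In that case Theorem~\ref{thm:maximalsusy} gives $F=\varphi\otimes r$ for a parallel one-form $\varphi$ and some $r\in\fsp(1)$, together with the explicit Riemann tensor in~\eqref{eq:max-susy-riemann-F}. Since $\varphi$ is parallel and nonzero, it is nowhere vanishing and has constant length squared $\varphi^2$, so I can split the problem according to the causal type of $\varphi$. A direct computation from~\eqref{eq:max-susy-riemann-F} yields $\nabla\Riem=0$ (hence $(M,g)$ is locally symmetric, and globally symmetric by simple connectedness) and vanishing Weyl tensor (hence $(M,g)$ is conformally flat); these two facts are the main structural inputs for what follows.

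For non-null $\varphi$, the one-dimensional distribution spanned by the metric dual of $\varphi$ and its perpendicular complement are both holonomy-invariant and non-degenerate, so the de~Rham--Wu decomposition theorem identifies $(M,g)$ as an orthogonal product of a line with a four-dimensional symmetric space whose sectional curvature is a nonzero constant of sign opposite to $\varphi^2$. The standard classification of such spaces then identifies this factor as $AdS_4$ when $\varphi^2<0$ and as $S^4$ when $\varphi^2>0$, with radius fixed by matching against~\eqref{eq:max-susy-riemann-F}. For null $\varphi$ the distribution is degenerate and de~Rham--Wu is unavailable; I would instead use the Cahen--Wallach classification \cite{MR267500,MR461388} of indecomposable lorentzian symmetric spaces admitting a parallel null vector to write $(M,g)$ as some $CW_5(A)$ in Brinkmann coordinates $(x^+,x^-,x^i)$. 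The vanishing of the Weyl tensor then forces $A_{ij}$ to be a scalar multiple of $\delta_{ij}$, and matching the resulting Riemann tensor against~\eqref{eq:max-susy-riemann-F} pins that scalar to $-\tfrac12$, producing the explicit metric in the statement.

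The main obstacle is the null case, where the failure of the de~Rham--Wu decomposition forces a more hands-on identification of $(M,g)$ within the Cahen--Wallach family and a careful normalisation step. One must argue that $(M,g)$ is genuinely indecomposable in the relevant sense (so that the Cahen--Wallach theorem applies), and then translate the intrinsic curvature data~\eqref{eq:max-susy-riemann-F} into the extrinsic coefficient appearing in front of $\sum_i (x^i)^2$ in Brinkmann coordinates, which requires keeping track of signs and conventions throughout.
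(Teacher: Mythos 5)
Your proposal matches the paper's proof essentially step for step: appeal to Theorem~\ref{thm:maximalsusy} and the connectedness/simple-connectedness assumption to split into the two cases $F=0$ and $C=0$; note that \eqref{eq:max-susy-riemann-F} implies $\nabla\Riem=0$ and $W=0$; split by the causal type of the parallel $\varphi$; apply de~Rham--Wu in the non-null cases and the Cahen--Wallach classification in the null case, fixing the pp-wave profile by matching the Riemann tensor after imposing $W=0$. The only small divergence is that you flag the need to check indecomposability before invoking Cahen--Wallach in the null case, whereas the paper identifies $(M,g)$ directly as a $CW_5(A)$ (with possibly degenerate $A$ a priori) and lets the curvature constraint $A_{ij}=-\tfrac12\delta_{ij}$ settle the matter afterwards -- a cosmetic rather than substantive difference.
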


We note that this result here is (up to a rescaling of $\varphi$ and
$r$) precisely the 5-dimensional analogue of Theorem~27 part~(ii) in
\cite{deMedeiros:2018ooy}.

\subsection{Killing superalgebras}
\label{sec:kill-super}

In this section we will explicitly describe the Killing superalgebras of
the maximally supersymmetric backgrounds as filtered subdeformations of
the five-dimensional minimal Poincaré superalgebra.  As explained in
\cite{Figueroa-OFarrill:2016khp} in the context of eleven-dimensional
supergravity, the Killing superalgebra is generated by sections
of a supervector bundle $\eE = \eE_{\bar 0} \oplus \eE_{\bar 1}$, where
\begin{equation}
  \eE_{\bar 0} = TM \oplus \fso(TM) \qquad\text{and}\qquad \eE_{\bar 1} =
  \bbS,
\end{equation}
which are parallel relative to a superconnection $\eD$ which agrees on
sections of $\eE_{\bar 0}$ with the Killing transport connection
\cite{MR0084825,MR0250643} and on sections of $\eE_{\bar 1}$ with the
connection given by equation~\eqref{eq:superconnection}.  There might be
in addition additional (tensorial) constraints on the sections of
$\eE$.

For the case at hand, the Killing superalgebra is a filtered Lie
superalgebra whose underlying vector space is $V \oplus S \oplus \fh$,
where $\fh \subset \fso(V)$ is a subalgebra.  The Lie brackets of the
Killing superalgebra are defined by
\begin{equation}\label{eq:ksa}
  \begin{aligned}{}
    [A,B] &= AB - BA\\
    [A,s] &= \tfrac12 \omega_A \cdot s\\
    [A,v] &= Av + \underbrace{[A,\lambda_v] - \lambda_{Av}}_{\in\fh}
  \end{aligned}
  \qquad\qquad
  \begin{aligned}{}
    [s,s] &= \kappa(s) + \underbrace{\gamma^\Phi(s,s) - \lambda_{\kappa(s,s)}}_{\in\fh}\\
    [v,s] &= \beta^\Phi_v s + \tfrac12 \omega_{\lambda_v} \cdot s\\
    [v,w] &= \underbrace{\lambda_v w - \lambda_w v}_{\in V} + 
    \underbrace{R(v,w) + [\lambda_v,\lambda_w] - \lambda_{\lambda_v w -
        \lambda_w v}}_{\in\fh},
  \end{aligned}
\end{equation}
for all $A,B \in \fh$, $s \in S$ and $v,w \in V$. Here
$\beta^\Phi + \gamma^\Phi$ is the normalised cocycle (with $\alpha = 0$)
where $\Phi$ stands for the generic additional fields (here $C$ and $F$)
and $\fh = \fso(V) \cap \fstab(\Phi)$ is the Lie algebra of the
stabiliser of $\Phi$ in $\SO(V)$. In addition, $R$ is the Riemann
curvature tensor, induced from a map $\wedge^2 V \to \fso(V)$. Finally,
$\lambda : V \to \fso(V)$ is defined only up to a linear map
$V \to \fh$, so it is to be thought of more precisely as a linear map
$V \to \fso(V)/\fh$.

\subsubsection{Killing superalgebras for maximally supersymmetric supergravity backgrounds}
\label{sec:ksa-max-susy-sugra}

For these backgrounds, the normalised cocycle is given by
equation~\eqref{eq:spencer-22-cocycle-solution} with $F=0$.  In
components, we have
\begin{equation}
  \label{eq:norm-cocy-sugra}
  \begin{split}
    \beta^\Phi_\mu &= \tfrac18 C^{\sigma\tau} \left( \Gamma_\mu  \Gamma_{\sigma\tau}- 3  \Gamma_{\sigma\tau} \Gamma_\mu\right)\\
    \gamma^\Phi_{\mu\nu} &=2 \mu C_{\mu\nu} + \tfrac12 \epsilon_{\mu\nu\rho\sigma\tau}\kappa^\rho C^{\sigma\tau}.
  \end{split}
\end{equation}
Letting $\gamma(s,s) := \gamma^\Phi(s,s) - \lambda_{\kappa(s,s)}$, we
find that
\begin{equation}
  \gamma_{\mu\nu} = 2 \mu C_{\mu\nu} + \tfrac12
  \epsilon_{\mu\nu\rho\sigma\tau}\kappa^\rho C^{\sigma\tau} -
  \kappa^\rho \lambda_{\rho\mu\nu},
\end{equation}
where $\lambda_{\rho\mu\nu} = - \lambda_{\rho\nu\mu}$.  Demanding that
$\gamma(s,s) \in \fh$ is tantamount to demanding the vanishing of the Clifford
commutator
\begin{equation}
  [\gamma_{\mu\nu} \Gamma^{\mu\nu}, C_{\sigma\tau} \Gamma^{\sigma\tau}]
  = 0,
\end{equation}
since that is, up to inconsequential factors, the action of
$\gamma(s,s)$ on $C$.  It is clear that if we define $\lambda$ by
\begin{equation}
  \lambda_{\rho\mu\nu} = \tfrac12 \epsilon_{\rho\mu\nu\sigma\tau} C^{\sigma\tau},
\end{equation}
then $\gamma(s,s) \in \fh$, where $\gamma_{\mu\nu} = 2 \mu C_{\mu\nu}$.
We remark that $\lambda$ is $\fh$-equivariant, so that the $\fh$
component of the $[A,v]$ bracket in equation~\eqref{eq:ksa} is absent.

Defining $\beta_v := \beta_v^\Phi + \tfrac12 \omega_{\lambda_v}$, we see
that in components
\begin{equation}
  \begin{split}
    \beta_\mu &= \beta_\mu^\Phi + \tfrac14 \lambda_{\mu\sigma\tau} \Gamma^{\sigma\tau}\\
    &= \tfrac18 C^{\alpha\beta} \left( \Gamma_\mu  \Gamma_{\alpha\beta}- 3
      \Gamma_{\alpha\beta} \Gamma_\mu +
      \epsilon_{\mu\sigma\tau\alpha\beta}
      \Gamma^{\sigma\tau}\right)\\
    &= \tfrac14 C^{\alpha\beta} \epsilon_{\mu\alpha\beta\sigma\tau} \Gamma^{\sigma\tau} + C_{\mu\nu} \Gamma^\nu.
  \end{split}
\end{equation}
Now $\alpha(v,w) := \lambda_v w - \lambda_w v$ is given in components by
\begin{equation}
  \alpha_{\mu\nu\rho} = -\epsilon_{\mu\nu\rho\sigma\tau} C^{\sigma\tau}.
\end{equation}
Defining $\rho : \wedge^2 V \to \fso(V)$ by $\rho(v,w):= R(v,w) + [\lambda_v,\lambda_w] -
\lambda_{\alpha(v,w)}$, we can write the Lie brackets of the Killing
superalgebra as follows:
\begin{equation}\label{eq:ksa-sugra}
  \begin{aligned}{}
    [A,B] &= AB - BA\\
    [A,s] &= \tfrac12 \omega_A \cdot s\\
    [A,v] &= Av
  \end{aligned}
  \qquad\qquad
  \begin{aligned}{}
    [s,s] &= \kappa(s) + \gamma(s,s)\\
    [v,s] &= \beta_v s\\
    [v,w] &= \alpha(v,w) + \rho(v,w),
  \end{aligned}
\end{equation}
and determine $\rho$ by the Jacobi identity and check that it maps to
$\fh$.  Most of the components of the Jacobi identity are already
satisfied by construction:
\begin{itemize}
\item $[\fh,\fh,-]$ because $V \oplus S \oplus \fh$ is an $\fh$-module,
\item $[\fh,S,S]$ because $\gamma$ is $\fh$-equivariant,
\item $[\fh,S,V]$ because $\beta$ is $\fh$-equivariant,
\item $[S,S,S]$ because of the Spencer cocycle condition, and
\item the $V$-component of $[S,S,V]$ because of the Spencer cocycle
  condition.
\end{itemize}
We will use the $[S,V,V]$ Jacobi to define
$\rho$ and then check that $\rho$ is $\fh$-equivariant and maps to $\fh$
which means that the $[\fh,V,V]$ Jacobi is satisfied.  Finally, we have
to check that the $\fh$-component of the $[S,S,V]$ Jacobi as well as
$[V,V,V]$ Jacobi are satisfied.

The $[S,V,V]$ Jacobi says that for all $v,w \in V$ and $s \in S$,
\begin{equation}
  [v,[w,s]] - [w,[v,s]] \stackrel{!}{=} [[v,w],s]
\end{equation}
which is equivalent to
\begin{equation}
  \tfrac12 \omega_{\rho(v,w)} \stackrel{!}{=} [\beta_v,\beta_w] - \beta_{\alpha(v,w)}.
\end{equation}
This requires that the RHS should belong to $\wedge^2 V \subset \Cl(V)$,
which can be checked to be the case.  In components, the above condition
is
\begin{equation}
  \tfrac14 \rho_{\mu\nu\alpha\beta} \Gamma^{\alpha\beta} \stackrel{!}{=}
  [\beta_\mu, \beta_\nu] - \alpha_{\mu\nu}{}^\rho \beta_\rho
\end{equation}
and after a short calculation (which we omit) results in
\begin{equation}
  \rho_{\mu\nu\alpha\beta} = 4 C_{\mu\nu} C_{\alpha\beta}.
\end{equation}
It follows that $\rho$ is $\fh$-equivariant and moreover that it lands
in $\fh$.  Therefore the $[\fh,V,V]$ Jacobi is satisfied.  It is
straightforward, if somewhat tedious, to check that the rest of the Jacobi
identity is satisfied.

In summary, the Killing superalgebra of a maximally supersymmetric
background $(M,g,C)$ of minimal 5-dimensional supergravity is the
filtered Lie superalgebra with underlying vector space $\fg = V \oplus S
\oplus \fh$, where $\fh = \fso(V) \cap \fstab(C)$, whose brackets are
given for all $A,B \in \fh$, $s \in S$ and $v,w \in V$ by
equation~\eqref{eq:ksa-sugra}, with 
\begin{equation}
  \begin{split}
    \alpha_{\mu\nu\rho} &= - \epsilon_{\mu\nu\rho\sigma\tau} C^{\sigma\tau}\\
    \beta_\mu &= \tfrac14 \epsilon_{\mu\alpha\beta\sigma\tau} C^{\alpha\beta} \Gamma^{\sigma\tau} + C_{\mu\nu} \Gamma^\nu\\
    \gamma_{\mu\nu} &= 2 \mu C_{\mu\nu}\\
    \rho_{\mu\nu\alpha\beta} &= 4 C_{\mu\nu} C_{\alpha\beta}.
  \end{split}
\end{equation}

\subsubsection{Killing superalgebras for maximally supersymmetric backgrounds with $F \neq 0$}
\label{sec:ksa-max-susy-non-sugra}

For these backgrounds, $C=0$ and $F = \varphi \otimes r$ where $\varphi$
is a parallel vector field (or one-form) and $r \in \fsp(1)$ is a fixed
element of the R-symmetry Lie algebra.  The normalised cocycle
$\beta^\Phi + \gamma^\Phi$ can be read off from
equation~\eqref{eq:spencer-22-cocycle-solution}:
\begin{equation}
  \begin{split}
    \beta^\Phi(v,s)^A &= - \tfrac18 \left( v \cdot \varphi + 3 \varphi \cdot v \right) \cdot r^A{}_B s^B\\
    \gamma^\Phi(s,s)_{\mu\nu} &= \tfrac14 \epsilon_{\mu\nu\rho}{}^{\sigma\tau} \varphi^\rho r_{AB} \omega^{AB}_{\sigma\tau}.
  \end{split}
\end{equation}
It follows that $\gamma(s,s) \in \fh = \fso(V) \cap \fstab(\varphi)$: indeed,
\begin{equation}
  \gamma^\Phi_{\mu\nu} \varphi^\nu = \tfrac14 \varphi^\nu \varphi^\rho
  \epsilon_{\mu\nu\rho}{}^{\sigma\tau} r_{AB}
  \omega^{AB}_{\sigma\tau} = 0,
\end{equation}
by symmetry.  This means that we can choose $\lambda = 0$ and hence
$\alpha = 0$, $\beta = \beta^\Phi$ and $\gamma = \gamma^\Phi$ and the
Lie brackets of the Killing superalgebra are given by
\begin{equation}\label{eq:ksa-non-sugra}
  \begin{aligned}{}
    [A,B] &= AB - BA\\
    [A,s] &= \tfrac12 \omega_A \cdot s\\
    [A,v] &= Av
  \end{aligned}
  \qquad\qquad
  \begin{aligned}{}
    [s,s] &= \kappa(s) + \gamma(s,s)\\
    [v,s] &= \beta_v s\\
    [v,w] &= \rho(v,w),
  \end{aligned}
\end{equation}
subject to the Jacobi identity, which will determine $\rho$.

As in the case of the supergravity backgrounds, most of the components
of the Jacobi identity are already satisfied. The $[\fh,V,V]$ component
will follow from the $\fh$-component of the $[S,S,V]$ Jacobi. Indeed, this
component says that
\begin{equation}\label{eq:ssv-h}
  \rho(v,\kappa(s,s)) \stackrel{!}{=} 2 \gamma(\beta(v,s),s).
\end{equation}
If we can solve this equation for $\rho$, which basically means that the
RHS only depends on $\kappa(s,s)$ and not on either $\mu$ nor
$\omega^{AB}$, then $\rho$ is indeed $\fh$-equivariant, since so are
$\beta$ and $\gamma$, and $\rho$ maps to $\fh$, since so does $\gamma$.
The other two Jacobi components which need to be satisfied are the
$[S,V,V]$ component:
\begin{equation}
  \beta(v,\beta(w,s)) - \beta(w,\beta(v,s)) \stackrel{!}{=} [\rho(v,w),s]
\end{equation}
and the $[V,V,V]$ component:
\begin{equation}
  \rho(u,v) w +   \rho(v,w) u +   \rho(w,u) v  \stackrel{!}{=} 0.
\end{equation}
We actually prefer to derive $\rho$ from the $[S,V,V]$ Jacobi and check
the other two.  A straightforward (if tedious) calculation shows that
\begin{equation}
  \label{eq:rho}
  \rho_{\mu\nu\sigma\tau} =\varphi^2 g_{\mu[\sigma} g_{\tau]\nu} -
  \varphi_\mu \varphi_{[\sigma} g_{\tau]\nu} + \varphi_\nu
  \varphi_{[\sigma} g_{\tau]\mu},
\end{equation}
agreeing, as expected, with the Riemann
tensor~\eqref{eq:riem-non-sugra}.  It follows that $\rho$ is
$\fh$-equivariant and we check that it does map to $\fh$: indeed, a
short calculation shows that $\rho_{\mu\nu\sigma\tau} \varphi^\tau = 0$,
with terms cancelling pairwise. Similarly, one checks that
$\rho_{[\mu\nu\sigma]\tau} = 0$, which shows that the $[V,V,V]$ Jacobi
is satisfied and, finally, that equation~\eqref{eq:ssv-h} is too.

In summary, the Killing superalgebra of the maximally supersymmetry
backgrounds with $C=0$ is given by equation~\eqref{eq:ksa-non-sugra}
with $\fh = \fso(V) \cap \fstab(\varphi)$ and
\begin{equation}
  \begin{split}
    \beta_\mu{}^A{}_B &= -\tfrac18 \varphi^\nu \left( \Gamma_\mu\Gamma_\nu + 3 \Gamma_\nu \Gamma_\mu \right) r^A{}_B\\
    \gamma_{\mu\nu} &= \tfrac14 \epsilon_{\mu\nu\rho\sigma\tau} \varphi^\rho \omega^{\sigma\tau}\\
  \end{split}
\end{equation}
where we have introduced the shorthand $\omega_{\sigma\tau} := r_{AB}
\omega^{AB}_{\sigma\tau}$  and where $\rho$ is given by equation~\eqref{eq:rho}.

\section*{Acknowledgments}

We are very grateful to Paul de Medeiros and Andrea Santi for many
useful conversations about supersymmetry and Spencer cohomology.  We are
grateful to an anonymous referee for suggesting ways to improve the
exposition.  The research of AB is partially supported by an STFC
postgraduate studentship. Part of this work was previously submitted by
AB in a dissertation as part of an MSc degree under the University of
Edinburgh School of Physics and Astronomy and the Higgs Centre for
Theoretical Physics, with tuition fees paid by a Highly Skilled
Workforce Scholarship from the University of Edinburgh.

\appendix

\section{Tensorial identities for 2-forms}
\label{sec:tensor-id}

We collect here some identities which are required for various calculations in the main body of this paper. Let $(M,g)$ be a 5-dimensional lorentzian manifold and let $A,B\in\Omega^2(M)$. We work in a local orthonormal frame, starting with the identity
\begin{equation}
	\label{eq:tensor-id-AB-1}
	\epsilon_{\alpha\beta\gamma[\mu\nu}A\indices{_{\rho]}^\alpha}B^{\beta\gamma}
	+ \tfrac{2}{3}\epsilon_{\mu\nu\rho\alpha\beta}A^{\alpha\delta}B\indices{^\beta_\delta}
	=0,
\end{equation}
which can be verified by contracting the left hand side with $\epsilon^{\mu\nu\rho\sigma\tau}$. It then follows that
\begin{equation}\label{eq:tensor-id-AB-2}
	\epsilon_{\alpha\beta\gamma[\mu\nu}A\indices{_{\rho]}^\alpha}B^{\beta\gamma}
	 + \epsilon_{\alpha\beta\gamma[\mu\nu}A^{\alpha\beta}B\indices{_{\rho]}^\gamma} = 0.
\end{equation}
We can then show that
\begin{equation}
	\label{eq:tensor-id-AB-3}
	\epsilon_{\nu\rho\alpha\beta\gamma}\qty(
		A\indices{_{\mu}^\alpha}B^{\beta\gamma}+ A^{\alpha\beta}B\indices{_{\mu}^\gamma},
		)
	+ \eta_{\mu[\nu}\epsilon_{\rho]\alpha\beta\gamma\delta}A^{\alpha\beta}B^{\gamma\delta}
	=0
\end{equation}
by contracting the left hand side with $\epsilon^{\nu\rho\sigma\tau\chi}$ and using equation~\eqref{eq:tensor-id-AB-2}. In particular, we have
\begin{gather}\label{eq:tensor-id-AA-1}
	\epsilon\indices{_{\alpha\beta\gamma[\mu\nu}}
		A\indices{_{\rho]}^\alpha}A\indices{^{\beta\gamma}} = 0,
\\
	\label{eq:tensor-id-AA-2}
	\epsilon\indices{_{\alpha\beta\gamma\nu\rho}}
		A\indices{_{\mu}^\alpha}A\indices{^{\beta\gamma}}
	+ \tfrac{1}{2}\eta\indices{_{\mu[\nu}}\epsilon\indices{_{\rho]\alpha\beta\gamma\delta}}
		A\indices{^{\alpha\beta}}A\indices{^{\gamma\delta}} = 0.
\end{gather}


\providecommand{\href}[2]{#2}\begingroup\raggedright\endgroup

\end{document}